\newcommand{\Tr}{\text{Tr}}
\newcommand{\id}{\text{id}}
\newcommand{\bI}{\mathbb{I}}
\newcommand{\bR}{\mathbb{R}}
\newcommand{\cB}{\mathcal{B}}
\newcommand{\cR}{\mathcal{R}}
\newcommand{\Q}{Q} 
\newcommand{\A}{\mathbb{A}} 
\newcommand{\C}{C} 
\newcommand{\E}{I_{\text{CS}}} 
\newcommand{\Y}{I_{\text{RT}}} 
\newcommand{\gsl}{\mathfrak{sl}}
\newcommand{\su}{\mathfrak{su}}
\newcommand{\saw}{\mathfrak{\textbf{saw}}}
\newcommand{\qi}{{q^{-1}}}
\newcommand{\Usl}{U_q(\su_2)}
\newcommand{\Ut}{U_q(\su_2)^{\otimes 3}}
\newtheorem{prop}{Proposition}[section]
\newtheorem{thm}{Theorem}[section]
\newtheorem{rem}{Remark}[section]
\newtheorem{coro}{Corollary}[section]
\newtheorem{defi}{Definition}[section]
\numberwithin{equation}{section}
\def\colfund{blue} 
\def\lw{0.7pt} 
\def\eseq{0.1cm} 
\def\ra{0.25cm} 
\def\es{0.07cm} 
\def\lx{0.15cm} 
\def\xd{0.7cm}  
\def\ly{0.4cm}  
\def\yd{2*\ly+2*\ra+0.1cm} 
\newcommand{\circs}[2]{
	\begin{scope}[xshift=#1,yshift=#2,decoration={markings, mark=at position 0.59 with {\arrow{>}}}]
		\draw[\colfund,postaction={decorate}] (-\es,2*\ra)-- +(\es-\lx,0) arc (90:270:\ra) -- (\lx,0) arc (-90:90:\ra) --  (\es,2*\ra);
	\end{scope}
} 
\newcommand{\circss}[3]{
	\begin{scope}[xshift=#1,yshift=#2,decoration={markings, mark=at position 0.7 with {\arrow{>}}}]	
		\draw[\colfund,postaction={#3}] (-\es,2*\ra) -- +(\es-\lx,0) arc (90:270:\ra) --   (\xd+\lx,0) arc (-90:90:\ra) -- (\xd+\es,2*\ra);
		\draw[\colfund]   (\es,2*\ra) -- (\xd-\es,2*\ra);
	\end{scope}
} 
\newcommand{\circsss}[2]{
	\begin{scope}[xshift=#1,yshift=#2,decoration={markings, mark=at position 0.76 with {\arrow{>}}}]		
		\draw[\colfund,postaction={decorate}] (-\es,2*\ra) -- +(\es-\lx,0) arc (90:270:\ra) -- (2*\xd+\lx,0) arc (-90:90:\ra) -- (2*\xd+\es,2*\ra);
		\draw[\colfund]   (\es,2*\ra) -- (\xd-\es,2*\ra) (\xd+\es,2*\ra) -- (2*\xd-\es,2*\ra);
	\end{scope}
} 
\newcommand{\circssd}[2]{	
	\begin{scope}[xshift=#1,yshift=#2,decoration={markings, mark=at position 0.76 with {\arrow{>}}}]
		\draw[\colfund,postaction={decorate}] (-\es,2*\ra) -- +(\es-\lx,0) arc (90:270:\ra) --   (2*\xd+\lx,0) arc (-90:90:\ra) -- (2*\xd+\es,2*\ra);
		\draw[\colfund]   (\es,2*\ra) -- (2*\xd-\es,2*\ra);
	\end{scope}
} 
\newcommand{\circssu}[2]{
	\begin{scope}[xshift=#1,yshift=#2,decoration={markings, mark=at position 0.5 with {\arrow{>}}}]
		\draw[\colfund] (-\es,2*\ra) -- +(\es-\lx,0) arc (90:270:\ra) --   (\xd-\es,0) ;	
		\draw[\colfund,postaction={decorate}] (\xd+\es,0) -- (2*\xd+\lx,0) arc (-90:90:\ra) -- (2*\xd+\es,2*\ra);
		\draw[\colfund]   (\es,2*\ra) -- (\xd-\es,2*\ra) (\xd+\es,2*\ra) -- (2*\xd-\es,2*\ra);
	\end{scope}
} 
\newcommand{\strand}[2]{
	(#1,#2) -- ++(0,\yd)
} 
\newcommand{\cstrand}[2]{
	(#1,#2) -- ++(0,\ly) ++(0,0.1cm) -- ++(0,2*\ra + \ly)
} 
\newcommand{\cstrandb}[2]{
	(#1,#2) -- ++(0,\ly) ++(0,0.1cm) -- ++(0,2*\ra-0.1cm) ++(0,0.1cm) -- ++(0,\ly)
} 
\newcommand{\ucrossl}[2]{
	(#1,#2) -- ++(0,\ly+\ra - \xd/2+0.05cm) -- ++(\xd/2-0.05cm,\xd/2-0.05cm) ++(0.1cm,0.1cm) -- ++(\xd/2-0.05cm,\xd/2-0.05cm) -- ++(0,\ly+\ra - \xd/2+0.05cm) 	
} 
\newcommand{\ucrossr}[2]{
	(#1,#2) -- ++(0,\ly+\ra - \xd/2+0.05cm) -- ++(-\xd/2+0.05cm,\xd/2-0.05cm) -- ++(-0.1cm,0.1cm) -- ++(-\xd/2+0.05cm,\xd/2-0.05cm) -- ++(0,\ly+\ra - \xd/2+0.05cm)	
} 
\newcommand{\ocrossl}[2]{
	(#1,#2) -- ++(0,\ly+\ra - \xd/2+0.05cm) -- ++(\xd/2-0.05cm,\xd/2-0.05cm) -- ++(0.1cm,0.1cm) -- ++(\xd/2-0.05cm,\xd/2-0.05cm) -- ++(0,\ly+\ra - \xd/2+0.05cm) 	
} 
\newcommand{\ocrossr}[2]{
	(#1,#2) -- ++(0,\ly+\ra - \xd/2+0.05cm) -- ++(-\xd/2+0.05cm,\xd/2-0.05cm)  ++(-0.1cm,0.1cm) -- ++(-\xd/2+0.05cm,\xd/2-0.05cm) -- ++(0,\ly+\ra - \xd/2+0.05cm)	
} 
\def\rp{0.4cm} 
\def\rps{0.3cm}  
\def\rpb{0.32cm} 
\def\xdp{1cm}  
\def\sp{0.07cm} 
\newcommand{\punct}[2]{
	\node at (#1,#2) {\footnotesize $\bullet$};
} 
\newcommand{\punctn}[2]{
	\node at (#1,#2) {\footnotesize $\{{\color{\colfund} \bullet}\}$};
} 
\newcommand{\punctni}[3]{
	\node at (#1,#2) {\footnotesize $\{{\color{\colfund} \bullet}\}$};
	\node at (#1+0.05cm,#2-0.27cm) {\scriptsize $n_{#3}$};
} 
\newcommand{\pppunct}[2]{
	\punct{#1}{#2}
	\punct{#1+\xdp}{#2}
	\punct{#1+2*\xdp}{#2}
} 
\newcommand{\pppunctn}[2]{
	\punctn{#1}{#2}
	\punctn{#1+\xdp}{#2}
	\punctn{#1+2*\xdp}{#2}
} 
\newcommand{\circp}[3]{
		\draw[\colfund,decoration={markings, mark=at position 0 with {\arrow{>}}},postaction={#3}] (#1,#2) circle(\rp);
} 
\newcommand{\circps}[3]{
		\draw[\colfund,decoration={markings, mark=at position 0 with {\arrow{>}}},postaction={#3}] (#1,#2) circle(\rps);
} 
\newcommand{\circpp}[3]{
	\begin{scope}[xshift=#1,yshift=#2,decoration={markings, mark=at position 0.9 with {\arrow{>}}}]
		\draw[\colfund,postaction={#3}] (\xdp,-\rp) arc (-90:90:\rp) -- (0,\rp) arc (90:270:\rp) -- (\xdp,-\rp);
	\end{scope}
} 
\newcommand{\circpplb}[3]{
	\begin{scope}[xshift=#1,yshift=#2,decoration={markings, mark=at position 0.1 with {\arrow{>}}}]
		\draw[\colfund,postaction={#3}] (\rp-2*\sp,-\rp) -- (\xdp,-\rp) arc (-90:90:\rp) -- (\rp-2*\sp,\rp) (0,\rpb) arc (90:270:{\rp} and {\rpb});
	\end{scope}
} 
\newcommand{\circpprb}[3]{
	\begin{scope}[xshift=#1,yshift=#2,decoration={markings, mark=at position 0.6 with {\arrow{>}}}]
		\draw[\colfund,postaction={#3}] (\xdp-\rp+2*\sp,\rp) -- (0,\rp) arc (90:270:\rp) -- (\xdp-\rp+2*\sp,-\rp)  (\xdp,-\rpb) arc (-90:90:{\rp} and {\rpb});  
	\end{scope}
} 
\newcommand{\circppp}[3]{
	\begin{scope}[xshift=#1,yshift=#2,decoration={markings, mark=at position 0.95 with {\arrow{>}}}]
		\draw[\colfund,postaction={#3}] (2*\xdp,-\rp) arc (-90:90:\rp) -- (0,\rp) arc (90:270:\rp)  -- (2*\xdp,-\rp);
	\end{scope}
} 
\newcommand{\circppd}[3]{
	\begin{scope}[xshift=#1,yshift=#2,decoration={markings, mark=at position 0.405 with {\arrow{>}}}]
		\draw[\colfund,postaction={#3}]
		(\rp,0) arc (0:180:\rp) to [bend right=90] (2*\xdp+\rp,0) arc (0:180:\rp) to [bend left=90] (\rp,0);
	\end{scope}
} 
\newcommand{\circppu}[3]{
	\begin{scope}[xshift=#1,yshift=#2,decoration={markings, mark=at position 0.905 with {\arrow{>}}}]
		\draw[\colfund,postaction={#3}]
		(2*\xdp-\rp,0) arc (180:360:\rp) to [bend right=90] (-\rp,0) arc (180:360:\rp) to [bend left=90] (2*\xdp-\rp,0);
	\end{scope}
} 
\def\lxk{0.3cm} 
\def\sk{0.05cm} 
\def\bd{60} 
\title{\bf Chern--Simons theory, link invariants \\ and the Askey--Wilson algebra}
\renewcommand*{\Affilfont}{\normalsize\small}
\author[1]{Nicolas Cramp\'e}
\author[2]{Luc Vinet}
\author[3]{Meri Zaimi\vspace{.5em}}
\affil[1]{Institut Denis-Poisson CNRS/UMR 7013 - Université de Tours - Université
d'Orléans, \newline\vspace{.9em}
Parc de Grandmont, 37200 Tours, France.}
\affil[2,3]{Centre de Recherches Math\'ematiques, Universit\'e de Montr\'eal,
\newline\vspace{.9em}
P.O. Box 6128, Centre-ville Station, Montr\'eal (Qu\'ebec), H3C 3J7, Canada.}
\affil[2]{Insitut de valorisation des donn\'ees (IVADO), Montr\'eal (Qu\'ebec), H2S 3H1, Canada. \newline\vspace{.9em}}
 \renewcommand\AB@affilsepx{: \protect\Affilfont}
 \affil[ ]{E-mail addresses}
 \renewcommand\AB@affilsepx{, \protect\Affilfont}
 \affil[1]{crampe1977@gmail.com}
 \affil[2]{vinet@crm.umontreal.ca}
 \affil[3]{meri.zaimi@umontreal.ca}
\begin{document}
	
\date{\today} 
\maketitle

\noindent{\bf Abstract:}
The occurrence of the Askey--Wilson (AW) algebra in the $SU(2)$ Chern--Simons (CS) theory and in the Reshetikhin--Turaev (RT) link invariant construction with quantum algebra $U_q(\mathfrak{su}_2)$ is explored. Tangle diagrams with three strands with some of them enclosed in a spin-$1/2$ closed loop are associated to the generators of the AW algebra. It is shown in both the CS theory and RT construction that the link invariant of these tangles obey the relations of the AW generators. It follows that the expectation values of certain Wilson loops in the CS theory satisfy relations dictated by the AW algebra and that the link invariants do not distinguish the corresponding linear combinations of links. 
\\[.5em]

\section{Introduction}

The purpose of this paper is to identify the presence of the Askey--Wilson algebra in the Chern--Simons theory and the Reshetikhin--Turaev link invariant construction, and to discuss the bearing it has in these contexts. 

One of the fundamental problems in knot theory is to determine whether two links in three-dimensional space are equivalent or not. In this regard, the study of link invariants plays an important role towards a classification of knots and links up to isotopy. A link invariant which is of interest in mathematical physics is the Jones polynomial, discovered in \cite{Jo85} via the study of a trace on the Temperley--Lieb (TL) algebra. This algebra, which was introduced in \cite{TL}, is connected to integrable lattice models in physics. Other examples of link invariants are the HOMFLY-PT \cite{HOMFLY,PT} and the Kauffman \cite{Kau90} polynomials, related respectively to the Hecke \cite{Jo87} and the Birman--Murakami--Wenzl (BMW) \cite{BW,Mur} algebras. These invariants are two-variable polynomials which contain the Jones polynomial as a special case. An important feature of the TL, Hecke and BMW algebras associated to these polynomial link invariants is that they are all quotients of the braid group algebra \cite{A}, which is central in the study of links.

The Chern--Simons (CS) theory is a quantum gauge field theory with an action that is defined on three-dimensional manifolds without use of a metric. For this reason, the theory is said to be topological. In \cite{Wit}, it is shown via the path-integral formalism that the expectation values of the observables of the theory, called the Wilson loops, lead to link invariants. In particular, the Jones polynomial is recovered when the manifold is the three-sphere, the gauge group is $SU(2)$ and all the Wilson loops are in the fundamental representation. The cases where the gauge group is $SU(N)$ or $SO(N)$ lead respectively to the HOMFLY-PT and the Kauffman polynomials (see also \cite{Ast,CGMM,GMM4,GMM5,Hor,KCP,WY} for instance). Hence, the CS theory provides an intrinsically three-dimensional interpretation of these link invariants, which usually require a two-dimensional projection of the link to be defined.

In \cite{Resh,Tur}, it is shown how link invariants can be constructed from Yang--Baxter representations of the braid group; this construction is related to how the Jones polynomial was originally obtained in \cite{Jo85}. Integrable systems satisfy an integrability condition called the Yang--Baxter equation. It is known that quasitriangular Hopf algebras yield interesting solutions of this equation through the $R$-matrix. The HOMFLY-PT and Kauffman polynomials are recovered in this construction \cite{Tur} when considering the $R$-matrices in the fundamental representation of the quantized universal enveloping algebras of $\su_N$ and $\mathfrak{so}_N$; the Jones polynomial is in particular associated to the quantum group $\Usl$. Such a mathematical framework for obtaining link invariants, to which we will refer as the Reshetikhin--Turaev (RT) construction, is further developed in terms of ribbon Hopf algebras and functors in \cite{RT}, with the aim of providing a mathematical realization of the CS quantum field approach of \cite{Wit}. Although the connections between the CS theory and the formalism of quantum groups and $R$-matrices have been investigated (see for instance \cite{GMM1,GMM2,GMM3,MS}), the equivalence of the link invariants obtained with both methods deserves to be spelled out.     

The Askey--Wilson (AW) algebra was first introduced in \cite{Zh}. It describes the bispectral properties of the Askey--Wilson polynomials, which form the family of basic hypergeometric orthogonal polynomials sitting on top of the $q$-Askey scheme \cite{Koek}. The AW algebra is in particular realized by the centralizer of the diagonal action of $\Usl$ in its threefold tensor product \cite{GZ,H}. The role of the $R$-matrix in this realization was showcased in \cite{CGVZ}. 
Due to this connection with the centralizer of $\Usl$, the overlap coefficients associated to the Racah problem for $\Usl$ are given in terms of the $q$-Racah polynomials, which are a finite truncation of the AW ones \cite{GZ,KR}. 
It is conjectured in \cite{CVZ2} that in general the centralizer of the diagonal action of $\Usl$ in the tensor product of three spin representations of $\Usl$ is isomorphic to a quotient of the AW algebra, in the spirit of a generalized Schur--Weyl duality. In particular, when considering three spins $1/2$ or three spins $1$, respectively, the TL and BMW algebras are recovered as quotients of the AW algebra. In the context of knot theory, the AW algebra is connected to the Kauffman skein algebra of (framed and unoriented) links in some punctured surfaces, see \cite{BP,Cooke,CFGPRV,Hik}.

The connections that exist between the link invariants mentioned above and various algebraic structures such as the braid group, the TL and BMW algebras, and the quantum group $\Usl$ all hint that the AW algebra must also belong to this picture. Therefore, it is natural to examine how the AW algebra features in the CS theory and related link invariant constructions. This is the goal of the present paper. Indeed, we show that the defining relations of the AW algebra appear in the study of the link invariants that arise from the CS theory with gauge group $SU(2)$ and the RT construction with quantum group $\Usl$ when considering any spin representations (see Theorems \ref{thm:CSAW} and \ref{thm:RTAW}). A consequence of these observations for the CS theory is that the expectation values of some products of Wilson loops are linearly related by the AW algebra. For knot theory, the implication is that the CS and RT link invariants do not distinguish between some linear combinations of links that correspond to the AW algebra relations. Moreover, the fact that the same results are obtained independently in the CS theory and in the RT mathematical framework of Yang--Baxter operators provides another reason to think that the link invariants in both cases are equivalent.   

The strategy we follow in this paper is to associate to the generators of the AW algebra some tangle diagrams composed of three straight strands with a subset of them being enclosed by a loop associated to the spin $1/2$ representation. These are inspired by the diagrams of the Kauffman skein algebra in \cite{CFGPRV}, where punctures on a plane are enclosed by loops. Then, we show that the values of the link invariants of these diagrams satisfy the defining relations of the AW algebra both (and independently) in the CS theory and the RT construction. In the CS case, the proof is done by using the connections with the Kauffman bracket polynomial while in the RT case, it is done by computing algebraically some partial traces of $R$-matrices and identifying them as the intermediate Casimir elements of $\Ut$.   

The paper is organized as follows. Section \ref{sec:KnotTheory} contains a review of relevant concepts of the theory of knots and links (Subsection \ref{ssec:KnotsLinks}) as well as a definition of the braid group and its associated braid diagrams (Subsection \ref{ssec:BraidGroup}). In Section \ref{sec:AWdiag}, the AW tangle diagrams are defined and put in correspondence with the generators of the AW algebra. Section \ref{sec:AWCS} provides a proof that these AW diagrams lead to the AW relations in the CS theory on $\bR^3$ with gauge group $SU(2)$. Some preliminaries on the CS theory are first recalled in Subsection \ref{ssec:CSWL}, and the relevant properties of the Wilson loop expectation values are presented in Subsection \ref{ssec: ProprWL}. The known connection between the CS link invariant and Kauffman's bracket polynomial is given in Subsection \ref{ssec:KauffmanBracket}. Then, the properties of the CS link invariant are used in Subsection \ref{ssec:AWWL} to prove that the AW diagrams satisfy the AW relations. Subsection \ref{ssec:TL} comments on the connection with the Temperley--Lieb algebra. In Section \ref{sec:AWRT}, it is shown that the same AW tangle diagrams also lead to the AW relations in the RT construction of link invariants associated to $\Usl$. Subsection \ref{ssec:Uqsu2} first recalls the definition and properties of the quantum group $\Usl$, and Subsection \ref{ssec:RepU} briefly discusses its finite irreducible representations. The RT construction of link invariants via traces of Yang--Baxter operators is explained in Subsection \ref{ssec:TrLinkInv}. Then, it is shown in Subsection \ref{ssec:CasimirAW} that the AW tangle diagrams, viewed as partially closed braids, correspond to partial traces of $R$-matrices that are equal to the intermediate Casimir elements of $\Ut$. Section \ref{sec:concl} contains concluding remarks. The paper is complemented by two appendices. The first (Appendix \ref{app:proofPropTrEnh}) contains the technical proof of a proposition regarding a known property of the trace in the RT construction. The second (Appendix \ref{app:AWRmatrix}) provides a new proof that the intermediate Casimir elements of $\Ut$ satisfy the relations of the AW algebra, using the formalism of $R$-matrices and their partial traces.           

\section{Knot theory and braid group}\label{sec:KnotTheory}

This section recalls definitions and properties regarding knots, links and braids.  

\subsection{Knots and links}\label{ssec:KnotsLinks}

A knot is a smooth embedding of the circle $S^1$ in $\bR^3$. A link with $n$ components is the union of $n$ knots that do not intersect. In this paper, we will consider oriented knots and links, unless stated otherwise.

Two links in $\bR^3$ are said to be ambiant isotopic if one can be smoothly deformed into the other in $\bR^3$. This defines an equivalence relation for links in $\bR^3$. A link invariant is a mapping $L \mapsto I(L)$ such that $I(L_1)=I(L_2)$ if $L_1$ and $L_2$ are equivalent links.

A link $L$ in $\bR^3$ is conveniently represented by a link diagram $D_L$, which is a projection of $L$ on a two-dimensional plane $\bR^2$ with a finite number of crossings. We distinguish between the overcrossing ($L_+$) and undercrossing ($L_-$) configurations illustrated in Figure \ref{fig:crossing}.  
\begin{figure}[h]
	\begin{tikzpicture}[scale=1,line width=\lw]
		\draw[rounded corners=5pt,->] \ocrossl{0}{0};
		\draw[rounded corners=5pt,->] \ocrossr{\xd}{0};
		\node at (\xd/2,-0.3cm) {$L_+$};
	\end{tikzpicture} \hspace{2cm}
	\begin{tikzpicture}[scale=1,line width=\lw]
		\draw[rounded corners=5pt,->] \ucrossl{0}{0};
		\draw[rounded corners=5pt,->] \ucrossr{\xd}{0};
		\node at (\xd/2,-0.3cm) {$L_-$};
	\end{tikzpicture} \hspace{2cm}
	\begin{tikzpicture}[scale=1,line width=\lw]
		\draw[->] \strand{0}{0};
		\draw[->] \strand{\xd}{0};
		\node at (\xd/2,-0.3cm) {$L_0$};
	\end{tikzpicture}
	\centering
	\caption{Crossing configurations.}
	\label{fig:crossing}
\end{figure}
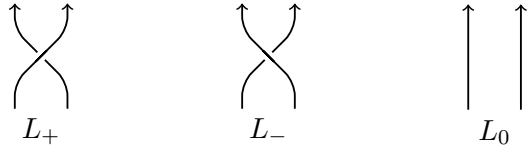
A crossing of type $L_+$ (resp. $L_-$) is said to have positive (resp. negative) sign. The writhe number $w(D_L)$ of a link diagram $D_L$ is defined as the sum of the signs of all the crossings. We also define the configuration $L_0$ illustrated in Figure \ref{fig:crossing}, which has no crossing.
 
Two link diagrams represent ambiant isotopic links if and only if they are related by a finite sequence of planar isotopies and Reidemeister moves (RM), illustrated in Figure \ref{fig:RM}. The equivalence relation for link diagrams induced by planar isotopies and RM of type II and III only is referred to as regular isotopy \cite{Kau90}.
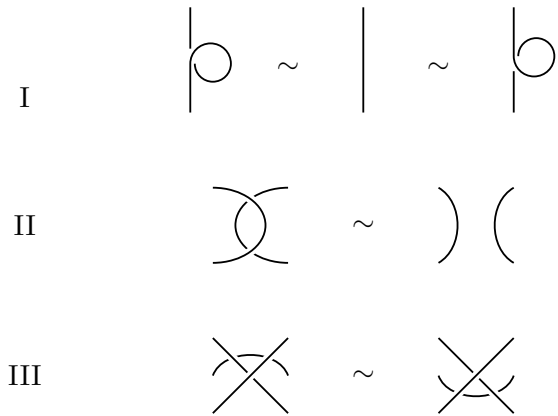
\begin{figure}[h]
	\begin{tikzpicture}[scale=1,line width=\lw]
		\node at (0,0.2cm) {I};
		\begin{scope}[xshift=2.2cm]
			\draw[] (0,0) -- (0,\ly+\ra) arc(180:0:0.27cm) arc(0:-180: 0.24cm)  (0,\ly+\ra+0.2cm) -- (0,\yd);
			\node at (1.3cm,0.6cm) {$\sim$};
			\draw[] (2.3cm,0) -- +(0,\yd);
			\node at (3.3cm,0.6cm) {$\sim$};
			\draw[xshift=4.3cm] (0,\yd) -- (0,\yd-\ly-\ra) arc(-180:0:0.27cm) arc(0:180: 0.24cm) (0,\yd-\ly-\ra-0.2cm) -- (0,0);	
		\end{scope}
		
		\node at (0,-1.5cm) {II};
		\begin{scope}[xshift=2.5cm,yshift=-2cm]
			\draw[white,line width=1.3pt,double=black,double distance=\lw] (1cm,0) arc(-90:-270:0.7cm and 0.5cm) +(0,1cm);
			\draw[white,line width=1.3pt,double=black,double distance=\lw] (0,0) arc(-90:90:0.7cm and 0.5cm) +(0,1cm);
			\node at (2cm,0.5cm) {$\sim$};
			\draw[] (3cm,0) to [bend right=60] +(0,1cm);
			\draw[] (4cm,0) to [bend left=60] +(0,1cm);
		\end{scope}
		
		\node at (0,-3.5cm) {III};
		\begin{scope}[xshift=2.5cm,yshift=-4cm]
			\draw[white,line width=1.3pt,double=black,double distance=\lw] (0,0.5cm) to [bend left=70] (1cm,0.5cm);
			\draw[white,line width=1.3pt,double=black,double distance=\lw] (1cm,0) -- (0,1cm);
			\draw[white,line width=1.3pt,double=black,double distance=\lw] (0,0) -- (1cm,1cm);
			\node at (2cm,0.5cm) {$\sim$};
			\draw[white,line width=1.3pt,double=black,double distance=\lw] (3cm,0.5cm) to [bend right=70] (4cm,0.5cm);
			\draw[white,line width=1.3pt,double=black,double distance=\lw] (4cm,0) -- (3cm,1cm);
			\draw[white,line width=1.3pt,double=black,double distance=\lw] (3cm,0) -- (4cm,1cm);
		\end{scope}
	\end{tikzpicture}
	\centering
	\caption{The Reidemeister moves (where the components can have any orientation).}
	\label{fig:RM}
\end{figure}

A framing of a link $L$ is a continuous and nowhere vanishing vector field which is normal to $L$. Therefore, framed links can be viewed as bands. Instead of representing framed links by diagrams of bands, we choose to represent them by usual link diagrams with the convention that, for each component of a link, the framing is given by a normal vector field which is always perpendicular to the projection plane. This is known as the vertical framing. In this representation, the configurations illustrated in Figure \ref{fig:onecompconf} are not equivalent. Therefore, in the vertical framing convention, the RM of type I is not valid anymore. Note however that the RM of types II and III can still be used. For this reason, regular isotopy is relevant when studying framed links.        

\begin{figure}[h]
	\begin{tikzpicture}[scale=1,line width=\lw]
		\draw[] (0,0) -- (0,\ly+\ra) arc(180:0:0.27cm) arc(0:-180: 0.24cm)  (0,\ly+\ra+0.2cm) -- (0,\yd);
		\node at (0.2cm,-0.3cm) {$L^{(+)}$};
	\end{tikzpicture} \hspace{2cm}
	\begin{tikzpicture}[scale=1,line width=\lw]
		\draw[] (0,\yd) -- (0,\yd-\ly-\ra) arc(-180:0:0.27cm) arc(0:180: 0.24cm);  
		\draw (0,\yd-\ly-\ra-0.2cm) -- (0,0);
		\node at (0.2cm,-0.3cm) {$L^{(-)}$};
	\end{tikzpicture} \hspace{2cm}
	\begin{tikzpicture}[scale=1,line width=\lw]
		\draw[] \strand{0}{0}; 
		\node at (0.2cm,-0.3cm) {$L^{(0)}$};
	\end{tikzpicture}
	\centering
	\caption{Three configurations which are not equivalent for framed links.}
	\label{fig:onecompconf}
\end{figure}
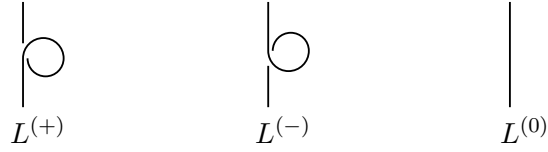

Finally, a link $L$ with $n$ components is said to be colored if each component is associated to some parameter $\alpha_i$ (referred to as the ``color''), for $i=1,...,n$. 

In this paper, we will be interested by framed and colored links, where the colors are non-negative integers or half-integers to be called ``spins''.

\subsection{Braid group}\label{ssec:BraidGroup}

A convenient way of studying links is via braids. The braid group on $n$ strands $B_n$ is generated by invertible elements $\sigma_i$, for $i=1,...,n-1$, which satisfy the following defining relations
\begin{align}
	&\sigma_i\sigma_j=\sigma_j\sigma_i,& &|i-j|>1, \label{eq:relBraid1}\\
	&\sigma_{i} \sigma_{i+1} \sigma_{i} = \sigma_{i+1} \sigma_{i} \sigma_{i+1},& &i=1,...,n-2. \label{eq:relBraid2}
\end{align}

The generators and their inverses can equivalently be seen as the following braid diagrams
\begin{equation}
	\sigma_i =
	\begin{tikzpicture}[scale=1,line width=\lw,baseline={([yshift=-\eseq+0.3cm]current bounding box.center)}]
		\draw[->] \strand{0}{0};
		\node at (0,-0.3cm) {\footnotesize $1$};
		\node at (0.35cm,0) {...};
		\draw[rounded corners=5pt,->] \ocrossl{\xd}{0};
		\node at (\xd,-0.3cm) {\footnotesize $i$};
		\draw[rounded corners=5pt,->] \ocrossr{2*\xd}{0};
		\node at (2*\xd,-0.3cm) {\footnotesize $i+1$};
		\node at (2*\xd+0.35cm,0) {...};
		\draw[->] \strand{3*\xd}{0};
		\node at (3*\xd,-0.3cm) {\footnotesize $n$};
	\end{tikzpicture} \ , \qquad
	\sigma_i^{-1} =
	\begin{tikzpicture}[scale=1,line width=\lw,baseline={([yshift=-\eseq+0.3cm]current bounding box.center)}]
		\draw[->] \strand{0}{0};
		\node at (0,-0.3cm) {\footnotesize $1$};
		\node at (0.35cm,0) {...};
		\draw[rounded corners=5pt,->] \ucrossl{\xd}{0};
		\node at (\xd,-0.3cm) {\footnotesize $i$};
		\draw[rounded corners=5pt,->] \ucrossr{2*\xd}{0};
		\node at (2*\xd,-0.3cm) {\footnotesize $i+1$};
		\node at (2*\xd+0.35cm,0) {...};
		\draw[->] \strand{3*\xd}{0};
		\node at (3*\xd,-0.3cm) {\footnotesize $n$};
	\end{tikzpicture} \ ,
\end{equation}
with the group product given by vertical concatenation. A general braid diagram is such that $n$ points on a
line are connected by always upgoing strings to $n$ points on a line above, with an overcrossing and undercrossing specification. The defining relations of $B_n$ are seen to express an isotopy equivalence of braids. In particular, the relation $\sigma_i\sigma_i^{-1}=1$ corresponds to the RM of type II while the relation \eqref{eq:relBraid2} corresponds to the RM of type III.

The importance of braids for links is given by Alexander's theorem stating that any link can be represented as the closure of a braid, obtained by connecting the top ends of the strings on a braid diagram with their corresponding bottom ends.

\section{Askey--Wilson diagrams}\label{sec:AWdiag}

In this section, we define the diagrams associated to the generators of the Askey--Wilson algebra, as will be shown later. 

The idea is to consider three colored vertical strands and to encircle a (non-empty) subset of them by a loop with spin $1/2$. Since the spin $1/2$ will play a special role in what is to come, we will represent any component of a diagram associated to this spin in blue. Hence, we define
\begin{gather}
	\A_1 := \
	\begin{tikzpicture}[scale=1,line width=\lw,baseline={([yshift=-\eseq]current bounding box.center)}]
		\draw[->] \cstrand{0}{0};
		\circs{0}{\ly+0.05cm}
		\draw[->] \strand{\xd}{0};
		\draw[->] \strand{2*\xd}{0};
	\end{tikzpicture} \ , \qquad
	\A_2 := \
	\begin{tikzpicture}[scale=1,line width=\lw,baseline={([yshift=-\eseq]current bounding box.center)}]
		\draw[->] \strand{0}{0};
		\draw[->] \cstrand{\xd}{0};
		\circs{\xd}{\ly+0.05cm}
		\draw[->] \strand{2*\xd}{0};
	\end{tikzpicture}  \qquad
	\A_3 := \
	\begin{tikzpicture}[scale=1,line width=\lw,baseline={([yshift=-\eseq]current bounding box.center)}]
		\draw[->] \strand{0}{0};
		\draw[->] \strand{\xd}{0};
		\draw[->] \cstrand{2*\xd}{0};
		\circs{2*\xd}{\ly+0.05cm}
	\end{tikzpicture} \ , \label{eq:Ci} \\
	\A_{12} := \
	\begin{tikzpicture}[scale=1,line width=\lw,baseline={([yshift=-\eseq]current bounding box.center)}]
		\draw[->] \cstrand{0}{0};
		\draw[->] \cstrand{\xd}{0};
		\circss{0}{\ly+0.05cm}{decorate}
		\draw[->] \strand{2*\xd}{0};
	\end{tikzpicture} \ , \qquad 
	\A_{23} := \
	\begin{tikzpicture}[scale=1,line width=\lw,baseline={([yshift=-\eseq]current bounding box.center)}]
		\draw[->] \strand{0}{0};
		\draw[->] \cstrand{\xd}{0};
		\draw[->] \cstrand{2*\xd}{0};
		\circss{\xd}{\ly+0.05cm}{decorate}
	\end{tikzpicture} \ , \qquad
	\A_{13} := \
	\begin{tikzpicture}[scale=1,line width=\lw,baseline={([yshift=-\eseq]current bounding box.center)}]
		\draw[->] \cstrand{0}{0};
		\draw[->] \strand{\xd}{0};
		\draw[->] \cstrand{2*\xd}{0}; 
		\circssu{0}{\ly+0.05cm}
	\end{tikzpicture} \ , \label{eq:Cij} \\
	\A_{123} := \
	\begin{tikzpicture}[scale=1,line width=\lw,baseline={([yshift=-\eseq]current bounding box.center)}]
		\draw[->] \cstrand{0}{0};
		\draw[->] \cstrand{\xd}{0};
		\draw[->] \cstrand{2*\xd}{0};
		\circsss{0}{\ly+0.05cm}
	\end{tikzpicture} \ . \label{eq:C123}
\end{gather}
The previous diagrams are known as colored and oriented (3,3)-tangle diagrams in knot theory\footnote{A $(m,n)$-tangle diagram is such that $m$ points on a bottom line and $n$ points on a top line are connected by arcs, and it can contain loops between the two lines.}. They are similar to braid diagrams on three strands, with the difference that they contain loops. These tangle diagrams can be viewed to represent part of a colored and oriented link, where the colors are given by spins. Unless required, the colors of the vertical strands are not indicated on the figure. Since we will consider framed links, the equivalence relation which will be of interest for us is regular isotopy. Hence, if $X$ and $Y$ are two regular isotopic tangle diagrams, we will write $X=Y$ to refer to their equivalence.  

In what follows, we will be interested in deriving relations between the values of invariants associated to framed links which only differ in some finite region by a combination of the tangle diagrams \eqref{eq:Ci}--\eqref{eq:C123}. To do so, we first define the product $XY$ of two such tangle diagrams $X$ and $Y$ by vertical concatenation, with the convention that the diagram $X$ is put on top of the diagram $Y$. This product rule together with the regular isotopy equivalence imply that the diagrams $\A_1,\A_2,\A_3$ and $\A_{123}$ commute with all the diagrams in \eqref{eq:Ci}--\eqref{eq:C123}. Indeed, one can move the loops enclosing only one strand inside the larger loops, and the loops enclosing all three strands around the smaller loops.  However, the diagrams $\A_{12}$ and $\A_{23}$ do not commute {\it a priori} because one cannot pass the top blue loop below the bottom one without encountering a crossing in the following diagrams:
\begin{equation}
	\A_{12}\A_{23} = \
	\begin{tikzpicture}[scale=1,line width=\lw,baseline={([yshift=-\eseq]current bounding box.center)}]
		\draw[->] \strand{0}{0} -- ++\cstrand{0}{0};
		\draw[->] \cstrand{\xd}{0} -- ++\cstrand{0}{0};
		\draw[->] \cstrand{2*\xd}{0} -- ++\strand{0}{0};
		\circss{\xd}{\ly+0.05cm}{decorate}
		\circss{0}{\ly+0.05cm+\yd}{decorate}
	\end{tikzpicture} \ , \qquad 
	\A_{23}\A_{12} = \
	\begin{tikzpicture}[scale=1,line width=\lw,baseline={([yshift=-\eseq]current bounding box.center)}]
		\draw[->] \cstrand{0}{0} -- ++\strand{0}{0};
		\draw[->] \cstrand{\xd}{0} -- ++\cstrand{0}{0};
		\draw[->] \strand{2*\xd}{0} -- ++\cstrand{0}{0};
		\circss{0}{\ly+0.05cm}{decorate}
		\circss{\xd}{\ly+0.05cm+\yd}{decorate}
	\end{tikzpicture} \ . \label{eq:prodC12C23}
\end{equation}

We can also define similarly the product of a braid $\sigma \in B_3$ with the $(3,3)$-tangle diagrams in \eqref{eq:Ci}--\eqref{eq:C123}. It is then seen (by using again regular isotopy invariance) that the element $\A_{13}$ can be expressed in terms of $\A_{12}$ or $\A_{23}$ as follows:
\begin{equation}
	\A_{13} = \sigma_1 \A_{23} \sigma_1^{-1} = \
	\begin{tikzpicture}[scale=1,line width=\lw,baseline={([yshift=-\eseq]current bounding box.center)}]
		\draw[rounded corners=5pt,->] \ucrossl{0}{0} -- ++\cstrand{0}{0} ++\ocrossr{0}{0};
		\draw[rounded corners=5pt,->] \ucrossr{\xd}{0} -- ++\strand{0}{0} ++\ocrossl{0}{0};
		\circss{\xd}{\ly+0.05cm+\yd}{decorate}
		\draw[->] \strand{2*\xd}{0} -- ++\cstrand{0}{0} -- ++\strand{0}{0};
	\end{tikzpicture} \ = \
	\begin{tikzpicture}[scale=1,line width=\lw,baseline={([yshift=-\eseq]current bounding box.center)}]
		\draw[->] \strand{0}{0} -- ++\cstrand{0}{0} -- ++\strand{0}{0};
		\draw[rounded corners=5pt,->] \ocrossl{\xd}{0} -- ++\strand{0}{0} ++\ucrossr{0}{0};
		\draw[rounded corners=5pt,->] \ocrossr{2*\xd}{0} -- ++\cstrand{0}{0} ++\ucrossl{0}{0};
		\circss{0}{\ly+0.05cm+\yd}{decorate}
	\end{tikzpicture} \ =
	\sigma_2^{-1} \A_{12} \sigma_2. \label{eq:C130sig}
\end{equation}

In order to state the connection between the diagrams \eqref{eq:Ci}--\eqref{eq:C123} and the Askey--Wilson algebra, we need the following definition.
\begin{defi}\label{def:AW} \cite{CFGPRV}
	The special Askey--Wilson algebra $\saw(3)$ is generated by $\C_{12},\C_{23},\C_{13}$ and central elements $\C_1,\C_2,\C_3,\C_{123}$ subject to the relations
	\begin{align}
		&[\C_{12},\C_{23}]_q + (q^2-q^{-2})\C_{13} =(q-q^{-1})(\C_1\C_3+\C_2 \C_{123}), \label{eq:AW1} \\
		&[\C_{23},\C_{13}]_q + (q^2-q^{-2})\C_{12} =(q-q^{-1})(\C_1\C_2+\C_3 \C_{123}), \label{eq:AW2} \\
		&[\C_{13},\C_{12}]_q + (q^2-q^{-2})\C_{23} =(q-q^{-1})(\C_2\C_3+\C_1 \C_{123}), \label{eq:AW3}\\
		&q \C_{12}\C_{23}\C_{13} + q^2 \C_{12}^2 + q^{-2} \C_{23}^2 + q^2 \C_{13}^2 -q\C_{12} (\C_1\C_2+\C_3 \C_{123})- q^{-1} \C_{23} (\C_2\C_3+\C_1 \C_{123}) \nonumber \\
		& - q \C_{13}(\C_1\C_3+\C_2 \C_{123}) = (q+\qi)^2 - \C_{123}^2 -\C_1^2 -\C_2^2 - \C_3^2 - \C_1\C_2\C_3\C_{123}, \label{eq:AW4}		
	\end{align}
	where $q$ is a complex number and $[X,Y]_q=qXY-\qi YX$ is the $q$-commutator.
\end{defi}
Note that \eqref{eq:AW1}--\eqref{eq:AW3} are the defining relations of (a centrally extended version of) the original AW algebra introduced in \cite{Zh}. Moreover, the LHS of \eqref{eq:AW4} is a Casimir element for this algebra.
  
The main goal of this paper is to show that, when considering some specific link invariants, the tangle diagrams \eqref{eq:Ci}--\eqref{eq:C123} obey the Askey--Wilson algebra relations \eqref{eq:AW1}--\eqref{eq:AW3} under the correspondence
\begin{equation}
	\C_I \mapsto \A_I, \qquad \forall I \in \{1,2,3,12,23,13,123\}. \label{eq:corres}
\end{equation}
Let us mention that this correspondence between the AW algebra generators $\C_I$ and the tangle diagrams $\A_I$ together with equation \eqref{eq:C130sig} is consistent with our understanding of the realization of the AW algebra as the centralizer of $\Usl$ in $\Ut$ \cite{CGVZ}. We will come back to this point later in Subsection \ref{ssec:CasimirAW}. 

The two link invariants which will be considered here are: the one which arises from the Chern--Simons quantum field theory with gauge group $SU(2)$, and the one which follows from the Reshetikhin--Turaev construction associated to the quantum group $\Usl$. Let us mention again that there are reasons to believe that these two link invariants are in fact the same (see \cite{Gua,GMM1,GMM2,GMM3,MS,RT} for instance), but since this connection is not obvious to establish, we will consider both cases independently.

\section{Askey--Wilson algebra in the Chern--Simons theory}\label{sec:AWCS}
This section focuses on the Chern--Simons quantum field theory and its link invariants. It will be shown that the tangle diagrams defined in \eqref{eq:Ci}-\eqref{eq:C123} lead indeed to the Askey--Wilson algebra in this context.

\subsection{Chern--Simons action and Wilson loops}\label{ssec:CSWL}

Throughout this paper, we consider the Chern--Simons theory on $\mathbb{R}^3$ with gauge group $SU(2)$. We will mainly follow the conventions of \cite{Gua}. The Lie algebra $\su_2$ has generators $T^a$, for $a=1,2,3$, with Lie bracket $[T^a,T^b]=i\epsilon^{abc} T^c$, where $\epsilon^{abc}$ is the Levi-Civita symbol. For each spin $j=0,1/2,1,...$, $\su_2$ has an irreducible representation of finite dimension $2j+1$. In the fundamental (spin-$1/2$) representation, the generators $T^a$ are represented by the Pauli matrices with a normalization factor of $1/2$. The gauge potential of the field theory is the one-form $A=\sum_{\mu} A_\mu dx^\mu$ for $\mu=0,1,2$ with values in $\su_2$: $A_\mu(x) = \sum_{a}A^a_\mu(x) T^a$.

The Chern--Simons action is
\begin{equation}
	S_{CS}=\frac{\kappa}{4\pi}\int_{\bR^3} \Tr\left( A \wedge dA + \frac{2i}{3} A \wedge A \wedge A \right), \label{eq:SCS}
\end{equation} 
where $\kappa$ is the coupling constant, $\wedge$ is the exterior product and $\Tr$ is the trace in the two-dimensional spin-$1/2$ representation of $\su_2$. The action $S_{CS}$ as defined in \eqref{eq:SCS} is manifestly a topological invariant since it is the integral of a three-form over a three-manifold. Moreover, it is invariant under the gauge transformation
\begin{equation}
	A_\mu(x) \to A_\mu^\Omega(x) = \Omega^{-1}(x) A_\mu(x) \Omega(x) - i \Omega^{-1}(x) \partial_\mu \Omega(x), \label{eq:gaugeTransf}
\end{equation}
where $\Omega:\bR^3 \to SU(2)$ is a smooth map.

The gauge invariant observables of the CS theory are the Wilson loops, defined by 
\begin{equation}
	W(\gamma,j)=\Tr\left[ P \exp\left(i \oint_{\gamma} A_\mu^a T^a_{(j)} dx^\mu \right)\right], \label{eq:WL}
\end{equation}  
where $\gamma$ is a closed and oriented smooth curve in $\bR^3$, $P$ is the path-ordering operator and $T^a_{(j)}$ denotes the spin-$j$ representation of the generator $T^a$ of $\su(2)$. (Note that one speaks of ``Wilson lines'' if the integral in \eqref{eq:WL} is taken along a path which is not closed and if there is no trace.) More generally, we can consider the finite union of non-intersecting closed and oriented smooth curves $\gamma_i$ each associated to some spin $j_i$; this defines an oriented and colored link $L$. Then the product of Wilson loops associated to this link $L$ is
\begin{equation}
	W(L)=W(\gamma_1,...,\gamma_n;j_1,...,j_n)=\prod_{i=1}^n W(\gamma_i,j_i). \label{eq:prodWL}
\end{equation}
The vacuum expectation value of a product of Wilson loops is given in terms of path integrals by
\begin{equation}
	\left\langle W(L) \right\rangle= \frac{\int \mathcal{D}A \  W(L) e^{iS_{CS}} }{\int \mathcal{D}A \  e^{iS_{CS}}}. \label{eq:vevWL}
\end{equation}
In order for this expression to be well-defined, the link $L$ must be framed. Hence, the expectation value \eqref{eq:vevWL} depends on a choice of framing. To specify the choice of framing on link diagrams, we will use the vertical framing (VF) convention. Therefore, the object of interest will be
\begin{equation}
	\E(L) := \left\langle W(L) \right\rangle_{\text{VF}},  \label{eq:vevWLVF}
\end{equation}
where we use the same notation $L$ for a link in $\bR^3$ and its two-dimensional projection diagram. Note that $\E(L)$ can be expressed in terms of the deformation parameter
\begin{equation}
	q:=\exp\left(-\frac{i \pi }{\kappa}\right). \label{eq:defq}
\end{equation}
In \cite{Gua}, the object $\E(L)$ is denoted $E(L)$ and the deformation parameter corresponds to $q^2$.

\subsection{Properties of the Wilson loop expectation values} \label{ssec: ProprWL}

We now provide a list of properties of $\E(L)$ which will be useful for what follows and which can be derived from the CS theory \cite{Gua,Wit}.
\begin{enumerate}
	\item $\E$ is an ambiant isotopy invariant of oriented, colored and framed links in $\bR^3$, and a regular isotopy invariant of oriented and colored link diagrams. This is a consequence of the topological invariance of the CS theory. Actually, since we are considering the gauge group $SU(2)$, the value $\E(L)$ does not depend on the orientation of the components of the link $L$, hence the previous statement holds for unoriented links (see for instance \cite{Gua} for more details).
	\item If two links $L^{(\pm)}$ and $L^{(0)}$ are identical everywhere except at some small region where they look as indicated in Figure \ref{fig:onecompconf}, and if the component which differs is associated to the spin $j$, then
	\begin{equation}
		\E(L^{(\pm)};j) = q^{\pm 2j(j+1)} \E(L^{(0)};j). \label{eq:ERMI}
	\end{equation}   
	The property \eqref{eq:ERMI} corresponds to a change of framing of one of the components of the link.
	\item If $L_1$ and $L_2$ are two disjoint links, then
	\begin{equation}
		\E(L_1 \cup L_2) = \E(L_1)\E(L_2). \label{eq:Efact}
	\end{equation}
	This factorization property is a consequence of the topological invariance of the CS theory and of the uniqueness of the vacuum.
	\item If $U_1$ and $U_2$ are two unknots with zero writhe (see Figure \ref{fig:unknot}) and respective spins $j_1$ and $j_2$, then
	\begin{equation}
		\E(U_1,U_2;j_1,j_2)=\sum_{j=|j_1-j_2|}^{j_1+j_2} \E(U_1;j). \label{eq:Efus}
	\end{equation}
	This is a consequence of the fusion property of the Wilson loops and of the direct sum decomposition rule for the tensor product of two spin representations of $\su_2$.
	\item In a $1+2$ time and space decomposition of $\bR^3$, the (full) monodromy matrix which describes the braiding of two Wilson lines associated to the representations of spins $j_1$ and $j_2$ of $\su_2$ is given by
	\begin{equation}
		\cB = q^{4 T^a_{(j_1)} \otimes T^a_{(j_2)}}. \label{eq:monodromyMat}
	\end{equation}
	The eigenvalues of the monodromy matrix are
	\begin{equation}
		q^{-2j_1(j_1+1)-2j_2(j_2+1)+2j(j+1)}, \label{eq:monodromyEV}
	\end{equation}
	for $j = |j_1-j_2|,|j_1-j_2|+1,...,j_1+j_2$. The half-monodromy matrix is given by
	\begin{equation}
		\mathcal{M}=\Pi_{12}q^{2 T^a_{(j_1)} \otimes T^a_{(j_2)}}, \label{eq:halfmonodromyMat}
	\end{equation}
	where $\Pi_{12}$ is the permutation operator that exchanges the representation spaces $j_1$ and $j_2$. This matrix is associated to the exchange of two punctures in the 2-space of the CS theory. In terms of link diagrams, it corresponds to performing the crossing $L_+$ of Figure \ref{fig:crossing} (and the inverse matrix corresponds to $L_-$). The minimal characteristic polynomial of the half-monodromy matrix leads to a skein relation (or a generalized version) for the CS link invariant. For instance, when two Wilson lines are in the spin $1/2$ representation, $\mathcal{M}$ has two eigenvalues and one gets the relation
	\begin{equation}
		q^{\frac{1}{2}}\E(L_+) - q^{-\frac{1}{2}}\E(L_-) = (q-\qi) \E(L_0). \label{eq:skein} 
	\end{equation} 
\end{enumerate}
These properties of the Wilson loop expectation values allow one to compute the CS invariant associated to any link in $\bR^3$ (see \cite{Gua}). The result is a Laurent polynomial in the variable $q^{\frac{1}{2}}$. Some useful cases are given in the following proposition.
\begin{prop} \cite{Gua}
	Denote $[x]_q:=\frac{q^x-q^{-x}}{q-\qi}$.
	\begin{enumerate}[label=(\alph*)]
		\item The value of the unknot $U$ with zero writhe (see Figure \ref{fig:unknot}) associated to the irreducible representation of spin $j$ of $\su_2$ is 
		\begin{equation}
			\E(U;j) = [2j+1]_q. \label{eq:Eunknot}
		\end{equation}
		\item The value of the Hopf link $L_H$ (see Figure \ref{fig:hopf}) whose components have zero writhe and are associated to the irreducible representations of spins $j_1$ and $j_2$ of $\su_2$ is
		\begin{equation}
			\E(L_H;j_1,j_2) = [(2j_1+1)(2j_2+1)]_q. \label{eq:EHopf}
		\end{equation}  
	\end{enumerate}
\end{prop}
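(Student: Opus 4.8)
\emph{Plan.} Write $d_j:=\E(U;j)$ for the value of the zero-writhe unknot of spin $j$; part (a) asks to show $d_j=[2j+1]_q$. First I would pin down two base values. Since the spin-$0$ representation is trivial, the corresponding Wilson loop is identically $1$, so $d_0=1=[1]_q$. To get $d_{1/2}$, I would apply the skein relation \eqref{eq:skein} to the self-crossing produced by a single twist of the unknot: the two crossing resolutions are the kinks $L^{(\pm)}$, whose values are $q^{\pm 3/2}d_{1/2}$ by the framing relation \eqref{eq:ERMI} with $j=1/2$ (so that $2j(j+1)=3/2$), while the oriented smoothing $L_0$ splits off a separate circle and thus equals $d_{1/2}^2$ by factorization \eqref{eq:Efact}. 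Then \eqref{eq:skein} reduces to $(q^2-q^{-2})d_{1/2}=(q-\qi)d_{1/2}^2$, giving $d_{1/2}=q+\qi=[2]_q$. Next I would produce a recursion: taking $U_1,U_2$ to be two unlinked unknots, the fusion rule \eqref{eq:Efus} and the factorization rule \eqref{eq:Efact} (applicable because, by isotopy invariance, the two unknots may be pulled apart) give the two expressions $\sum_{j=|j_1-j_2|}^{j_1+j_2}d_j$ and $d_{j_1}d_{j_2}$ for the same quantity, hence $d_{j_1}d_{j_2}=\sum_{j=|j_1-j_2|}^{j_1+j_2}d_j$. Specializing $j_2=1/2$ yields the three-term recursion $d_{j+1/2}=(q+\qi)\,d_j-d_{j-1/2}$, which together with $d_0=1$ and $d_{1/2}=q+\qi$ determines all $d_j$. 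A straightforward induction then confirms that $d_j=[2j+1]_q$ solves it, proving \eqref{eq:Eunknot}.

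For part (b), I would realize the Hopf link as the trace-closure of the full monodromy $\cB$ of two strands carrying spins $j_1$ and $j_2$. Decomposing the pair of parallel lines by the fusion rule, $j_1\otimes j_2=\bigoplus_{j=|j_1-j_2|}^{j_1+j_2}j$, the monodromy $\cB$ acts as the scalar $\lambda_j:=q^{-2j_1(j_1+1)-2j_2(j_2+1)+2j(j+1)}$ on the spin-$j$ channel by \eqref{eq:monodromyEV}, since its eigenvalue depends only on the total spin. In each channel the two lines have fused into a single zero-writhe unknot of spin $j$, contributing $d_j=[2j+1]_q$ by part (a); summing over channels gives
\begin{equation*}
\E(L_H;j_1,j_2)=\sum_{j=|j_1-j_2|}^{j_1+j_2}\lambda_j\,[2j+1]_q .
\end{equation*}

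Finally, writing $[2j+1]_q=(q^{2j+1}-q^{-2j-1})/(q-\qi)$ turns the summand into a telescoping difference,
\begin{equation*}
\lambda_j\,[2j+1]_q=\frac{q^{-2j_1(j_1+1)-2j_2(j_2+1)-1}}{q-\qi}\left(q^{2(j+1)^2}-q^{2j^2}\right),
\end{equation*}
so the sum collapses to the endpoint contributions at $j=j_1+j_2$ and $j=|j_1-j_2|$; a short simplification of the exponents then yields exactly $[(2j_1+1)(2j_2+1)]_q$, which is \eqref{eq:EHopf}. The main obstacle is the justification of this channel-by-channel trace formula: one must argue carefully that closing up the braided pair decomposes into fusion channels on which $\cB$ is the scalar $\lambda_j$ and on which each closed channel contributes the loop value $[2j+1]_q$, with the zero-writhe hypothesis ensuring that no residual framing factors survive. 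By contrast, the recursion and induction in part (a) and the telescoping at the end of part (b) are routine once this structural point is in place; as a consistency check, the case $j_1=j_2=1/2$ gives $q^{-3}+q^3+q+\qi=[4]_q$, matching the claim.
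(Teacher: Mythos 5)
The paper does not actually prove this proposition---it is quoted from \cite{Gua}---so your attempt can only be measured against the toolkit the paper records in Subsection \ref{ssec: ProprWL} and Proposition \ref{prop:Erec}. Measured that way, your part (a) is correct and is the standard derivation: $d_0=1$ follows from \eqref{eq:Ereptriv}, the curl computation with \eqref{eq:skein} and \eqref{eq:ERMI} gives $d_{1/2}$, the unlink identity $d_{j_1}d_{j_2}=\sum_j d_j$ follows from \eqref{eq:Efact} and \eqref{eq:Efus}, and the induction on the three-term recursion is routine. One small point: the curl equation is $d_{1/2}\bigl[(q^2-q^{-2})-(q-q^{-1})d_{1/2}\bigr]=0$, so you must explicitly discard the spurious root $d_{1/2}=0$ (e.g.\ by noting it would force $d_1=-1$ and is incompatible with the theory at generic $q$, or by invoking the normalization of the invariant). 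Your telescoping in part (b) is also correct; I checked that the endpoint exponents collapse to $\pm(2j_1+1)(2j_2+1)$, so the formula $\E(L_H;j_1,j_2)=\sum_{j}\lambda_j\,[2j+1]_q$ does yield \eqref{eq:EHopf}.

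The gap is exactly the one you flag: nothing among the listed properties licenses ``closing a braid channel by channel,'' since \eqref{eq:Efus} is stated only for a pair of zero-writhe unknots and \eqref{eq:monodromyEV} only identifies the spectrum of $\cB$, not how its closure evaluates. You can close this gap without any channel language by a cabling argument that uses only the stated properties. Apply the fusion property of parallel Wilson loops---in the local, cabled form the paper itself invokes in Proposition \ref{prop:Erec}(b) and Figure \ref{fig:doubling}---to the two parallel push-offs of the unknot of writhe $+1$. This $2$-cable is precisely the Hopf link whose two components each have writhe $+1$ and mutual linking $+1$, so fusion gives
\begin{equation*}
q^{2j_1(j_1+1)}\,q^{2j_2(j_2+1)}\,\E(L_H;j_1,j_2)\;=\;\sum_{j=|j_1-j_2|}^{j_1+j_2} \E\bigl(U^{(+)};j\bigr)\;=\;\sum_{j=|j_1-j_2|}^{j_1+j_2} q^{2j(j+1)}\,[2j+1]_q,
\end{equation*}
where the left side used \eqref{eq:ERMI} on each component to pass to the zero-writhe Hopf link, and the right side used \eqref{eq:ERMI} together with part (a). Dividing by the prefactor produces exactly your sum $\sum_j \lambda_j [2j+1]_q$ with $\lambda_j$ as in \eqref{eq:monodromyEV}, and your telescoping then finishes the proof. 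With this replacement for the channel-by-channel step, your argument becomes a complete derivation of both \eqref{eq:Eunknot} and \eqref{eq:EHopf} from the properties the paper lists.
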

\begin{figure}[h]
	\hspace*{\fill}%
	\begin{subfigure}[b]{0.1\textwidth}
		\centering
		\begin{tikzpicture}[scale=1,line width=\lw]
			\draw[->] (0.5cm,0) arc(0:360:0.6cm);
		\end{tikzpicture}
		\caption{}
		\label{fig:unknot}
	\end{subfigure}
	\hspace*{\fill}%
	\begin{subfigure}[b]{0.1\textwidth}
		\centering
		\begin{tikzpicture}[scale=1,line width=\lw]
			\draw[decoration={markings, mark=at position 0.85 with {\arrow{>}}},postaction=decorate] (70:0.6cm) arc(70:410:0.6cm);
			\draw[xshift=0.6cm,decoration={markings, mark=at position 0.875 with {\arrow{<}}},postaction=decorate] (-110:0.6cm) arc(-110:230:0.6cm);
		\end{tikzpicture}
		\caption{}
		\label{fig:hopf}
	\end{subfigure}
	\hspace*{\fill}%
	\caption{The unknot (a) and the Hopf link (b).}
	\label{fig:UHopf}
\end{figure}
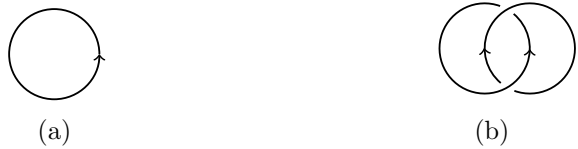
When all the Wilson loops are in the spin-$1/2$ representation, the regular isotopy invariance, the value of the unknot \eqref{eq:Eunknot}, the property \eqref{eq:ERMI} and the skein relation \eqref{eq:skein} uniquely determine the value $\E(L)$ (see \cite{Gua}). More generally, the CS link invariant can be determined by a recursive method which reduces the computation to the case where all the spins are $1/2$, as per the following proposition. 
\begin{prop} \label{prop:Erec} \cite{Gua} Let $L$ be a link with components $K_1,K_2,...$ associated to spins $j_1,j_2,...$ 
	\begin{enumerate}[label=(\alph*)]
		\item Any component $K_i$ in the trivial representation can be removed from the computation of $\E(L)$. That is, if $j_1=0$ (without loss of generality), then
		\begin{equation}
			\E(K_1,K_2,...;0,j_2,...) = \E(K_2,...;j_2,...). \label{eq:Ereptriv}
		\end{equation}
		\item For any spin $j_1$ (and similarly for any other spin $j_i$ of $L$),
		\begin{align}
			\E(K_1,K_2,...;j_1,j_2,...) =& \E(K_1,K_1,K_2,...;1/2,j_1-1/2,j_2,...) \label{eq:Erec} \\
			&- \E(K_1,K_2,...;j_1-1,j_2,...), \nonumber
		\end{align}
		where the component $K_1$ has been doubled in a parallel way in the first term on the RHS of the equality (see Figure \ref{fig:doubling}).
	\end{enumerate}
\end{prop}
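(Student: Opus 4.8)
The plan is to derive both parts from the representation-theoretic content of the Wilson loop expectation value, using the properties collected in Subsection \ref{ssec: ProprWL} together with the fusion mechanism already exploited in \eqref{eq:Efus}.

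For part (a), I would argue that a Wilson loop colored by the trivial (spin $0$) representation is identically equal to $1$. Indeed, $T^a_{(0)}=0$, so the path-ordered exponential in \eqref{eq:WL} is the identity on the one-dimensional module $V_0$, and its trace is $1$, independently of the curve $\gamma_1$ and of how it is linked with the other components. Hence $W(\gamma_1,0)$ may be pulled out of the path integral \eqref{eq:vevWL} as the constant factor $1$, leaving exactly $\E(K_2,\dots;j_2,\dots)$. In the special case where the spin-$0$ component is an unknot unlinked from the rest, this is also consistent with \eqref{eq:Eunknot} at $j=0$, which gives $\E(U;0)=1$, together with the factorization property \eqref{eq:Efact}.

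For part (b), the key input is the fusion property of Wilson loops, of which \eqref{eq:Efus} is the special case of two unknots. At the level of holonomies, running two Wilson lines of spins $j_a$ and $j_b$ parallel along the same framed curve $K_1$ produces a holonomy acting on $V_{j_a}\otimes V_{j_b}$, and the associated Wilson loop is the character $\chi_{j_a}\chi_{j_b}$ of that holonomy. The Clebsch--Gordan rule $V_{1/2}\otimes V_{j_1-1/2}\cong V_{j_1}\oplus V_{j_1-1}$ is multiplicity-free, so the character identity $\chi_{1/2}\,\chi_{j_1-1/2}=\chi_{j_1}+\chi_{j_1-1}$ holds with unit coefficients. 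Taking expectation values of both sides, with all other components $K_2,\dots$ and the ambient framing left untouched, I would obtain
\begin{equation*}
	\E(K_1,K_1,K_2,\dots;1/2,j_1-1/2,j_2,\dots)=\E(K_1,K_2,\dots;j_1,j_2,\dots)+\E(K_1,K_2,\dots;j_1-1,j_2,\dots),
\end{equation*}
where the left-hand side is precisely the parallel doubling of $K_1$ of Figure \ref{fig:doubling}. Rearranging then yields \eqref{eq:Erec}.

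The step I expect to be the main obstacle is making the notion of parallel doubling precise enough that the fusion delivers the clean sum with unit coefficients. Concretely, one must ensure that the two doubled strands carry the same vertical framing as the original $K_1$ and are pushed off along that framing, so that no extra relative twisting is introduced between them. Otherwise the monodromy eigenvalues \eqref{eq:monodromyEV} would contribute distinct framing phases to the $V_{j_1}$ and $V_{j_1-1}$ summands and spoil the recursion. Verifying this compatibility — essentially that the decomposition of the tensor-product holonomy into irreducibles commutes with the vertical framing prescription — is the delicate point; once it is granted, the $SU(2)$ character identity does the rest.
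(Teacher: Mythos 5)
Your proposal is correct, and for part (b) it follows essentially the paper's own route: the paper justifies \eqref{eq:Erec} by the multiplicity-free decomposition rule $1/2 \otimes (j_1-1/2) = j_1 \oplus (j_1-1)$ combined with the fusion property of Wilson loops (of which \eqref{eq:Efus} is the two-unknot instance), which is exactly what your character identity with unit coefficients encodes; your closing concern about the doubled strands carrying the same framing is legitimate, and it is precisely the convention built into the parallel doubling of Figure \ref{fig:doubling} under vertical framing, so no extra relative twist (and hence no spurious monodromy phases from \eqref{eq:monodromyEV}) arises. For part (a) you take a slightly more direct path than the paper: you note that $W(\gamma_1,0)\equiv 1$ as a functional of the gauge field, since $T^a_{(0)}=0$ makes the path-ordered exponential trivial on the one-dimensional module, so the factor drops out of the path integral \eqref{eq:vevWL} immediately. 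The paper instead argues topologically: the full monodromy matrix $\cB$ of \eqref{eq:monodromyMat} acts as the identity when $j_1=0$, so the trivial component can be unlinked from the rest, after which the factorization property \eqref{eq:Efact} and the value $\E(U;0)=[1]_q=1$ from \eqref{eq:Eunknot} finish the job. Your version is marginally more elementary in that it bypasses factorization (and hence the uniqueness-of-vacuum input behind \eqref{eq:Efact}), while the paper's phrasing stays entirely at the level of the topological properties listed in Subsection \ref{ssec: ProprWL}; both rest on the same underlying fact that the trivial representation carries no holonomy.
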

Part (a) of Proposition \ref{prop:Erec} is based on the fact that any component in the trivial representation can be unlinked since the full-monodromy matrix $\cB$ acts as the identity in the case $j_1=0$, and also, on the factorization property \eqref{eq:Efact}. Part (b) follows from the spin decomposition rule $j \otimes 1/2 = (j+1/2) \oplus (j-1/2)$ and the fusion property \eqref{eq:Efus}.

\begin{figure}[]
	\begin{tikzpicture}[scale=1,line width=\lw, baseline={([yshift=-\eseq]current bounding box.center)}]
		\draw[decoration={markings, mark=at position 0.6 with {\arrow{>}}},postaction=decorate] (0:0.8cm) arc(0:-360:0.8cm and 1cm);
		\node at (0,-1.2cm) {\footnotesize $j_1$};
	\end{tikzpicture} 
	$ \quad = \quad $
	\begin{tikzpicture}[scale=1,line width=\lw, baseline={([yshift=-\eseq]current bounding box.center)}]
		\draw[decoration={markings, mark=at position 0.6 with {\arrow{>}}},postaction=decorate] (0:0.8cm) arc(0:-360:0.8cm and 1cm);
		\draw[color=white,fill=white] (0.12cm,1cm) circle(0.12cm);
		\draw[color=white,fill=white] (0.12cm,-1cm) circle(0.12cm);
		\begin{scope}[xshift=0.25cm]
			\draw[decoration={markings, mark=at position 0.6 with {\arrow{>}}},postaction=decorate] (0:0.8cm) arc(0:-360:0.8cm and 1cm);
		\end{scope}
		\node at (-0.5cm,-1.2cm) {\footnotesize $\frac{1}{2}$};
		\node at (1cm,-1.2cm) {\footnotesize $j_1-\frac{1}{2}$};
	\end{tikzpicture} 
	$ - \quad $
	\begin{tikzpicture}[scale=1,line width=\lw, baseline={([yshift=-\eseq]current bounding box.center)}]
		\draw[decoration={markings, mark=at position 0.6 with {\arrow{>}}},postaction=decorate] (0:0.8cm) arc(0:-360:0.8cm and 1cm);
		\node at (0cm,-1.2cm) {\footnotesize $j_1-1$};
	\end{tikzpicture} 
	\centering
	\caption{Diagrammatic representation of equation \eqref{eq:Erec}. The equality depicted above extends when the diagrams are part of a more complicated link.}
	\label{fig:doubling}
\end{figure}
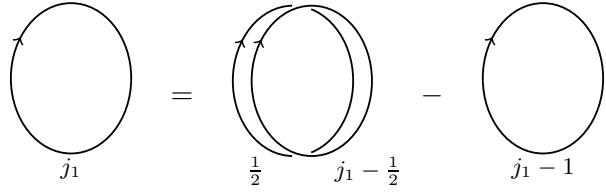

\subsection{Kauffman bracket}\label{ssec:KauffmanBracket}

The strategy for establishing the role of the Askey--Wilson algebra in the CS theory will rely on connecting the expectation values of the Wilson loops in the fundamental representation of $\su_2$ with the Kauffman bracket polynomial. It is relevant to point out a parallel with the relation between the AW algebra and the Kauffman skein algebra that was obtained in \cite{BP,Cooke,CFGPRV,Hik}.  

We start with a definition of Kauffman's bracket polynomial.
\begin{defi}\label{def:VB} \cite{Kau87}
	The bracket polynomial $V_B(L;x)$ for a non-oriented link $L$ is defined by
	\begin{enumerate}[label=(\roman*)]
		\item $V_B(L)=V_B(L')$ if $L$ and $L'$ are regular isotopic;
		\item $V_B\left(
		      \begin{tikzpicture}[line width=\lw,baseline={([yshift=-0.1cm] current bounding box.center)}]
		      	\draw[\colfund] circle(\lxk);
		      \end{tikzpicture} 
		      \right)=-(x^2+x^{-2})$;
		\item $V_B\left( 
		      \begin{tikzpicture}[line width=\lw,baseline={([yshift=-0.1cm] current bounding box.center)}]
			      \draw[\colfund] (-\lxk,-\lxk) -- (0.1cm,0.1cm) arc(135:-135:0.141cm) -- (\sk,-\sk) (-\sk,\sk) -- (-\lxk,\lxk);
			  \end{tikzpicture} 
		  	  \right) = -x^3 V_B\left( \ 
		  	  \begin{tikzpicture}[line width=\lw,baseline={([yshift=-0.1cm] current bounding box.center)}]
		  	  	\draw[\colfund] (0,-\lxk) to [bend right = \bd] (0,\lxk);
		  	  \end{tikzpicture} 
		  	  \ \right), \qquad 
		  	  V_B\left( 
		  	  \begin{tikzpicture}[line width=\lw,baseline={([yshift=-0.1cm] current bounding box.center)}]
		  	  	\draw[\colfund] (-\lxk,-\lxk) -- (-\sk,-\sk) (\sk,\sk) -- (0.1cm,0.1cm) arc(135:-135:0.141cm) -- (-\lxk,\lxk);
		  	  \end{tikzpicture}
		  	  \right) = -x^{-3} V_B\left( \ 
		  	  \begin{tikzpicture}[line width=\lw,baseline={([yshift=-0.1cm] current bounding box.center)}]
		  	  	\draw[\colfund] (0,-\lxk) to [bend right = \bd] (0,\lxk);
		  	  \end{tikzpicture} 
		  	  \ \right)$;
		\item $V_B\left(  
		      \begin{tikzpicture}[line width=\lw,baseline={([yshift=-0.1cm] current bounding box.center)}]
		      	\draw[\colfund]
		      	(-\lxk,-\lxk) -- (-\sk,-\sk) (\sk,\sk) -- (\lxk,\lxk);
		      	\draw[\colfund]
		      	(-\lxk,\lxk) -- (\lxk,-\lxk);
		      \end{tikzpicture} 
		      \right)=x V_B\left( 
		      \begin{tikzpicture}[line width=\lw,baseline={([yshift=-0.1cm] current bounding box.center)}]
		      	\draw[\colfund]
		      	(-\lxk,\lxk) to [bend right = \bd] (\lxk,\lxk);
		      	\draw[\colfund]
		      	(-\lxk,-\lxk) to [bend left = \bd] (\lxk,-\lxk);
		      \end{tikzpicture}
		      \right) + x^{-1} V_B\left( 
		      \begin{tikzpicture}[line width=\lw,baseline={([yshift=-0.1cm] current bounding box.center)}]
		      	\draw[\colfund]
		      	(-\lxk,-\lxk) to [bend right = \bd] (-\lxk,\lxk);
		      	\draw[\colfund]
		      	(\lxk,-\lxk) to [bend left = \bd] (\lxk,\lxk);
		      \end{tikzpicture}
		      \right)$.
	\end{enumerate}
	In the previous equations, the dependence on the variable $x$ is implicit, and it is understood in (iii) and (iv) that the equalities hold for link diagrams which look the same everywhere except at some small region where they differ as illustrated. 
\end{defi}
Note that given an oriented link $L$, one can compute its bracket polynomial $V_B(L)$ by removing the orientation. The next result gives precisely the connection between the CS expectation values and the bracket polynomial.
\begin{prop}\label{prop:EVB} \cite{Gua} Let $L$ be an oriented framed link with all components associated to the fundamental (spin $1/2$) representation of $\su_2$. Then the following relation holds
	\begin{equation}
		\E(L) = \exp\left( - \frac{i \pi}{2} w(L) \right) V_B(L;iq^{\frac{1}{2}}). \label{eq:EVB}
	\end{equation}
\end{prop}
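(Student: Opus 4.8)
The plan is to reduce everything to the uniqueness statement recalled just before Proposition~\ref{prop:Erec}: on all-spin-$1/2$ framed link diagrams the invariant $\E$ is completely determined by regular isotopy invariance, the unknot value \eqref{eq:Eunknot}, the framing-change rule \eqref{eq:ERMI} with $j=1/2$, and the skein relation \eqref{eq:skein} (see \cite{Gua}). Accordingly I would set $F(L):=\exp\!\left(-\tfrac{i\pi}{2}w(L)\right)V_B(L;iq^{\frac12})$ and prove $\E=F$ by checking that $F$ satisfies these same four properties; the result \eqref{eq:EVB} then follows by uniqueness. Throughout I would use the substitution $x=iq^{\frac12}$, for which $x^2=-q$, $x^{-2}=-\qi$, and $e^{-i\pi/2}=-i$.

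The first three checks are direct. Since $V_B$ is a regular isotopy invariant by part (i) of Definition~\ref{def:VB} and the writhe is unchanged by Reidemeister moves of types II and III, the prefactor is invariant too, so $F$ is a regular isotopy invariant. For the zero-writhe unknot $U$ one has $w(U)=0$, whence $F(U)=V_B(U)=-(x^2+x^{-2})=q+\qi=[2]_q=\E(U;1/2)$ using part (ii) and \eqref{eq:Eunknot}. For the framing change, a positive kink raises the writhe by one, so part (iii) together with $e^{-i\pi/2}(-x^3)=(-i)(iq^{3/2})=q^{3/2}=q^{2\cdot\frac12(\frac12+1)}$ reproduces \eqref{eq:ERMI}; the negative kink is identical with $q^{3/2}$ replaced by $q^{-3/2}$.

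The crux is the skein relation, where the care lies in passing from the unoriented bracket rule (iv) to the oriented relation \eqref{eq:skein}. I would first match conventions: the crossing drawn in (iv) is the negative crossing $L_-$ of Figure~\ref{fig:crossing}, its orientation-preserving smoothing is the vertical one $L_0$, while the horizontal smoothing $L_\infty$ violates the orientation. Applying (iv) to $L_-$ and to the opposite crossing $L_+$ gives, as identities for the orientation-blind bracket, $V_B(L_-)=x^{-1}V_B(L_0)+xV_B(L_\infty)$ and $V_B(L_+)=xV_B(L_0)+x^{-1}V_B(L_\infty)$. Eliminating $V_B(L_\infty)$ by forming $x\,V_B(L_+)-x^{-1}V_B(L_-)$ yields $xV_B(L_+)-x^{-1}V_B(L_-)=(x^2-x^{-2})V_B(L_0)$. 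I would then insert the writhe prefactors through $w(L_+)=w(L_0)+1$ and $w(L_-)=w(L_0)-1$, so that $V_B(L_\pm)=i^{\,w(L_0)\pm1}F(L_\pm)$ and $V_B(L_0)=i^{\,w(L_0)}F(L_0)$; after dividing by $i^{\,w(L_0)}$ and using $xi=-q^{\frac12}$, $x^{-1}i^{-1}=-q^{-\frac12}$, and $x^2-x^{-2}=-(q-\qi)$, this collapses precisely to $q^{\frac12}F(L_+)-q^{-\frac12}F(L_-)=(q-\qi)F(L_0)$, which is \eqref{eq:skein}.

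The main obstacle I anticipate is bookkeeping rather than conceptual: correctly matching the smoothing conventions of (iv) to the oriented crossings so that $L_0$ (and not $L_\infty$) is the term that survives the elimination, and faithfully tracking the fourth roots of unity arising from both $e^{-i\pi w/2}$ and $x=iq^{\frac12}$. Once all four properties are verified, the uniqueness of $\E$ on spin-$1/2$ diagrams forces $\E=F$, establishing \eqref{eq:EVB}.
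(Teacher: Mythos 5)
Your proof is correct, but it runs the characterization argument in the opposite direction from the paper. The paper's proof normalizes the Chern--Simons invariant: it shows that $\exp\left(\frac{i\pi}{2}w(L)\right)\E(L)$ satisfies the defining properties (i)--(iv) of Definition \ref{def:VB} with $x=iq^{\frac{1}{2}}$, and then invokes the fact that those axioms uniquely define the bracket $V_B(L;iq^{\frac{1}{2}})$. You instead normalize the bracket, showing that $F(L)=\exp\left(-\frac{i\pi}{2}w(L)\right)V_B(L;iq^{\frac{1}{2}})$ satisfies the four properties (regular isotopy invariance, unknot value \eqref{eq:Eunknot}, framing rule \eqref{eq:ERMI}, skein relation \eqref{eq:skein}) which the paper states, just before Proposition \ref{prop:Erec} and citing \cite{Gua}, uniquely determine $\E$ on all-spin-$1/2$ diagrams. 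Both arguments are of the form ``verify a characterizing list of properties, then invoke uniqueness,'' and both lean on \cite{Gua} for the uniqueness input, so the algebra involved is nearly identical; the practical difference is in the orientation bookkeeping. Your direction only ever evaluates $F$ on oriented diagrams: the orientation-incompatible smoothing $L_\infty$ appears merely as an intermediate term in the unoriented bracket identities for $L_+$ and $L_-$ and is eliminated algebraically. The paper's direction must instead make sense of $\E$, and of the writhe prefactor, on the unoriented resolutions occurring in axiom (iv); this works because $\E$ is orientation-independent for gauge group $SU(2)$ and because $w(L)$ changes by a multiple of $4$ under reversal of the orientation of a component, but it is a subtlety your route avoids. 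Your individual verifications --- the unknot value $q+\qi=[2]_q$, the kink computation giving $q^{\pm 3/2}=q^{\pm 2j(j+1)}$ at $j=1/2$, and the elimination of $V_B(L_\infty)$ combined with $w(L_\pm)=w(L_0)\pm 1$ to recover \eqref{eq:skein} --- are all exact.
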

The idea of the proof is to show that $\exp\left( \frac{i \pi}{2} w(L) \right)\E(L)$ satisfies the properties (i)--(iv) with $x=iq^{\frac{1}{2}}$ which uniquely define the bracket polynomial $V_B(L;iq^{\frac{1}{2}})$. 

\subsection{Askey--Wilson relations from the Wilson loop expectation values}\label{ssec:AWWL}
We are now ready to prove the following theorem.
\begin{thm}\label{thm:CSAW}
	In the Chern--Simons theory on $\bR^3$ with gauge group $SU(2)$ and vertical framing, the Wilson loop expectation values \eqref{eq:vevWLVF} of the diagrams \eqref{eq:Ci}--\eqref{eq:C123} satisfy the Askey--Wilson relations \eqref{eq:AW1}--\eqref{eq:AW3} under the correspondence \eqref{eq:corres}.
\end{thm}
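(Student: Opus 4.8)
The plan is to prove each of \eqref{eq:AW1}--\eqref{eq:AW3} by evaluating both sides with $\E$ and resolving every crossing that appears through the spin-$1/2$ skein relation \eqref{eq:skein} (equivalently, relation (iv) of the bracket polynomial via Proposition \ref{prop:EVB}). This is legitimate because the only crossings created by the products in \eqref{eq:prodC12C23} occur between two blue (spin-$1/2$) loops, the vertical strands of arbitrary spins $j_1,j_2,j_3$ passing straight through without ever crossing. I would first dispose of the central generators: the diagrams $\A_1,\A_2,\A_3$ each consist of a spin-$1/2$ loop encircling a single strand of definite spin, so by regular isotopy they commute with everything (as noted after \eqref{eq:prodC12C23}) and, by Schur's lemma, evaluate to the scalar obtained from the encirclement, namely the ratio of the Hopf-link value \eqref{eq:EHopf} to the unknot value \eqref{eq:Eunknot}, $[2(2j_i+1)]_q/[2j_i+1]_q=q^{2j_i+1}+q^{-(2j_i+1)}$. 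These scalars, and the central loop $\A_{123}$, can be pulled through the computation, so it suffices to control the noncommuting pair $\A_{12},\A_{23}$.

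Focus on relation \eqref{eq:AW1}, whose heart is the $q$-commutator $q\,\A_{12}\A_{23}-\qi\,\A_{23}\A_{12}$. Using regular isotopy I would bring the two configurations in \eqref{eq:prodC12C23} into a standard form in which the right edge of the $\{1,2\}$-loop and the left edge of the $\{2,3\}$-loop overlap in a neighborhood of strand $2$, meeting in exactly two crossings (one above, one below), the products $\A_{12}\A_{23}$ and $\A_{23}\A_{12}$ being the two global choices of over/under at those crossings. Applying \eqref{eq:skein} at each of the two crossings expresses both products in a common set of smoothed diagrams together with a switch-invariant doubly-crossed remainder; since each application of \eqref{eq:skein} carries weights $q^{\pm 1/2}$, the integer-power combination $q\,\A_{12}\A_{23}-\qi\,\A_{23}\A_{12}$ is exactly what makes the common remainder drop out, leaving a controlled linear combination of the smoothings alone. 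It is here that the $q$-commutator structure of \eqref{eq:AW1}, rather than an ordinary commutator, is forced by the over/under asymmetry of the skein relation.

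The smoothings of these crossings have direct topological meaning: after regular isotopy and the factorization property \eqref{eq:Efact}, they are identified with the loop around strands $1$ and $3$ (the generator $\A_{13}$), with the loop around all three strands carrying an extra small loop around strand $2$ (the product $\A_2\A_{123}$), and with a pair of disjoint single-strand loops (the product $\A_1\A_3$); disconnected circles produced along the way contribute unknot factors \eqref{eq:Eunknot}. Tracking the framing changes incurred at each reconnection through \eqref{eq:ERMI} (with $j=1/2$ for the loops, equivalently the writhe prefactor relating $\E$ to the bracket in \eqref{eq:EVB}) should supply the coefficient $q^2-q^{-2}$ in front of $\A_{13}$ and $q-\qi$ in front of $\A_1\A_3$ and $\A_2\A_{123}$, reproducing \eqref{eq:AW1} under the correspondence \eqref{eq:corres}. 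Relation \eqref{eq:AW2} then follows from the same computation after the cyclic relabeling $(1,2,3)\mapsto(2,3,1)$ of the strands, and \eqref{eq:AW3} either by a further cyclic shift or by inserting \eqref{eq:C130sig}, which writes $\A_{13}=\sigma_1\A_{23}\sigma_1^{-1}=\sigma_2^{-1}\A_{12}\sigma_2$, into an already established relation.

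The main obstacle I anticipate is the exact bookkeeping of the powers of $q$. Each use of \eqref{eq:skein} carries the asymmetric weights $q^{\pm 1/2}$, each reconnection of loop arcs changes the self-framing of a spin-$1/2$ component and hence contributes a factor $q^{\pm 2\cdot\frac12(\frac12+1)}$ through \eqref{eq:ERMI}, and the small circles contribute further unknot factors. Making all of these cancel and combine precisely into the integer-power coefficients $q^2-q^{-2}=(q-\qi)(q+\qi)$ and $q-\qi$ of the Askey--Wilson relations, with no leftover half-integer powers or spurious spin-dependent prefactors, is the delicate part of the argument; the appearance of the common factor $q-\qi$ across all the right-hand terms is a useful consistency check that the framing and skein contributions have been combined correctly.
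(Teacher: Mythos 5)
There is a genuine gap, and it sits exactly at the step you wave through. You treat the oriented skein relation \eqref{eq:skein} and the bracket relation (iv) of Definition \ref{def:VB} as interchangeable tools, but they are not, and neither of them alone does the job in the way you describe. The oriented relation \eqref{eq:skein} is indeed a local relation valid at a crossing of two spin-$1/2$ components inside an arbitrarily colored link, so your ``legitimacy'' argument applies to it --- but it is insufficient: at a crossing of the two blue loops (both oriented, say, counterclockwise in the top view) the oriented smoothing $L_0$ is forced to be the orientation-compatible one, and resolving both crossings this way can only ever produce $\A_2\A_{123}$, the two clasped configurations (the blue loops Hopf-linked around strand $2$), and single blue curves carrying an irremovable curl around strand $2$. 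The diagrams $\A_{13}$ and $\A_1\A_3$ that appear in \eqref{eq:AW1} come from the orientation-\emph{incompatible} smoothings, which \eqref{eq:skein} cannot access; so no elimination among oriented-skein identities yields the Askey--Wilson relation, and the cancellation mechanism you describe (the $q$-commutator killing a ``switch-invariant remainder'') is in fact the mechanism of the \emph{unoriented} bracket computation, where each crossing resolves into both smoothings and it is the second $\A_{13}$-type diagram $\widetilde{\A}_{13}$ (the loop passing on the other side of strand $2$) that drops out of $q\,\A_{12}\A_{23}-\qi\,\A_{23}\A_{12}$.

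If instead you rely on relation (iv), then the step you explicitly discard --- reducing the strands of spins $j_1,j_2,j_3$ to spin-$1/2$ --- is precisely where the paper's proof does its real work. The only bridge between $\E$ and the bracket polynomial available here is Proposition \ref{prop:EVB}, whose hypothesis is that \emph{all} components of the link are in the fundamental representation; it says nothing about a link with blue loops and arbitrary-spin strands. The paper therefore first uses Proposition \ref{prop:Erec} (cabling plus removal of trivial components) to write $\E$ of the product diagrams as a linear combination \eqref{eq:lincombdiag} of all-spin-$1/2$ diagrams, applies (iv) there, and only then faces your ``bookkeeping'' worry: the writhe prefactors $\exp(\pm i\pi w/2)$ of \eqref{eq:EVB} depend on the number $n_2$ of cabled strands replacing strand $2$, and the proof hinges on the observation that the writhes of the four resolved diagrams are $w(L)$ or $w(L)-4n_2$, so all phases reduce to a common factor and cancel; finally the cables are reassembled into the original spins. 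Your proposal omits this entire mechanism, and the half-integer-power anxiety in your last paragraph is resolved by it, not by local use of \eqref{eq:skein}. A smaller point: obtaining \eqref{eq:AW2}--\eqref{eq:AW3} by ``cyclic relabeling $(1,2,3)\mapsto(2,3,1)$'' is not literally available, since the diagrams are not cyclically symmetric (the loop of $\A_{13}$ must dodge strand $2$, unlike $\A_{12}$); the correct implementation is the conjugation by $\sigma_1\sigma_2$ and $(\sigma_1\sigma_2)^{-1}$ that the paper uses, which you do mention as a fallback via \eqref{eq:C130sig}.
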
 
  
Let us consider the product $\A_{12}\A_{23}$. To simplify the drawings in what follows, we will represent the tangle diagrams by viewing them from the top. In this case, the three vertical strands look as three punctures on a plane, and the blue loops enclose them. Hence, in this diagrammatic representation, the product looks like
\begin{equation}
	\A_{12}\A_{23} = \begin{tikzpicture}[scale=1,line width=\lw,baseline={([yshift=-\eseq]current bounding box.center)}]
		\pppunct{0}{0}
		\circpp{0}{0}{decorate}
		\circpplb{\xdp}{0}{decorate}
	\end{tikzpicture} \ . \label{eq:C12C23}
\end{equation}
The three strands (represented by the punctures) are associated to three spins $j_1,j_2$ and $j_3$. One can apply equation \eqref{eq:Erec} recursively to reduce these three spins $j_1$, $j_2$ and $j_3$ to $0$ or $1/2$, at the cost of adding a finite number of parallel strands. Because of the property \eqref{eq:Ereptriv}, the strands with spin $0$ can be removed. Therefore, one can express the product \eqref{eq:C12C23} as a linear combination of diagrams with spin-$1/2$ (blue) components:  
\begin{equation}
	\E\left( 
	\begin{tikzpicture}[scale=1,line width=\lw,baseline={([yshift=-\eseq]current bounding box.center)}]
		\pppunct{0}{0}
		\circpp{0}{0}{decorate}
		\circpplb{\xdp}{0}{decorate}
	\end{tikzpicture} \right) = \sum_{n_1,n_2,n_3} \beta_{n_1,n_2,n_3}  \E\left( 
	\begin{tikzpicture}[scale=1,line width=\lw,baseline={([yshift=-\eseq]current bounding box.center)}]
		\punctni{0}{0.05cm}{1}
		\punctni{\xdp}{0.05cm}{2}
		\punctni{2*\xdp}{0.05cm}{3}
		\circpp{0}{0}{decorate}
		\circpplb{\xdp}{0}{decorate}
	\end{tikzpicture}
	\right) \ . \label{eq:lincombdiag}
\end{equation}
In the previous equation, we have used the symbol $\{{\color{\colfund} \bullet}\}$ to represent a set of punctures all associated to the spin $1/2$, and we have indicated the number of punctures in each set by some non-negative integers $n_i$. The coefficient $\beta_{n_1,n_2,n_3}$ can be computed from \eqref{eq:Erec}, but its exact form will not be relevant. In what follows, for simplicity the integers $n_i$ will not be indicated explicitly on the diagrams anymore.

It is now possible to compute the bracket polynomial $V_B(L;iq^\frac{1}{2})$ of the diagrams which appear on the RHS in \eqref{eq:lincombdiag} by removing the orientation. More specifically, one can use the property (iv) of the bracket polynomial to simplify the crossings of the diagrams. Such a computation can be found in \cite{CFGPRV}, but we reproduce it here:
\begin{align}
	&V_B\left( 
	\begin{tikzpicture}[scale=1,line width=\lw,baseline={([yshift=-\eseq]current bounding box.center)}]
		\pppunctn{0}{0}
		\circpp{0}{0}{}
		\circpplb{\xdp}{0}{}
	\end{tikzpicture} \ ; iq^{\frac{1}{2}} \right) \nonumber \\
	&=  iq^{\frac{1}{2}} V_B\left( 
	\begin{tikzpicture}[scale=1,line width=\lw,baseline={([yshift=-\eseq]current bounding box.center)}]
		\pppunctn{0}{0}
		\draw[\colfund] (\xdp,-\rp) arc (-90:90:{\rp} and {\rpb/2+\rp/2}) arc (90:270:{\rp} and {\rpb}) (\xdp+\rp-2*\sp,-\rp) -- (\xdp+\xdp,-\rp) arc (-90:90:\rp) -- (0,\rp) arc (90:270:\rp) -- (\xdp,-\rp);
	\end{tikzpicture} \ ; iq^{\frac{1}{2}} \right)
	-iq^{-\frac{1}{2}} V_B\left( 
	\begin{tikzpicture}[scale=1,line width=\lw,baseline={([yshift=-\eseq]current bounding box.center)}]
		\pppunctn{0}{0}
		\draw[\colfund] (\xdp-0.1cm,-0.3cm) .. controls +(-0.2cm,0.05cm) and +(0,-0.25cm) .. (\rp,0) arc (0:270:\rp) -- (\rp,-\rp) .. controls +(0.4cm,0) and +(0,-0.3cm) .. (2*\xdp-\rp,0) arc (180:-90:\rp) -- (\xdp+0.1cm,-\rp) ;
	\end{tikzpicture} \ ; iq^{\frac{1}{2}} \right) \\
	&= -q V_B\left( 
	\begin{tikzpicture}[scale=1,line width=\lw,baseline={([yshift=-\eseq]current bounding box.center)}]
		\pppunctn{0}{0}
		\circppu{0}{0}{}
	\end{tikzpicture} \ ; iq^{\frac{1}{2}} \right) 
	+ V_B\left( 
	\begin{tikzpicture}[scale=1,line width=\lw,baseline={([yshift=-\eseq]current bounding box.center)}]
		\pppunctn{0}{0}
		\circps{\xdp}{0}{}
		\circppp{0}{0}{}
	\end{tikzpicture} \ ; iq^{\frac{1}{2}} \right) \label{eq:simplCrossVB}   \\
	&\phantom{=} + V_B\left( 
	\begin{tikzpicture}[scale=1,line width=\lw,baseline={([yshift=-\eseq]current bounding box.center)}]
		\pppunctn{0}{0}
		\circp{0}{0}{}
		\circp{2*\xdp}{0}{}
	\end{tikzpicture} \ ; iq^{\frac{1}{2}} \right)
	- q^{-1} V_B\left( 
	\begin{tikzpicture}[scale=1,line width=\lw,baseline={([yshift=-\eseq]current bounding box.center)}]
		\pppunctn{0}{0}
		\circppd{0}{0}{}
	\end{tikzpicture} \ ; iq^{\frac{1}{2}} \right). \nonumber
\end{align}
Using the relation \eqref{eq:EVB}, one can write the previous equation in terms of the Wilson loops expectation values $\E(L)$ with some exponential phases. Suppose the diagram on the LHS of \eqref{eq:simplCrossVB} has writhe $w(L)$ and $n_2$ punctures in the second set $\{{\color{\colfund} \bullet}\}$. By inserting back the orientations (all anti-clockwise), we find that the writhe numbers of the diagrams in \eqref{eq:simplCrossVB} are either $w(L)$ or $w(L)-4n_2$, depending whether there are two or zero loops which enclose the second set of punctures. Since $n_2$ is an integer, all the exponential phases of the diagrams reduce to the factor $\exp\left(i \pi w(L)/2 \right)$, which can be simplified from the equation. Hence
\begin{align}
	\E\left( 
	\begin{tikzpicture}[scale=1,line width=\lw,baseline={([yshift=-\eseq]current bounding box.center)}]
		\pppunctn{0}{0}
		\circpp{0}{0}{decorate}
		\circpplb{\xdp}{0}{decorate}
	\end{tikzpicture} \right) 
	&= -q  \E\left( 
	\begin{tikzpicture}[scale=1,line width=\lw,baseline={([yshift=-\eseq]current bounding box.center)}]
		\pppunctn{0}{0}
		\circppu{0}{0}{decorate}
	\end{tikzpicture} \right) 
	+ \E\left( 
	\begin{tikzpicture}[scale=1,line width=\lw,baseline={([yshift=-\eseq]current bounding box.center)}]
		\pppunctn{0}{0}
		\circps{\xdp}{0}{decorate}
		\circppp{0}{0}{decorate}
	\end{tikzpicture} \right) \label{eq:EC12C23} \\
	&\phantom{=} + \E\left( 
	\begin{tikzpicture}[scale=1,line width=\lw,baseline={([yshift=-\eseq]current bounding box.center)}]
		\pppunctn{0}{0}
		\circp{0}{0}{decorate}
		\circp{2*\xdp}{0}{decorate}
	\end{tikzpicture} \right)
	- q^{-1} \E\left( 
	\begin{tikzpicture}[scale=1,line width=\lw,baseline={([yshift=-\eseq]current bounding box.center)}]
		\pppunctn{0}{0}
		\circppd{0}{0}{decorate}
	\end{tikzpicture} \right). \nonumber
\end{align}
We can proceed similarly for the product $\A_{23}\A_{12}$ and find
\begin{align}
	\E\left( 
	\begin{tikzpicture}[scale=1,line width=\lw,baseline={([yshift=-\eseq]current bounding box.center)}]
		\pppunctn{0}{0}
		\circpprb{0}{0}{decorate}
		\circpp{\xdp}{0}{decorate}
	\end{tikzpicture} \right)
	&= -q^{-1}  \E\left( 
	\begin{tikzpicture}[scale=1,line width=\lw,baseline={([yshift=-\eseq]current bounding box.center)}]
		\pppunctn{0}{0}
		\circppu{0}{0}{decorate}
	\end{tikzpicture}  \right) 
	+ \E\left( 
	\begin{tikzpicture}[scale=1,line width=\lw,baseline={([yshift=-\eseq]current bounding box.center)}]
		\pppunctn{0}{0}
		\circps{\xdp}{0}{decorate}
		\circppp{0}{0}{decorate}
	\end{tikzpicture}  \right) \label{eq:EC23C12}  \\
	&\phantom{=} + \E\left( 
	\begin{tikzpicture}[scale=1,line width=\lw,baseline={([yshift=-\eseq]current bounding box.center)}]
		\pppunctn{0}{0}
		\circp{0}{0}{decorate}
		\circp{2*\xdp}{0}{decorate}
	\end{tikzpicture} \right)
	- q \E\left( 
	\begin{tikzpicture}[scale=1,line width=\lw,baseline={([yshift=-\eseq]current bounding box.center)}]
		\pppunctn{0}{0}
		\circppd{0}{0}{decorate}
	\end{tikzpicture} \right). \nonumber
\end{align}
Using an argument similar to the one which lead to \eqref{eq:lincombdiag}, each set of punctures $\{{\color{\colfund} \bullet}\}$ can be put back to a single puncture $\bullet$ with spin $j_i$ for $i=1,2,3$ in equations \eqref{eq:EC12C23} and \eqref{eq:EC23C12}. Therefore, these two equations imply
\begin{align}
	&q\E\left( 
	\begin{tikzpicture}[scale=1,line width=\lw,baseline={([yshift=-\eseq]current bounding box.center)}]
		\pppunct{0}{0}
		\circpp{0}{0}{decorate}
		\circpplb{\xdp}{0}{decorate}
	\end{tikzpicture}\right)-q^{-1}\E\left( 
	\begin{tikzpicture}[scale=1,line width=\lw,baseline={([yshift=-\eseq]current bounding box.center)}]
		\pppunct{0}{0}
		\circpprb{0}{0}{decorate}
		\circpp{\xdp}{0}{decorate}
	\end{tikzpicture} \right) +(q^2-q^{-2})  \E\left( 
	\begin{tikzpicture}[scale=1,line width=\lw,baseline={([yshift=-\eseq]current bounding box.center)}]
		\pppunct{0}{0}
		\circppu{0}{0}{decorate}
	\end{tikzpicture} \right) \nonumber  \\
	&= 
	(q-\qi)\left\{ \E\left( 
	\begin{tikzpicture}[scale=1,line width=\lw,baseline={([yshift=-\eseq]current bounding box.center)}]
		\pppunct{0}{0}
		\circps{\xdp}{0}{decorate}
		\circppp{0}{0}{decorate}
	\end{tikzpicture}  \right)  + \E\left( 
	\begin{tikzpicture}[scale=1,line width=\lw,baseline={([yshift=-\eseq]current bounding box.center)}]
		\pppunct{0}{0}
		\circp{0}{0}{decorate}
		\circp{2*\xdp}{0}{decorate}
	\end{tikzpicture} \right)\right\}, \label{eq:EAW1}
\end{align}
which is the defining relation \eqref{eq:AW1} of the special Askey--Wilson algebra. The relations \eqref{eq:AW2} and \eqref{eq:AW3} can be obtained by conjugating the diagrams in \eqref{eq:EAW1} by the braids $\sigma_1\sigma_2$ and $(\sigma_1\sigma_2)^{-1}$ respectively. With a similar method as for \eqref{eq:AW1}, relation \eqref{eq:AW4} can be shown to hold at the level of the bracket polynomial $V_B(L)$ \cite{CFGPRV}. Observing that each index $1,2,3$ of the generators of the special AW algebra is repeated twice in each term of this relation, one deduces again that the exponential factors which appear in \eqref{eq:EVB} all simplify. Therefore, relation \eqref{eq:AW4} also holds for the CS link invariant.    

\subsection{Connection with the Temperley--Lieb algebra}\label{ssec:TL}
It is known that Kauffman's bracket polynomial is connected to the Jones polynomial and to the Temperley--Lieb algebra \cite{Jo85,Kau87,TL}. More recently, it has been shown through the study of a generalization of the Schur--Weyl duality for $U_q(\gsl_2)$ that the Temperley--Lieb algebra is isomorphic to a quotient of the Askey--Wilson algebra \cite{CVZ2}. We offer now an interpretation of this isomorphism in terms of diagrams.

\begin{defi}\cite{TL} 
	The Temperley--Lieb algebra $TL_3(q)$ is generated by $e_1$ and $e_2$ with the following defining relations
	\begin{align}
		&e_1^2=(q+\qi)e_1, \quad e_2^2=(q+\qi)e_2, \label{eq:TL1} \\
		&e_1e_2e_1=e_1, \quad e_2e_1e_2=e_2. \label{eq:TL2} 
	\end{align}
\end{defi}
To establish precisely the connection between the bracket polynomial $V_B(L;x)$ and the Temperley--Lieb algebra $TL_3(q)$, one defines the ``hook'' diagrams (see \cite{Kau87,Kau90})
\begin{equation}
	E_1 =
	\begin{tikzpicture}[scale=1,line width=\lw,baseline={([yshift=-\eseq]current bounding box.center)}]
		\draw[\colfund] (0,0) arc(180:0:\xd/2);
		\draw[\colfund] (0,\yd) arc(180:360:\xd/2);
		\draw[\colfund] \strand{2*\xd}{0};
	\end{tikzpicture} \ , \qquad
	E_2 =
	\begin{tikzpicture}[scale=1,line width=\lw,baseline={([yshift=-\eseq]current bounding box.center)}]
		\draw[\colfund] \strand{0}{0};
		\draw[\colfund] (\xd,0) arc(180:0:\xd/2);
		\draw[\colfund] (\xd,\yd) arc(180:360:\xd/2);
	\end{tikzpicture} \ .
\end{equation}
From the properties (iii) and (iv) of Definition \ref{def:VB}, it is seen that the following equalities hold for the value $V_B(L;x)$ of the illustrated diagrams:  
\begin{equation}
	\begin{tikzpicture}[line width=\lw,baseline={([yshift=-0.1cm] current bounding box.center)}]
		\draw[\colfund] {[rounded corners = 5pt] \ucrossl{0}{0}} (\xd,\yd) arc(0:180:\xd/2) {[rounded corners = 5pt] \ucrossr{\xd}{0}};
		\draw[\colfund] (0,\yd+\yd) arc(180:360:\xd/2);
	\end{tikzpicture} = -x^{3} \
	\begin{tikzpicture}[scale=1,line width=\lw,baseline={([yshift=-\eseq]current bounding box.center)}]
		\draw[\colfund] (0,0) arc(180:0:\xd/2);
		\draw[\colfund] (0,\yd+\yd) arc(180:360:\xd/2);
	\end{tikzpicture} =  x \ 
	\begin{tikzpicture}[scale=1,line width=\lw,baseline={([yshift=-\eseq]current bounding box.center)}]
		\draw[\colfund] (0,0) arc(180:0:\xd/2);
		\draw[\colfund] (0,\yd) arc(180:360:\xd/2);
		\draw[\colfund] (0,\yd) arc(180:0:\xd/2);
		\draw[\colfund] (0,\yd+\yd) arc(180:360:\xd/2);
	\end{tikzpicture} \ + \ x^{-1} \
	\begin{tikzpicture}[scale=1,line width=\lw,baseline={([yshift=-\eseq]current bounding box.center)}]
		\draw[\colfund] (0,0) -- (0,\yd) arc(180:0:\xd/2) -- (\xd,0);
		\draw[\colfund] (0,\yd+\yd) arc(180:360:\xd/2);
	\end{tikzpicture}  \ . \label{eq:VBTL1} 
\end{equation}
Moreover, the regular isotopy invariance property (i) of Definition \ref{def:VB} implies for the value of $V_B(L;x)$
\begin{equation}
	\begin{tikzpicture}[line width=\lw,baseline={([yshift=-0.1cm] current bounding box.center)}]
		\draw[\colfund] (0,0) arc(180:0:\xd/2);
		\draw[\colfund] \strand{2*\xd}{0} arc(0:180:\xd/2) arc(0:-180:\xd/2) -- ++\strand{0}{0} arc(180:0:\xd/2) arc(180:360:\xd/2) -- ++\strand{0}{0};
		\draw[\colfund] (0,\yd+\yd+\yd) arc(180:360:\xd/2);
	\end{tikzpicture} \ = \
	\begin{tikzpicture}[line width=\lw,baseline={([yshift=-0.1cm] current bounding box.center)}]
		\draw[\colfund] (0,0) arc(180:0:\xd/2);
		\draw[\colfund] (0,\yd+\yd+\yd) arc(180:360:\xd/2);
		\draw[\colfund] \strand{2*\xd}{0}  -- ++\strand{0}{0} -- ++\strand{0}{0};
	\end{tikzpicture} \ , \qquad
	\begin{tikzpicture}[line width=\lw,baseline={([yshift=-0.1cm] current bounding box.center)}]
		\draw[\colfund] \strand{0}{0} arc(180:0:\xd/2) arc(-180:0:\xd/2) -- ++\strand{0}{0} arc(0:180:\xd/2) arc(360:180:\xd/2) -- ++\strand{0}{0};
		\draw[\colfund] (\xd,0) arc(180:0:\xd/2);
		\draw[\colfund] (\xd,\yd+\yd+\yd) arc(180:360:\xd/2);
	\end{tikzpicture} \ = \
	\begin{tikzpicture}[line width=\lw,baseline={([yshift=-0.1cm] current bounding box.center)}]
		\draw[\colfund] \strand{0}{0}  -- ++\strand{0}{0} -- ++\strand{0}{0};
		\draw[\colfund] (\xd,0) arc(180:0:\xd/2);
		\draw[\colfund] (\xd,\yd+\yd+\yd) arc(180:360:\xd/2);
	\end{tikzpicture} \ . \label{eq:VBTL2}
\end{equation}
Therefore, one deduces from equations \eqref{eq:VBTL1} and \eqref{eq:VBTL2} that the bracket polynomial $V_B(L;x=iq^{\frac{1}{2}})$ of the diagrams $E_1,E_2$ satisfy the defining relations \eqref{eq:TL1} and \eqref{eq:TL2} of $TL_3(q)$, respectively with $e_i \mapsto E_i$, for $i=1,2$.
 
In \cite{CVZ2}, the centralizer of the diagonal action of the algebra $U_q(\gsl_2)$ in the tensor product of three finite irreducible representations of $U_q(\gsl_2)$ of spins $j_1,j_2,j_3$ is conjectured to be isomorphic to a quotient of the special Askey--Wilson algebra $\saw(3)$. The conjecture is proven for $j_1=j_2=j_3=1/2$, in which case the centralizer is known to be isomorphic to the Temperley--Lieb algebra $TL_3(q)$. As a result, an explicit isomorphism between a quotient of $\saw(3)$ and $TL_3(q)$ was obtained in \cite{CVZ2} (see therein Theorem 6.2). In the notations of the present paper, this isomorphism maps the (quotiented) generators of $\saw(3)$ to elements of $TL_3(q)$ as follows
\begin{align}
	\C_{12} \mapsto (q^3+q^{-3}) - (q-\qi)^2e_1, \label{eq:isoAWTL1} \\
	\C_{23} \mapsto (q^3+q^{-3}) - (q-\qi)^2e_2. \label{eq:isoAWTL2}
\end{align}    

We have shown that, in the CS theory, the tangle diagrams $\A_{12}$ and $\A_{23}$ defined in \eqref{eq:Cij} are in correspondence with the generators $\C_{12}$ and $\C_{23}$ of the AW algebra. In the case where the three vertical strands of these tangle diagrams have spins $j_1=j_2=j_3=1/2$, we can remove the orientations and compute the bracket polynomial, which is related to the CS expectation value as stated in Proposition \ref{prop:EVB}. It is straightforward to use the properties of $V_B(L;iq^{\frac{1}{2}})$ to find
\begin{equation}
	V_B\left(
	\begin{tikzpicture}[scale=1,line width=\lw,baseline={([yshift=-\eseq]current bounding box.center)}]
		\draw[\colfund] \cstrand{0}{0};
		\draw[\colfund] \cstrand{\xd}{0};
		\circss{0}{\ly+0.05cm}{}
	\end{tikzpicture} \ ; iq^{\frac{1}{2}}
	\right) =
	(q^3+q^{-3}) V_B\left( \ 
	\begin{tikzpicture}[scale=1,line width=\lw,baseline={([yshift=-\eseq]current bounding box.center)}]
		\draw[\colfund] \strand{0}{0};
		\draw[\colfund] \strand{\xd}{0};
	\end{tikzpicture} \ ; iq^{\frac{1}{2}}
	\right) -(q-\qi)^2 V_B\left( \ 
	\begin{tikzpicture}[scale=1,line width=\lw,baseline={([yshift=-\eseq]current bounding box.center)}]
		\draw[\colfund] (0,0) arc(180:0:\xd/2);
		\draw[\colfund] (0,\yd) arc(180:360:\xd/2);
	\end{tikzpicture} \ ; iq^{\frac{1}{2}} \right). \label{eq:VBTLAW}
\end{equation}
By adding a third vertical strand on the right (resp. left) of the diagrams which appear in \eqref{eq:VBTLAW}, we recognize the tangle diagram $\A_{12}$ (resp. $\A_{23}$) on the LHS and the Temperley--Lieb hook diagram $E_1$ (resp. $E_2$) in the second term of the RHS. Therefore, comparing with \eqref{eq:isoAWTL1} and \eqref{eq:isoAWTL2}, equation \eqref{eq:VBTLAW} can be seen as a diagrammatic representation of the isomorphism between the quotient of the AW algebra and the TL algebra established in \cite{CVZ2}.
	
\section{Askey--Wilson algebra in the Reshetikhin--Turaev link invariant construction}\label{sec:AWRT}       
In this section, we will consider the quantum algebra $\Usl$, a $q$-deformation of the universal enveloping algebra of $\su_2$, and we will recall how its universal $R$-matrix can be used to construct a link invariant. We will then show that the tangle diagrams \eqref{eq:Ci}--\eqref{eq:C123} also lead to the Askey--Wilson algebra relations in this framework. It will be assumed throughout this section that $q$ is a generic complex number.      

\subsection{$U_q(\mathfrak{su}_2)$ algebra} \label{ssec:Uqsu2}

The quantum algebra $\Usl$ is the associative algebra generated by $E$, $F$ and $q^{H}$ with the defining relations
\begin{equation}
	q^{H}E=q  Eq^H, \quad q^{H}F=q^{-1} Fq^H, \quad [E,F]=[2H]_q, \label{eq:Uqsl2}
\end{equation}
where $[X,Y]=XY-YX$. For future convenience, we define
\begin{equation}
	\mu:=q^{2H}. \label{eq:mu}
\end{equation}
The following Casimir element generates the center of $\Usl$:
\begin{equation}
	\Q=(q-\qi)^2 FE+q^{2H+1}+ q^{-2H-1}. \label{eq:Casimir}
\end{equation} 

We now describe some relevant features of the quasitriangular Hopf algebra structure of $\Usl$. The comultiplication, or coproduct, is an algebra homomorphism $\Delta:\Usl \rightarrow \Usl \otimes \Usl $ which is defined on the generators by 
\begin{equation}
	\Delta(E)=E \otimes q^{-H} +q^H \otimes E , \quad  \Delta(F)=F \otimes q^{-H} +q^{H} \otimes F, \quad \Delta(q^H)=q^H\otimes q^H, \label{eq:comult}
\end{equation}
and which is coassociative 
\begin{equation}
	(\Delta \otimes \id)\circ\Delta= (\id \otimes \Delta)\circ\Delta. \label{eq:coasso}
\end{equation}
The antipode is an algebra anti-automorphism $S:\Usl \to \Usl$ defined by
\begin{equation}
	S(E)=-\qi E, \quad  S(F)=-qF, \quad S(q^H)=q^{-H}. \label{eq:antipode}
\end{equation}
The universal $R$-matrix of $\Usl$ is an invertible element $\cR\in \Usl \otimes \Usl$ which satisfies
\begin{equation}
	\Delta^{op}(x)=\cR \Delta(x) \cR^{-1} \quad \forall x\in \Usl. \label{eq:RD}
\end{equation}
In the previous equation, $\Delta^{op}$ is the opposite comultiplication and is defined by $\Delta^{op}=\tau \circ \Delta$, where $\tau(x \otimes y)=y \otimes x$ for $x,y\in \Usl$. The universal $R$-matrix also satisfies
\begin{equation}
	(\id \otimes \Delta)(\cR)=\cR_{13}\cR_{12}, \qquad (\Delta \otimes \id)(\cR)=\cR_{13}\cR_{23}, \label{eq:idDR}
\end{equation}
and the Yang--Baxter equation
\begin{equation}
	\cR_{12}\cR_{13}\cR_{23}=\cR_{23}\cR_{13}\cR_{12}. \label{eq:YBE}
\end{equation}
We have used standard notations: if $\cR=\cR^\alpha\otimes \cR_\alpha$, then $\cR_{12}=\cR \otimes 1 = \cR^\alpha\otimes \cR_\alpha \otimes 1$, $\cR_{23}=1 \otimes \cR = 1\otimes \cR^\alpha\otimes \cR_\alpha$ and $\cR_{13}=(\id\otimes \tau)(\cR_{12})=\cR^\alpha\otimes1\otimes  \cR_\alpha$ (the sum w.r.t. $\alpha$ is understood). We will also denote $\cR_{21}=\tau(\cR)=\cR_\alpha\otimes \cR^\alpha$. The universal $R$-matrix of $\Usl$ is explicitly given by \cite{Dr}
\begin{equation}
	\cR= \sum_{k=0}^\infty \frac{(q-q^{-1})^k}{[k]_q!} q^{-k(k+1)/2} (  F \otimes E )^k  (q^{kH}\otimes q^{-kH}) q ^{2(H\otimes H)}, \label{eq:uR}
\end{equation}
where $[n]_q!:=[n]_q[n-1]_q\dots [2]_q[1]_q$ and, by convention, $[0]_q!:=1$. The inverse is given by
\begin{equation}
	\cR^{-1}=(S \otimes \id)(\cR)=(\id \otimes S^{-1})(\cR). \label{eq:uRi}
\end{equation}

\subsection{Representations of $U_q(\mathfrak{su}_2)$} \label{ssec:RepU}

Like the Lie algebra $\su_2$, the algebra $\Usl$ has an irreducible spin-$j$ representation $V_j$ of finite dimension $2j+1$ for each $j=0,\frac{1}{2},1,...$ The basis vectors of $V_j$ will be denoted by $\ket{j,m}$, and the dual basis vectors by $\bra{j,m}$, for $m=-j,-j+1,...,j$. We have the following orthogonality and completeness relations:
\begin{equation}
	\braket{j,m|j,n} = \delta_{mn}, \qquad \sum_{m=-j}^j \ket{j,m}\bra{j,m} = \id. \label{eq:orthcompl}
\end{equation}
The representation of $\Usl$ on the basis vectors $\ket{j,m}$, for $m=-j,...,j$, is given by
\begin{align}
	&E\ket{j,m} = [j-m]_q \ket{j,m+1}, \label{eq:actE} \\
	&F\ket{j,m} = [j+m]_q \ket{j,m-1}, \label{eq:actF} \\
	&H\ket{j,m} = m \ket{j,m}. \label{eq:actH}
\end{align}
One deduces from \eqref{eq:actE}--\eqref{eq:actH} the constant action of the Casimir element \eqref{eq:Casimir} on $V_j$, for $m=-j,...,j$:  
\begin{equation}
	\Q\ket{j,m} = \chi_j \ket{j,m}, \qquad \text{where } \chi_j:=q^{2j+1}+q^{-2j-1}.
\end{equation}
As for the Lie algebra $\su_2$, the tensor product decomposition rule for two irreducible representations of spins $j_1$ and $j_2$ of $\Usl$ is $V_{j_1} \otimes V_{j_2} = \bigoplus_{j=|j_1-j_2|}^{j_1+j_2} V_j$. We will denote the tensor product of two basis vectors as $\ket{j_1,m_1} \otimes \ket{j_2,m_2} = \ket{j_1,m_1;j_2,m_2}$.

The universal $R$-matrix \eqref{eq:uR} with the first space in the fundamental (spin-$1/2$) representation of $\Usl$ is given by the following two by two matrix:
\begin{equation}\label{eq:Lm}
 L^-_{12}=\begin{pmatrix}
 q^H & 0\\
 \frac{q-q^{-1}}{q^{1/2}} E & q^{-H}
     \end{pmatrix}.
\end{equation}
Similarly, the element $\cR_{21}$ with the first space in the spin-$1/2$ representation is given by
\begin{equation}\label{eq:Lp}
 L^+_{12}=\begin{pmatrix}
 q^H &  \frac{q-q^{-1}}{q^{1/2}} F\\
 0 & q^{-H}
     \end{pmatrix}.
\end{equation}
Then, one can show that the universal Yang--Baxter equation \eqref{eq:YBE} with two spaces in the spin-$1/2$ representation leads to the FRT relations \cite{FRT}
\begin{eqnarray}
 &&R_{12} L^-_{13}  L^-_{23} = L^-_{23}  L^-_{13} R_{12}, \label{eq:FRT1} \\
 &&R_{12} L^+_{23}  L^+_{13} = L^+_{13}  L^+_{23} R_{12}, \label{eq:FRT2} \\
 && L^-_{13} R_{12} L^+_{23} = L^+_{23} R_{12} L^-_{13}, \label{eq:FRT3} 
\end{eqnarray}
where $R$, called the $R$-matrix, is the spin-$1/2$ representation of the universal $R$-matrix in both spaces:
\begin{equation}
 R=q^{1/2}\begin{pmatrix}
    1 & 0 & 0& 0\\
    0 & q^{-1} & 0& 0\\
    0 &1-q^{-2} & q^{-1} & 0\\
    0 &0 &0 & 1
   \end{pmatrix}.
\end{equation}
The three relations \eqref{eq:FRT1}--\eqref{eq:FRT3} are equivalent to the defining relations \eqref{eq:Uqsl2} of $\Usl$. 

We obtain the following relations by representing the Yang--Baxter equation \eqref{eq:YBE}  in one space
\begin{eqnarray}
	&&\cR_{23} L^-_{13}  L^-_{12} = L^-_{12}  L^-_{13} \cR_{23},\label{eq:RLL4} \\
	&&\cR_{12} L^+_{31}  L^+_{32} = L^+_{32}  L^+_{31} \cR_{12}, \label{eq:RLL5} \\
	&& L^+_{21} \cR_{13} L^-_{23} = L^-_{23} \cR_{13} L^+_{21}.\label{eq:RLL6} 
\end{eqnarray}

The coproduct can also be defined in this formalism. Indeed, one gets
\begin{equation}
 \Delta(L^+_{12})=L^+_{12}L^+_{13}, \quad \Delta(L^-_{12})=L^-_{13}L^-_{12}, \label{eq:idDL}
\end{equation}
where it is understood that $\Delta$ acts on the space which is not represented.

We will also denote by $M$ the spin-$1/2$ matrix representation of the element $\mu$ defined in \eqref{eq:mu}, given by
\begin{equation}
	M = \begin{pmatrix}
		q & 0 \\
		0 & \qi
	\end{pmatrix}. \label{eq:repmu}
\end{equation}

\subsection{Trace and link invariants}\label{ssec:TrLinkInv}

Let $n$ be a positive integer, and denote the tensor product of $\Usl$-representations of spins $j_1,...,j_n$ by
\begin{equation}
	V = V_{j_1} \otimes V_{j_2} \otimes ... \otimes V_{j_n}.
\end{equation}
This vector space has dimension $(2j_1+1)\times ... \times (2j_n+1)$ and basis vectors $\ket{j_1,m_1;...;j_n,m_n}$, where $m_i=-j_1,...,j_i$ for all $i=1,...,n$. We will now provide an action of the braid group $B_n$ on $V$ and define a trace which is a link invariant. The construction follows \cite{Resh,Tur}.

For $i=1,2,...,n-1$, we define the permutation operator $\Pi_{i,i+1}$ which exchanges the vectors in the positions $i$ and $i+1$ of the $n$-fold tensor product space $V$, that is  
\begin{equation}
	\Pi_{i,i+1} \ket{...;j_i,m_i;j_{i+1},m_{i+1};...} = \ket{...;j_{i+1},m_{i+1};j_i,m_i;...}. \label{eq:actperm}
\end{equation}
Using this permutation operator, we can define for $i=1,2,...,n-1$ the following universal braided $R$-matrix:
\begin{equation}
	\check \cR_{i,i+1} = \Pi_{i,i+1} \cR_{i,i+1}. \label{eq:cR}
\end{equation}
The action of the operator \eqref{eq:cR} on $V$ is obtained from the explicit expression \eqref{eq:uR} of the universal $R$-matrix, from the actions \eqref{eq:actE}--\eqref{eq:actH} of $\Usl$ and from the action \eqref{eq:actperm} of the permutation operator. The operator \eqref{eq:cR} has an inverse on $V$ which is given by
\begin{equation}
	\check \cR_{i,i+1}^{-1} = \cR_{i,i+1}^{-1}\Pi_{i,i+1}. \label{eq:cRi}
\end{equation}
Since $\check \cR_{i,i+1}$ acts only on the factors $i$ and $i+1$ of the $n$-fold tensor product space $V$, it is seen that the following equality of operators acting on $V$ holds
\begin{equation}
	\check \cR_{i,i+1} \check \cR_{j,j+1} = \check \cR_{j,j+1} \check \cR_{i,i+1} \quad \text{if } |i-j|>1.
\end{equation}
Moreover, using the Yang--Baxter equation \eqref{eq:YBE} and the action \eqref{eq:actperm} of the permutation operator, one can show that the following braid relation holds on $V$
\begin{equation}
	\check \cR_{i,i+1} \check \cR_{i+1,i+2} \check \cR_{i,i+1} = \check \cR_{i+1,i+2} \check \cR_{i,i+1} \check \cR_{i+1,i+2}.
\end{equation}
Therefore, the following defines an action of the braid group $B_n$ on $V$:
\begin{equation}
	\sigma_i^{\pm1} \ket{j_1,m_1;...;j_n,m_n} = \check\cR_{i,i+1}^{\pm 1} \ket{j_1,m_1;...;j_n,m_n}. \label{eq:actbraid}
\end{equation}
From \eqref{eq:actbraid}, it is seen that we can use interchangeably the braid group generators $\sigma_i$ and their corresponding universal braided $R$-matrices $\check\cR_{i,i+1}$ when acting on $V$. 

If $X$ is an operator which admits an action on $V$, we define its trace on $V$ by
\begin{equation}
	\Tr^{(j_1,...,j_n)}(X) := \sum_{m_1=-j_1}^{j_1}... \sum_{m_n=-j_n}^{j_n} \bra{j_1,m_1;...;j_n,m_n} X \ket{j_1,m_1;...;j_n,m_n}. 
\end{equation}
We also define the partial trace of $X$ on the first tensor factor of $V$, denoted $\Tr_1^{(j_1)}(X)$, as the operator acting on $V_{j_2} \otimes V_{j_3} \otimes ... \otimes V_{j_n}$ which satisfies for all $m_i,\ell_i=-j_i,...,j_i$, with $i=2,...,n$, the following relation 
\begin{align}
	&\bra{j_2,m_2;...;j_n,m_n}\Tr_1^{(j_1)}(X) \ket{j_2,\ell_2;...;j_n,\ell_n} \nonumber \\
	= &\sum_{m_1=-j_1}^{j_1} \bra{j_1,m_1;j_2,m_2...;j_n,m_n} X \ket{j_1,m_1;j_2,\ell_2;...;j_n,\ell_n}.
\end{align}
The partial trace of $X$ on the $n^\text{th}$ tensor factor of $V$ will be understood similarly and denoted $\Tr_n^{(j_n)}(X)$. Let us recall the following proposition which can be found in \cite{Resh,Tur}.
\begin{prop}\label{prop:TrEnh}
	The following equality of operators acting on the tensor product of any two spin representations of $\Usl$ holds 
	\begin{equation}
		\check \cR (\mu \otimes \mu) = (\mu \otimes \mu)  \check \cR. \label{eq:enhRcommu}
	\end{equation}
	 Moreover, we have the following equalities of operators acting on any spin-$j$ representation of $\Usl$
	\begin{align}
		&\Tr_1^{(j)}(\check \cR^{\pm1}(\mu \otimes 1))=q^{\pm 2 j(j+1)} \id, \label{eq:Trleft}\\
		&\Tr_2^{(j)}(\check \cR^{\pm1}(1 \otimes \mu^{-1}))=q^{\pm 2 j(j+1)} \id. \label{eq:Trright}
	\end{align}
\end{prop}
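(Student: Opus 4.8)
The plan is to dispatch the commutation relation \eqref{eq:enhRcommu} first, and then reduce each of the trace formulas \eqref{eq:Trleft}--\eqref{eq:Trright} to a one-line evaluation on an extremal-weight vector. For \eqref{eq:enhRcommu}, I would note that $\mu=q^{2H}$ is grouplike: from \eqref{eq:comult} one has $\Delta(\mu)=\mu\otimes\mu=\Delta^{op}(\mu)$. Feeding $x=\mu$ into the quasitriangularity relation \eqref{eq:RD} then gives $\cR(\mu\otimes\mu)=(\mu\otimes\mu)\cR$, and since $\Pi$ visibly commutes with the symmetric operator $\mu\otimes\mu$, multiplying by $\Pi$ on the left yields $\check{\cR}(\mu\otimes\mu)=(\mu\otimes\mu)\check{\cR}$ on the tensor product of any two spin representations.

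For \eqref{eq:Trleft}, the first step is to rewrite the partial trace as the spin-$j$ representation of a single universal element. Writing $\cR=\sum_\alpha \cR^\alpha\otimes\cR_\alpha$ and $\check{\cR}=\Pi\cR$, a short computation using the completeness relation \eqref{eq:orthcompl} collapses the partial trace over the first factor into a product in a single copy of $V_j$, giving $\Tr_1^{(j)}\big(\check{\cR}(\mu\otimes1)\big)=\pi_j\big(\sum_\alpha \cR^\alpha\,\mu\,\cR_\alpha\big)$, where $\pi_j$ denotes the spin-$j$ representation. The second step is to argue that this operator is scalar: since $\cR$ intertwines $\Delta$ and $\Delta^{op}$, the braided $R$-matrix \eqref{eq:cR} commutes with the diagonal $\Usl$-action, and \eqref{eq:enhRcommu} is exactly the compatibility of the weight $\mu$ with this structure needed to conclude, via Schur's lemma, that the partial quantum trace is a $\Usl$-intertwiner $V_j\to V_j$, hence a multiple of the identity.

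The third step evaluates this scalar on the highest-weight vector $\ket{j,j}$ using the explicit $R$-matrix \eqref{eq:uR}. In $\sum_\alpha \cR^\alpha\mu\cR_\alpha$ the factor $\cR_\alpha$ acts first and carries $E^k$; since $E\ket{j,j}=0$ by \eqref{eq:actE}, only the $k=0$ term survives, reducing $\cR$ to its Cartan part $q^{2(H\otimes H)}=\sum_n \tfrac{(2\ln q)^n}{n!}H^n\otimes H^n$. Evaluating $H^n\mu H^n$ on $\ket{j,j}$ via \eqref{eq:actH} together with $\mu\ket{j,j}=q^{2j}\ket{j,j}$ produces the Gaussian sum $q^{2j}\sum_n \tfrac{(2\ln q)^n}{n!}j^{2n}=q^{2j}q^{2j^2}=q^{2j(j+1)}$, which is \eqref{eq:Trleft} for the $+$ sign. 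The $\check{\cR}^{-1}$ case follows identically after replacing $\cR$ by $\cR^{-1}=(S\otimes\id)(\cR)$ from \eqref{eq:uRi}: the antipode \eqref{eq:antipode} turns the Cartan part into $q^{-2(H\otimes H)}$, yielding $q^{-2j(j+1)}$. The right-trace identities \eqref{eq:Trright} are handled symmetrically: the analogous collapse gives $\pi_j\big(\sum_\alpha \cR_\alpha\,\mu^{-1}\,\cR^\alpha\big)$, now with $\cR^\alpha$ (carrying $F^k$) acting first, so the lowest-weight vector $\ket{j,-j}$ selects $k=0$ and the same resummation returns $q^{\pm2j(j+1)}$.

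The main obstacle is the middle step -- rigorously justifying that the partial trace is scalar and handling the formal infinite sum in \eqref{eq:uR}. One must check that the collapse to $\pi_j\big(\sum_\alpha\cR^\alpha\mu\cR_\alpha\big)$ is legitimate term by term (the Cartan part $q^{2(H\otimes H)}$ is genuinely entangled and contributes an infinite series), and that the Schur-lemma argument is applied with the correct pivotal weight. The bookkeeping of which extremal-weight vector isolates the $k=0$ term, and of the sign carried by $S$ for the inverse $R$-matrix, is where care is needed, though each individual calculation is short.
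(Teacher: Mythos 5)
Your treatment of \eqref{eq:enhRcommu} coincides with the paper's (grouplike property of $\mu$, the intertwining relation \eqref{eq:RD}, and commutation with $\Pi$). For the trace identities you take a genuinely different route: the paper evaluates every matrix element $\bra{j,m}\Tr_1^{(j)}(\check\cR(\mu\otimes 1))\ket{j,n}$ directly from \eqref{eq:uR}, reduces the sum to a terminating ${}_2\phi_1$ series and applies a $q$-Chu--Vandermonde summation to obtain $q^{2j(j+1)}\delta_{mn}$, with no appeal to centrality; you instead collapse the partial trace to the one-space operator $\sum_\alpha\cR^\alpha\mu\,\cR_\alpha$ (this collapse is correct), argue that it is scalar, and evaluate on an extremal weight vector (also correct for the $+$ cases: highest weight for \eqref{eq:Trleft}, lowest weight for \eqref{eq:Trright}). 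This is exactly the shortcut the paper itself acknowledges in the closing remark of Appendix \ref{app:proofPropTrEnh}: if centrality of the quantum partial trace is taken as known, evaluation at $m=n=j$ suffices.

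The gap is in your justification of scalarness, which you yourself flag as the main obstacle. You assert that \eqref{eq:enhRcommu}, combined with the fact that $\check\cR$ commutes with the diagonal action, yields ``via Schur's lemma'' that the partial trace is an intertwiner. Schur's lemma applies only once the operator is known to commute with the $\Usl$-action on the surviving factor, and that statement is a separate, nontrivial lemma about quantum partial traces: it requires the inserted grouplike element to implement the square of the antipode, $S^{2}(x)=\mu^{-1}x\mu$, a property of $\mu=q^{2H}$ that your argument never invokes. The commutation \eqref{eq:enhRcommu} alone cannot suffice, since it holds verbatim with $\mu$ replaced by any grouplike element, e.g.\ by $1$; yet the unweighted trace, computed as in Proposition \ref{prop:TrC}, is $\Tr_1^{(\frac12)}(\cR_{21}\cR_{12})=\qi\,\Q+\qi(q-\qi)\,q^{-2H}$, which is not central --- this failure is precisely why the enhancement $\mu$ is needed at all. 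So you must either prove the intertwining lemma (a short computation using the antipode axioms and $S^2(x)=\mu^{-1}x\mu$) or cite it as known, as the paper's remark does. A second, smaller slip: the $\check\cR^{-1}$ case of \eqref{eq:Trleft} does not follow ``identically'' on $\ket{j,j}$. Since $\check\cR^{-1}(\mu\otimes 1)=\cR^{-1}(1\otimes\mu)\Pi$, the collapsed product is $\sum_\alpha\bar{\cR}_\alpha\mu\,\bar{\cR}^\alpha$ with the first-factor components $\bar{\cR}^\alpha=S(\cR^\alpha)$, which carry $F^k$, now acting first; the highest-weight vector does not annihilate these terms, and one must evaluate on $\ket{j,-j}$ instead (and dually for \eqref{eq:Trright}). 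The paper sidesteps this bookkeeping entirely by deducing the negative-exponent identities from the positive ones through the substitution $q\to\qi$.
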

\begin{proof}
	See Appendix \ref{app:proofPropTrEnh}.
\end{proof}
\begin{rem}
	The element $\mu^{-1}$ with the commutativity property \eqref{eq:enhRcommu} and the partial trace property \eqref{eq:Trright} is called an ``enhancement'' in \cite{Tur}. The $R$-matrix together with such an enhancement is said to be an ``enhanced" Yang--Baxter operator.   
\end{rem}

Let $L$ be a link with components each associated to a spin. This link $L$ can be obtained as the closure of a braid $\sigma(L) \in B_n$ with $n$ strands labeled by a set of spins $\{j_i\}_{i=0}^n$, as illustrated in Figure \ref{fig:closurebraid}.
\begin{figure}[h]
	\begin{tikzpicture}[scale=1,line width=\lw, decoration={markings, mark=at position 0.46 with {\arrow{>}}}, baseline={([yshift=-\eseq]current bounding box.center)}]
		\draw[->] (0,0) node{\footnotesize $j_1$} ++(0,0.2cm) --  ++(0,0.8cm)  ++(0,0.7cm) -- ++(0,0.8cm);
		\node at (0,2.7cm) {\footnotesize $j_1$};
		\draw[->] (\xd,0) node{\footnotesize $j_2$} ++(0,0.2cm) -- ++(0,0.8cm)  ++(0,0.7cm) -- ++(0,0.8cm);
		\node at (\xd,2.7cm) {\footnotesize $j_2$};
		\node at (\xd+0.5cm,0.3cm) {...};
		\node at (\xd+0.5cm,2.2cm) {...};
		\draw[->] (2*\xd+0.3cm,0) node{\footnotesize $j_n$} ++(0,0.2cm) -- ++(0,0.8cm) ++(0,0.7cm) -- ++(0,0.8cm);
		\node at (2*\xd+0.3cm,2.7cm) {\footnotesize $j_n$};
		\draw (-0.2cm,1cm) rectangle +(2*\xd+0.7cm,0.7cm);
		\node at (\xd+0.2cm,1.35cm) {$\sigma(L)$};
	\end{tikzpicture} 
	$ \quad \longrightarrow \quad $  
	\begin{tikzpicture}[scale=1,line width=\lw, baseline={([yshift=-\eseq]current bounding box.center)}]
		\draw[decoration={markings, mark=at position 0.28 with {\arrow{>}}},postaction={decorate}] (0,0) --  ++(0,0.8cm)  ++(0,0.7cm) -- ++(0,0.8cm) arc(0:180:\ra) -- (-2*\ra,0) arc(180:360:\ra) -- (0,0);
		\draw[decoration={markings, mark=at position 0.174 with {\arrow{>}}},postaction={decorate}] (\xd,0) -- ++(0,0.8cm)  ++(0,0.7cm) -- ++(0,0.8cm) arc(0:180:\ra+\xd/2+0.2cm) -- (-2*\ra-0.4cm,0) arc(180:360:\ra+\xd/2+0.2cm) -- (\xd,0cm);
		\node at (\xd+0.5cm,0.3cm) {...};
		\node at (\xd+0.5cm,2.2cm) {...};
		\draw[decoration={markings, mark=at position 0.117 with {\arrow{>}}},postaction={decorate}] (2*\xd+0.3cm,0) -- ++(0,0.8cm) ++(0,0.7cm) -- ++(0,0.8cm) arc(0:180:\ra+\xd+0.55cm) -- (-2*\ra-0.8cm,0) arc(180:360:\ra+\xd+0.55cm) -- (2*\xd+0.3cm,0cm);
		\draw (-0.2cm,0.8cm) rectangle +(2*\xd+0.7cm,0.7cm);
		\node at (\xd+0.2cm,1.15cm) {$\sigma(L)$};
	\end{tikzpicture}
	\centering
	\caption{Closure of a braid $\sigma(L)$ representing a colored link $L$.}
	\label{fig:closurebraid}
\end{figure}
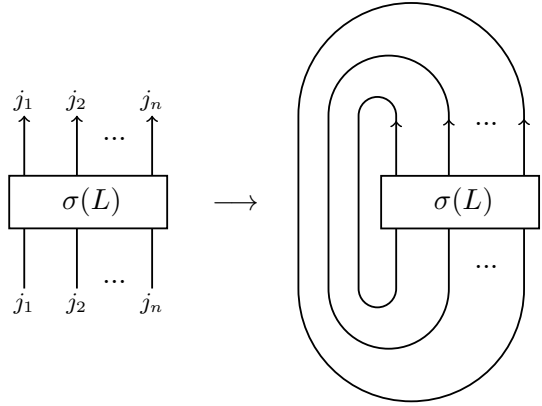
Note that the spins of the top endpoints of the braid must be equal to those of the bottom endpoints in order for the closure to make sense. In other words, the braid $\sigma(L)$ viewed as a permutation must leave the ordered set $\{j_i\}_{i=0}^n$ invariant. 

We can assign to the link $L$ the value of the ``quantum'' trace of the braid $\sigma(L)$ which represents it, defined as
\begin{equation}
	\Y(L):=\Tr^{(j_1,...,j_n)}(\sigma(L)\mu^{\otimes n}). \label{eq:Y}
\end{equation}
A map such as \eqref{eq:Y} defines a regular isotopy invariant for colored and oriented link diagrams. To see this, one can first observe that $\Y(L)$ multiplied by a factor $q^{-2j_i(j_i+1)w(L)}$ for each spin $j_i$ coloring the link $L$ is an ambiant isotopy invariant, as shown in \cite{Resh,Tur}. Indeed, it is known that two braids represent ambiant isotopic links if and only if they are related by Markov moves \cite{Birm}. The invariance under the first and second Markov moves of the renormalized value of $\Y(L)$ can be deduced respectively from \eqref{eq:enhRcommu}, which expresses the commutativity of the action of the braid group $B_n$ with $\mu^{\otimes n}$, and from the partial trace property \eqref{eq:Trleft}, which is understood diagrammatically as follows
\begin{align}
	\begin{tikzpicture}[scale=1,line width=\lw, decoration={markings, mark=at position 0.34 with {\arrow{>}}}, baseline={([yshift=-\eseq]current bounding box.center)}]
		\draw[postaction={decorate}] {[rounded corners=5pt] \ocrossr{\xd}{0}} (0,\yd) arc(0:180:\ra) -- (-2*\ra,0) arc(180:360:\ra) -- (0,0);
		\draw[rounded corners=5pt,->]  \ocrossl{0}{0};
		\node at (0.15cm,0) {\footnotesize $j$};
		\node at (\xd +0.15cm,0) {\footnotesize $j$};
	\end{tikzpicture} = q^{2j(j+1)}
	\begin{tikzpicture}[scale=1,line width=\lw, decoration={markings, mark=at position 0.34 with {\arrow{>}}}, baseline={([yshift=-\eseq+0.2cm]current bounding box.center)}]
		\draw[->] \strand{0}{0};
		\node at (0,-0.2cm) {\footnotesize $j$};
	\end{tikzpicture} \ ,
	\qquad
	\begin{tikzpicture}[scale=1,line width=\lw, decoration={markings, mark=at position 0.41 with {\arrow{>}}}, baseline={([yshift=-\eseq]current bounding box.center)}]
		\draw[postaction={decorate}] {[rounded corners=5pt] \ucrossr{\xd}{0}} (0,\yd) arc(0:180:\ra) -- (-2*\ra,0) arc(180:360:\ra) -- (0,0);
		\draw[rounded corners=5pt,->]  \ucrossl{0}{0};
		\node at (0.15cm,0) {\footnotesize $j$};
		\node at (\xd +0.15cm,0) {\footnotesize $j$};
	\end{tikzpicture} = q^{-2j(j+1)}
	\begin{tikzpicture}[scale=1,line width=\lw, decoration={markings, mark=at position 0.34 with {\arrow{>}}}, baseline={([yshift=-\eseq+0.2cm]current bounding box.center)}]
		\draw[->] \strand{0}{0};
		\node at (0,-0.2cm) {\footnotesize $j$};
	\end{tikzpicture} \ .
	\label{eq:Markov2left}
\end{align}
Since the writhe $w(L)$ is a regular isotopy invariant, it is then concluded that $\Y(L)$ as defined in \eqref{eq:Y} is also a regular isotopy invariant which satisfies 
\begin{equation}
	\Y(L^{(\pm)}) = q^{\pm 2j(j+1)}\Y(L^{(0)}), \label{eq:YRMI}
\end{equation}
where we recall that the configurations $L^{(\pm)}$ and $L^{(0)}$ are illustrated in Figure \ref{fig:onecompconf} (the components must be oriented downward in order for this figure to fit with our conventions for braids and their closure). Note that we chose to work with the regular isotopy invariant \eqref{eq:Y} instead of its renormalized ambiant isotopy version.

\begin{rem}
	The trace in the RT link invariant construction was originally \cite{Resh,Tur} defined as
	\begin{equation}
		\Tr^{(j_1,...,j_n)}(\sigma(L)(\mu^{-1})^{\otimes n}). \label{eq:Yt}
	\end{equation}
	The difference in this case is that equation \eqref{eq:YRMI} is verified with the help of the partial trace property \eqref{eq:Trright}, which can be expressed diagrammatically as follows
	\begin{align}
		\begin{tikzpicture}[scale=1,line width=\lw, decoration={markings, mark=at position 0.41 with {\arrow{>}}}, baseline={([yshift=-\eseq]current bounding box.center)}]
			\draw[postaction={decorate}] {[rounded corners=5pt] \ocrossl{0}{0}} (\xd,\yd) arc(180:0:\ra) -- (\xd+2*\ra,0) arc(360:180:\ra) -- (\xd,0);
			\draw[rounded corners=5pt,->]  \ocrossr{\xd}{0};
			\node at (-0.15cm,0) {\footnotesize $j$};
			\node at (\xd -0.15cm,0) {\footnotesize $j$};
		\end{tikzpicture} \
		= q^{2j(j+1)}
		\begin{tikzpicture}[scale=1,line width=\lw, decoration={markings, mark=at position 0.34 with {\arrow{>}}}, baseline={([yshift=-\eseq+0.2cm]current bounding box.center)}]
			\draw[->] \strand{0}{0};
			\node at (0,-0.2cm) {\footnotesize $j$};
		\end{tikzpicture} \ ,
		\qquad
		\begin{tikzpicture}[scale=1,line width=\lw, decoration={markings, mark=at position 0.34 with {\arrow{>}}}, baseline={([yshift=-\eseq]current bounding box.center)}]
			\draw[postaction={decorate}] {[rounded corners=5pt] \ucrossl{0}{0}} (\xd,\yd) arc(180:0:\ra) -- (\xd+2*\ra,0) arc(360:180:\ra) -- (\xd,0);
			\draw[rounded corners=5pt,->]  \ucrossr{\xd}{0};
			\node at (-0.15cm,0) {\footnotesize $j$};
			\node at (\xd -0.15cm,0) {\footnotesize $j$};
		\end{tikzpicture} \ 
		= q^{-2j(j+1)}
		\begin{tikzpicture}[scale=1,line width=\lw, decoration={markings, mark=at position 0.34 with {\arrow{>}}}, baseline={([yshift=-\eseq+0.2cm]current bounding box.center)}]
			\draw[->] \strand{0}{0};
			\node at (0,-0.2cm) {\footnotesize $j$};
		\end{tikzpicture} \ .
		\label{eq:Markov2right}
	\end{align}  
	Therefore, the interpretation of \eqref{eq:Yt} is simply that the braid $\sigma(L)$ is closed on the right (as is usually done by convention) instead of the left. For the quantum group $\Usl$, both quantum traces \eqref{eq:Y} and \eqref{eq:Yt} lead to the same invariant because the irreducible spin representations of $\Usl$ are isomorphic to their dual representations.   
\end{rem}

The invariant $\Y(L)$ obtained in the mathematical framework of Yang--Baxter operators and quantum groups appears to be related to the invariant $\E(L)$ obtained by means of the Chern--Simons quantum field theory (see discussions in \cite{Gua,GMM1,GMM2,GMM3,MS,RT} for instance). In the case where all the representations are spin $1/2$, it is known that both invariants lead to the Jones polynomial \cite{Tur,Wit}. Here are some arguments in support of the claim $\E(L)=\Y(L)$ in general. First, consider the unknot $U$ associated to a spin $j$. This unknot can be represented by the closure of the identity braid with one strand. The associated value of the invariant $\eqref{eq:Y}$ is  
\begin{equation}
	\Y(U;j)=\Tr^{(j)}(\mu) = \sum_{m=-j}^j \bra{j,m}q^{2H}\ket{j,m} = \sum_{m=-j}^j q^{2m} = [2j+1]_q,
\end{equation}
which is equal to the value $\E(U;j)$ as seen from the result \eqref{eq:Eunknot}. Moreover, $\Y(L)$ and $\E(L)$ have the same properties under a change of the form $L^{(\pm)} \leftrightarrow L^{(0)}$ (see equations \eqref{eq:ERMI} and \eqref{eq:YRMI}). Finally note that the eigenvalues of the representation of the braided universal $R$-matrix of $\Usl$ (see \cite{Resh} for instance) are the same as those of the half-monodromy matrix of the CS theory with gauge group $SU(2)$, which lead to the same skein relations.

\subsection{Casimir elements and Askey--Wilson relations}\label{ssec:CasimirAW}

In this section, we will use the trace \eqref{eq:Y} and its partial versions to show that the tangle diagrams \eqref{eq:Ci}--\eqref{eq:C123} also lead to the Askey--Wilson algebra in the RT construction.   

We start by considering the tangle diagram where a single strand in any spin representation of $\Usl$ is encircled by a loop in the spin-$1/2$ representation, illustrated by:
\begin{equation}
	\begin{tikzpicture}[scale=1,line width=\lw, baseline={([yshift=-\eseq]current bounding box.center)}]
		\circs{0}{\ly+0.05cm}
		\draw[->] \cstrand{0}{0};
	\end{tikzpicture} \ . \label{eq:tangleC}
\end{equation}
This tangle is obtained by taking the partial closure of the braid $\sigma_1^2 \in B_2$, as indicated below:
\begin{center}
	\begin{tikzpicture}[scale=1,line width=\lw,baseline={([yshift=-\eseq]current bounding box.center)}]
		\draw[\colfund,rounded corners=5pt,->] \ocrossl{0}{0} -- ++\ocrossr{0}{0};
		\draw[rounded corners=5pt,->] \ocrossr{\xd}{0} -- ++\ocrossl{0}{0};
	\end{tikzpicture} $ \quad \longrightarrow \quad $ 
	\begin{tikzpicture}[scale=1,line width=\lw, decoration={markings, mark=at position 0.46 with {\arrow{>}}}, baseline={([yshift=-\eseq]current bounding box.center)}]
		\draw[\colfund,postaction={decorate}] {[rounded corners=5pt] \ocrossl{0}{0} -- ++\ocrossr{0}{0}} (0,\yd+\yd) arc(0:180:\ra) -- (-2*\ra,0) arc(180:360:\ra) -- (0,0);
		\draw[rounded corners=5pt,->] \ocrossr{\xd}{0} -- ++\ocrossl{0}{0};
	\end{tikzpicture} $=$ 
	\begin{tikzpicture}[scale=1,line width=\lw, baseline={([yshift=-\eseq]current bounding box.center)}]
		\circs{0}{\yd-\ra}
		\draw[->] (0,0) -- ++(0,\ly+\ra+0.05cm) -- ++\cstrand{0}{0} -- ++(0,\ly+\ra+0.05cm);
	\end{tikzpicture} \ . 
\end{center}
Therefore, in the RT construction, the operator associated to the tangle \eqref{eq:tangleC} is the partial trace $\Tr_1^{(\frac{1}{2})}(\check{\cR}_{12}^2 (\mu \otimes 1))=\Tr_1^{(\frac{1}{2})}(\cR_{21}\cR_{12}(\mu \otimes 1))$. The element $\cR_{21}\cR_{12}$ commutes with the coproduct of any element of $\Usl$, as can be deduced from \eqref{eq:comult}. It is known since the inception of quantum groups \cite{Dr} that the quantum partial trace of such an element belonging to the centralizer of the diagonal action of $\Usl$ in $\Usl^{\otimes 2}$ is a central element of $\Usl$ (see also \cite{Etin,GZB,ZGB}). In the following proposition, we compute explicitly the central element associated to the tangle \eqref{eq:tangleC}.
\begin{prop}\label{prop:TrC}
	The universal $R$-matrix of $\Usl$ satisfies the following property
	\begin{equation}
		\Tr_1^{(\frac{1}{2})}(\cR_{21}\cR_{12}(\mu \otimes 1))=\Q, \label{eq:TrC}
	\end{equation}
	where $\Q$ is the Casimir element defined in \eqref{eq:Casimir}.
\end{prop}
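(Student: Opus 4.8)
The plan is to evaluate the first tensor factor of $\cR_{21}\cR_{12}(\mu\otimes 1)$ explicitly in the spin-$1/2$ representation, which reduces the partial trace $\Tr_1^{(\frac{1}{2})}$ to an ordinary $2\times 2$ trace with operator-valued entries, and then to recognize the resulting element of $\Usl$ as the Casimir \eqref{eq:Casimir}.

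First I would invoke the identifications already recorded in \eqref{eq:Lm} and \eqref{eq:Lp}: representing the first space of $\cR_{12}$ in the fundamental representation yields the operator $L^-_{12}$, while the same operation applied to $\cR_{21}$ yields $L^+_{12}$. Since both $\cR_{21}$ and $\cR_{12}$ carry their first tensor leg in the same (first) space, evaluating that space in the spin-$1/2$ representation turns the product $\cR_{21}\cR_{12}$ into the matrix product $L^+_{12}L^-_{12}$, a $2\times 2$ matrix whose entries are elements of $\Usl$ acting on the second (spin-$j$) factor. A direct multiplication of the matrices \eqref{eq:Lm}--\eqref{eq:Lp} gives the two diagonal entries $(L^+_{12}L^-_{12})_{11}=q^{2H}+q^{-1}(q-q^{-1})^2 FE$ and $(L^+_{12}L^-_{12})_{22}=q^{-2H}$; the off-diagonal entries will play no role in the trace.

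Next I would insert the factor $\mu\otimes 1$, whose first-space spin-$1/2$ representation is the diagonal matrix $M=\mathrm{diag}(q,q^{-1})$ of \eqref{eq:repmu}, and carry out the partial trace over the first space as the $2\times 2$ trace $\Tr\bigl(L^+_{12}L^-_{12}\,M\bigr)$. Because $M$ is diagonal, only the diagonal entries contribute, weighted respectively by $q$ and $q^{-1}$, giving $q\,(L^+_{12}L^-_{12})_{11}+q^{-1}(L^+_{12}L^-_{12})_{22}$. Substituting the entries computed above, this is
\[
	\Tr_1^{(\frac{1}{2})}(\cR_{21}\cR_{12}(\mu\otimes 1)) = q^{2H+1}+q^{-2H-1}+(q-q^{-1})^2 FE = \Q,
\]
which is precisely the Casimir element \eqref{eq:Casimir}.

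I do not expect any serious obstacle: once the matrices \eqref{eq:Lm}--\eqref{eq:Lp} are substituted, everything is a finite $2\times 2$ computation. The only points requiring care are bookkeeping ones, namely keeping track of which $L$-operator corresponds to $\cR_{12}$ versus $\cR_{21}$, remembering that the entries of $L^\pm$ do not commute so that the ordering $FE$ in the $(1,1)$ entry must be respected, and correctly interpreting $\Tr_1^{(\frac{1}{2})}$ together with the $\mu$-insertion as the $2\times 2$ trace against $M$. That the output is central is guaranteed a priori by the remark preceding the proposition, but it is in any case manifest since $\Q$ generates the center of $\Usl$.
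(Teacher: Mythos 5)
Your proposal is correct and follows essentially the same route as the paper: the paper's proof likewise rewrites $\Tr_1^{(\frac{1}{2})}(\cR_{21}\cR_{12}(\mu \otimes 1))$ as $\Tr_1(L^+_{12}L^-_{12}(M \otimes 1))$ and evaluates it directly from the explicit matrices \eqref{eq:Lm}, \eqref{eq:Lp} and \eqref{eq:repmu}. Your version merely spells out the intermediate $2\times 2$ entries that the paper leaves implicit, and both arrive at $q^{2H+1}+q^{-2H-1}+(q-q^{-1})^2FE=\Q$.
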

\begin{proof}
	Since the first space on which the operator in the trace of \eqref{eq:TrC} acts is a spin $1/2$ carrier, we can write
	\begin{equation}
		\Tr_1^{(\frac{1}{2})}(\cR_{21}\cR_{12}(\mu \otimes 1))=\Tr_1(L^+_{12}L^-_{12}(M \otimes 1)).
	\end{equation}
	Using the explicit expressions \eqref{eq:Lm}, \eqref{eq:Lp} and \eqref{eq:repmu}, we obtain
	\begin{equation}
		\Tr_1^{(\frac{1}{2})}(\cR_{21}\cR_{12}(\mu \otimes 1))=
		q^{2H+1}+q^{-2H-1}+ (q-\qi)^2FE,
	\end{equation}
	which is precisely the Casimir element of $\Usl$.
\end{proof}
As a corollary, we can now obtain the operator associated to the tangle where two strands are encircled by a blue loop and which can be obtained by taking the partial closure of the braid $\sigma_1 \sigma_2^2\sigma_1 \in B_3$, as illustrated below.
\begin{center}
	\begin{tikzpicture}[scale=1,line width=\lw,baseline={([yshift=-\eseq]current bounding box.center)}]
		\draw[\colfund,rounded corners=5pt,->] \ocrossl{0}{0} -- ++\ocrossl{0}{0} -- ++\ocrossr{0}{0} -- ++\ocrossr{0}{0};
		\draw[rounded corners=5pt,->] \ocrossr{\xd}{0} -- ++\strand{0}{0} -- ++\strand{0}{0} -- ++\ocrossl{0}{0};
		\draw[rounded corners=5pt,->] \strand{2*\xd}{0} -- ++\ocrossr{0}{0} -- ++\ocrossl{0}{0} -- ++\strand{0}{0};
	\end{tikzpicture} $ \quad \longrightarrow \quad $ 
	\begin{tikzpicture}[scale=1,line width=\lw,baseline={([yshift=-\eseq]current bounding box.center)},decoration={markings, mark=at position 0.515 with {\arrow{>}}}]
		\draw[\colfund,postaction={decorate}] {[rounded corners=5pt] \ocrossl{0}{0} -- ++\ocrossl{0}{0} -- ++\ocrossr{0}{0} -- ++\ocrossr{0}{0}} (0,\yd+\yd+\yd+\yd) arc(0:180:\ra) -- (-2*\ra,0) arc(180:360:\ra) -- (0,0);
		\draw[rounded corners=5pt,->] \ocrossr{\xd}{0} -- ++\strand{0}{0} -- ++\strand{0}{0} -- ++\ocrossl{0}{0};
		\draw[rounded corners=5pt,->] \strand{2*\xd}{0} -- ++\ocrossr{0}{0} -- ++\ocrossl{0}{0} -- ++\strand{0}{0};
	\end{tikzpicture} $=$ 
	\begin{tikzpicture}[scale=1,line width=\lw, baseline={([yshift=-\eseq]current bounding box.center)}]
		\circss{0}{\yd+\yd-\ra}{decorate}
		\draw[->] (0,0) -- ++(0,\yd+\ly+\ra+0.05cm) -- ++\cstrand{0}{0} -- ++(0,\yd+\ly+\ra+0.05cm);
		\draw[->] (\xd,0) -- ++(0,\yd+\ly+\ra+0.05cm) -- ++\cstrand{0}{0} -- ++(0,\yd+\ly+\ra+0.05cm);
	\end{tikzpicture}  
\end{center}
Similarly as before, the associated operator is $\Tr_1^{(\frac{1}{2})}(\cR_{21}\cR_{31}\cR_{13}\cR_{12} (\mu \otimes 1 \otimes 1))$. 
\begin{coro}\label{coro:TrDeltaC}
	The universal $R$-matrix of $\Usl$ satisfies the following property
	\begin{equation}
		\Tr_1^{(\frac{1}{2})}(\cR_{21}\cR_{31}\cR_{13}\cR_{12} (\mu \otimes 1 \otimes 1))=\Delta(\Q). \label{eq:TrDC}
	\end{equation}
\end{coro}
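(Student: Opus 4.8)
The plan is to deduce the corollary directly from Proposition \ref{prop:TrC} by ``doubling'' the encircled strand via the coproduct, exploiting the fact that the trace is taken over the spin-$\frac{1}{2}$ auxiliary space, which is left untouched when a comultiplication acts on the remaining factors. In other words, I expect the passage from one encircled strand to two encircled strands to be exactly the image of the one-strand computation under $\id\otimes\Delta$.

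First I would establish the purely algebraic identity $\cR_{21}\cR_{31}\cR_{13}\cR_{12}=(\id\otimes\Delta)(\cR_{21}\cR_{12})$ in $\Usl^{\otimes 3}$. Since $\id\otimes\Delta$ is an algebra homomorphism, it suffices to comultiply each factor separately. The second factor is immediate from the first relation in \eqref{eq:idDR}, giving $(\id\otimes\Delta)(\cR_{12})=\cR_{13}\cR_{12}$. For the first factor I would show $(\id\otimes\Delta)(\cR_{21})=\cR_{21}\cR_{31}$; writing $\cR=\sum_\alpha \cR^\alpha\otimes\cR_\alpha$, this follows from the second relation $(\Delta\otimes\id)(\cR)=\cR_{13}\cR_{23}$ of \eqref{eq:idDR} after a cyclic relabelling of the three tensor legs (the map $x\otimes y\otimes z\mapsto z\otimes x\otimes y$). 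Multiplying the two results then yields $(\id\otimes\Delta)(\cR_{21})\,(\id\otimes\Delta)(\cR_{12})=\cR_{21}\cR_{31}\cR_{13}\cR_{12}$, which is precisely the argument appearing in \eqref{eq:TrDC}.

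Next I would incorporate the enhancement: because $\Delta(1)=1\otimes1$ we have $\mu\otimes1\otimes1=(\id\otimes\Delta)(\mu\otimes1)$, so, using once more that $\id\otimes\Delta$ is a homomorphism, the entire operator inside the trace becomes $(\id\otimes\Delta)\bigl(\cR_{21}\cR_{12}(\mu\otimes1)\bigr)$. Finally I would use that the partial trace over the untouched first space commutes with the comultiplication on the remaining spaces, that is $\Tr_1^{(\frac{1}{2})}\circ(\id\otimes\Delta)=\Delta\circ\Tr_1^{(\frac{1}{2})}$; this is clear on a simple tensor $x\otimes y$, since $\Tr^{(\frac{1}{2})}(x)$ is a scalar and $\Delta$ is linear. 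Applying this and then Proposition \ref{prop:TrC}, i.e.\ \eqref{eq:TrC}, gives $\Tr_1^{(\frac{1}{2})}(\cR_{21}\cR_{31}\cR_{13}\cR_{12}(\mu\otimes1\otimes1))=\Delta\bigl(\Tr_1^{(\frac{1}{2})}(\cR_{21}\cR_{12}(\mu\otimes1))\bigr)=\Delta(\Q)$, which is \eqref{eq:TrDC}.

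The only genuinely delicate step is the coproduct identity $(\id\otimes\Delta)(\cR_{21})=\cR_{21}\cR_{31}$: one must track the tensor-factor conventions carefully, since it stems from the $(\Delta\otimes\id)$ axiom rather than the $(\id\otimes\Delta)$ one and therefore requires a permutation of the three legs. Everything else reduces to the homomorphism property of $\id\otimes\Delta$ and the linearity of the partial trace, and in particular no use of coassociativity \eqref{eq:coasso} is needed.
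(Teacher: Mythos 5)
Your proposal is correct and essentially reproduces the paper's own argument: the paper likewise obtains \eqref{eq:TrDC} by applying the coproduct to Proposition \ref{prop:TrC}, commuting $\Delta$ past the partial trace over the spin-$\frac{1}{2}$ space, and using the homomorphism property of $\id\otimes\Delta$ together with $(\id\otimes\Delta)(\cR_{12})=\cR_{13}\cR_{12}$ and $(\id\otimes\Delta)(\cR_{21})=\cR_{21}\cR_{31}$, the latter deduced from the second relation of \eqref{eq:idDR}. The only differences are presentational — you work backwards from the left-hand side of \eqref{eq:TrDC} rather than forwards from \eqref{eq:TrC}, and you spell out the cyclic-relabelling argument and the identity $\Tr_1^{(\frac{1}{2})}\circ(\id\otimes\Delta)=\Delta\circ\Tr_1^{(\frac{1}{2})}$, both of which the paper leaves implicit.
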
 
\begin{proof}
	Take the coproduct of \eqref{eq:TrC} to get
	\begin{equation}
		\Delta\left( \Tr_1^{(\frac{1}{2})}(\cR_{21}\cR_{12}(\mu \otimes 1) (\mu \otimes 1))\right)=\Delta(\Q).
	\end{equation}
	This equation can be written as
	\begin{equation}
		\Tr_1^{(\frac{1}{2})}((\id \otimes \Delta)(\cR_{21}\cR_{12}(\mu \otimes 1)))=\Delta(\Q).
	\end{equation}
	The coproduct is a homomorphism, therefore
	\begin{equation}
		\Tr_1^{(\frac{1}{2})}((\id \otimes \Delta)(\cR_{21})(\id \otimes \Delta)(\cR_{12})(\id \otimes \Delta)(\mu \otimes 1))=\Delta(\Q).
	\end{equation}
	One can show that the second relation in \eqref{eq:idDR} implies that $(\id \otimes \Delta)(\cR_{21})=\cR_{21}\cR_{31}$. Using this and the first relation in \eqref{eq:idDR}, one gets the result \eqref{eq:TrDC}.
\end{proof}

Proposition \ref{prop:TrC} means that, in the RT link invariant construction, an open straight braid strand enclosed by a loop (that is, a closed braid strand) associated to the spin-$1/2$ representation of $\Usl$ corresponds to the Casimir element of $\Usl$. Moreover, Corollary \ref{coro:TrDeltaC} shows that adding a straight braid strand on the right of another strand inside a spin-$1/2$ loop corresponds to taking the coproduct of the Casimir element. This process of adding straight strands inside a spin-$1/2$ loop can be repeated as many times as wished, and corresponds to applying repeatedly the coproduct on the Casimir element. 

Adding a straight strand on the right of a set of strands enclosed by a loop in a diagram simply corresponds algebraically to adding a tensor factor of the identity on the right inside the associated partial traces expressions. Hence, from the results of Proposition \ref{prop:TrC} and Corollary \ref{coro:TrDeltaC}, it is seen that in the RT link invariant construction, the diagrams $\A_1, \A_{12}$ and $\A_{123}$ correspond respectively to the following intermediate Casimir elements of $\Ut$:
\begin{equation}
	\Q_1:=\Q \otimes 1 \otimes 1, \qquad
	\Q_{12} := \Delta(\Q) \otimes 1, \quad
	\Q_{123} := (\id \otimes \Delta) \circ \Delta(\Q).  
\end{equation}

By regular isotopy, one has
\begin{equation}
	\begin{tikzpicture}[scale=1,line width=\lw, baseline={([yshift=-\eseq]current bounding box.center)}]
		\draw[->] \strand{0}{0};
		\circs{\xd}{\ly+0.05cm}
		\draw[->] \cstrand{\xd}{0};
	\end{tikzpicture} \ =
	\begin{tikzpicture}[scale=1,line width=\lw, baseline={([yshift=-\eseq]current bounding box.center)}]
		\draw[rounded corners=5pt,->] \ocrossl{0}{0} -- ++\strand{0}{0} -- ++\ucrossr{0}{0};
		\circs{0}{\yd+\ly+0.05cm}
		\draw[rounded corners=5pt,->] \ocrossr{\xd}{0} -- ++\cstrand{0}{0} -- ++\ucrossl{0}{0};
	\end{tikzpicture} \ = \ 
	\begin{tikzpicture}[scale=1,line width=\lw, decoration={markings, mark=at position 0.495 with {\arrow{>}}}, baseline={([yshift=-\eseq]current bounding box.center)}]
		\draw[\colfund,postaction={decorate}] {[rounded corners=5pt] \strand{0}{0} -- ++\ocrossl{0}{0} -- ++\ocrossr{0}{0}} -- ++\strand{0}{0} (0,\yd+\yd+\yd+\yd) arc(0:180:\ra) -- (-2*\ra,0) arc(180:360:\ra) -- (0,0);
		\draw[rounded corners=5pt,->] \ocrossl{\xd}{0} -- ++\strand{0}{0} -- ++\strand{0}{0} -- ++\ucrossr{0}{0};
		\draw[rounded corners=5pt,->] \ocrossr{2*\xd}{0} -- ++\ocrossr{0}{0} -- ++\ocrossl{0}{0} -- ++\ucrossl{0}{0};
	\end{tikzpicture} \ . \label{eq:tangleC2}
\end{equation}
As a consequence of Proposition \ref{prop:TrC}, the operator associated to the tangle diagram \eqref{eq:tangleC2} is
\begin{equation}
	\check{\cR}_{12}^{-1}(\Q \otimes 1)\check{\cR}_{12} = \cR_{12}^{-1}(1 \otimes \Q)\cR_{12} = 1 \otimes \Q,  
\end{equation} 
where we used the fact that $\Q$ is central in $\Usl$. Therefore, it is seen that we can associate the diagrams $\A_2$ and $\A_3$ respectively to the following intermediate Casimir elements of $\Ut$:
\begin{equation}
	\Q_2:=1 \otimes \Q \otimes 1, \qquad \Q_3:=1 \otimes 1 \otimes \Q.  
\end{equation}
   
Finally, we can obtain with a similar procedure the operators associated to the diagrams $\A_{23}$ and $\A_{13}$. Indeed, on one hand, it is shown in \cite{CGVZ} that the element
\begin{equation}
	\Q_{13} := \check{\cR}_{23}^{-1} \Q_{12} \check{\cR}_{23} \label{eq:C13}
\end{equation}
can be interpreted as the third intermediate Casimir element of $\Ut$, and that the following equality holds
\begin{equation}
	 \Q_{23} := 1 \otimes \Delta(\Q) = \check{\cR}_{12}^{-1}\check{\cR}_{23}^{-1} \Q_{12} \check{\cR}_{23}\check{\cR}_{12}. \label{eq:C23}
\end{equation}
On the other hand, one has the following equalities of diagrams (by regular isotopy)
\begin{equation}
	\A_{13}=\sigma_2^{-1} \A_{12} \sigma_2, \quad \A_{23}=\sigma_1^{-1}\sigma_2^{-1} \A_{12} \sigma_2 \sigma_1.
\end{equation} 
Therefore, it is seen that $\A_{23}$ and $\A_{13}$ are associated to the intermediate Casimir elements $\Q_{23}$ and $\Q_{13}$.

Now it is known \cite{Zh} that the intermediate Casimir elements $\Q_I$ belong to the centralizer of the diagonal action of $\Usl$ in $\Ut$ and satisfy the relations of the special Askey--Wilson algebra $\saw(3)$ under the mapping $\C_I \mapsto \Q_I$, for all $I\in\{1,2,3,12,23,13,123\}$. Therefore, we have the following observation for the RT link invariant obtained by taking a full trace of an operator acting on the tensor product of $\Usl$-representations.

\begin{thm}\label{thm:RTAW}
	The Reshetikhin--Turaev link invariant \eqref{eq:Y} based on the universal $R$-matrix of $\Usl$ does not distinguish between the linear combinations of links given by the relations \eqref{eq:AW1}--\eqref{eq:AW4} of $\saw(3)$ under the correspondence \eqref{eq:corres} for the diagrams \eqref{eq:Ci}--\eqref{eq:C123}.   
\end{thm}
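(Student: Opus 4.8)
The plan is to combine the identification of each AW tangle diagram $\A_I$ with an intermediate Casimir operator $\Q_I$ acting on $\Ut$, established in Subsection \ref{ssec:CasimirAW}, with the linearity of the trace \eqref{eq:Y}. The guiding principle is that an AW diagram sitting locally inside a larger link $L$ contributes, under the RT trace, precisely the operator $\Q_I$ on the three tensor factors carrying the spins $j_1,j_2,j_3$ of the enclosed strands, with the spin-$1/2$ loop already traced out. Since the defining relations of $\saw(3)$ hold as identities among the $\Q_I$, linear combinations of links dictated by \eqref{eq:AW1}--\eqref{eq:AW4} collapse to equal operators, hence to equal values of $\Y$.

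First I would make the local-to-global reduction precise. Given a link $L$ containing one of the diagrams $\A_I$ in a three-strand region, I would present $L$ as the closure of a braid in which the three relevant strands are adjacent (using Alexander's theorem and regular isotopy), so that the enclosing spin-$1/2$ loop is a closed component. Tracing first over the tensor factor associated with this loop, Proposition \ref{prop:TrC}, Corollary \ref{coro:TrDeltaC}, and the identifications $\A_I \leftrightarrow \Q_I$ of Subsection \ref{ssec:CasimirAW} show that this partial trace yields the operator $\Q_I$ on $V_{j_1}\otimes V_{j_2}\otimes V_{j_3}$, tensored with identities on the remaining factors. Carrying out the remaining trace then gives
\[
	\Y(L)=\Tr\bigl(T\,\Q_I^{\,\mathrm{emb}}\bigr),
\]
where $\Q_I^{\,\mathrm{emb}}$ denotes $\Q_I$ embedded in the three relevant factors and $T$ is the ($I$-independent) operator assembled from the rest of the link together with all the $\mu$-insertions required by \eqref{eq:Y}. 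Crucially, $T$ is the same for every choice of $I$ and for every product of the $\A_I$, because these differ only inside the local region.

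Next I would invoke the algebraic heart of the matter: the intermediate Casimir elements $\Q_I$ belong to the centralizer of the diagonal $\Usl$-action in $\Ut$ and satisfy the relations \eqref{eq:AW1}--\eqref{eq:AW4} of $\saw(3)$ under $\C_I\mapsto\Q_I$, as cited from \cite{Zh} and established anew via $R$-matrices in Appendix \ref{app:AWRmatrix}. Hence any combination of (products of) the $\Q_I$ corresponding, via \eqref{eq:corres}, to the difference of the two sides of an AW relation is the zero operator on $\Ut$, so its embedded version vanishes on $V$. By linearity of the trace, the two sides of each relation produce the same value $\Tr(T\cdot(\,\cdot\,))$, and therefore $\Y$ assigns the same invariant to both linear combinations of links, which is exactly the assertion of the theorem.

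The main obstacle is the rigour of the local-to-global reduction: one must verify that tracing out the spin-$1/2$ loop factor inside the full trace $\Y(L)$ genuinely produces $\Q_I^{\,\mathrm{emb}}$ multiplied by an environment operator $T$ that is independent of $I$. This hinges on the compatibility of the partial-trace interpretation of Subsection \ref{ssec:CasimirAW} with embedding the AW tangle in an arbitrary braid closure—in particular, that the cyclicity of the trace and the commutativity \eqref{eq:enhRcommu} of $\check\cR$ with $\mu\otimes\mu$ let the loop's factor be traced first and the three open strands be threaded through the remainder of the braid without altering $\Q_I$. Once this factorization is justified, the theorem follows purely formally from the $\saw(3)$ relations among the $\Q_I$ and the linearity of $\Y$.
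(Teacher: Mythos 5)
Your proposal is correct and takes essentially the same route as the paper: the tangles $\A_I$ are identified with the intermediate Casimir elements $\Q_I$ of $\Ut$ via quantum partial traces (Proposition \ref{prop:TrC}, Corollary \ref{coro:TrDeltaC}, and the conjugation relations \eqref{eq:C13}--\eqref{eq:C23}), and the theorem then follows from the fact, cited from \cite{Zh} and re-derived in Appendix \ref{app:AWRmatrix}, that the $\Q_I$ satisfy the $\saw(3)$ relations, combined with linearity of the trace \eqref{eq:Y}. Your explicit local-to-global factorization (the environment operator $T$ independent of $I$) is left implicit in the paper, which presents the conclusion as a direct observation once the operator identification is made, but the underlying argument is the same.
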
  

To justify Theorem \ref{thm:RTAW}, we used the equality between the partial traces associated to the tangle diagrams $\A_I$ and the intermediate Casimir elements $\Q_I$ of $\Ut$. However, it is possible to obtain the first Askey--Wilson relation \eqref{eq:AW1} by considering the partial traces associated to the tangle diagrams which appear in \eqref{eq:AW1} and by using the properties of the $R$-matrix of $\Usl$. The relations \eqref{eq:AW2} and \eqref{eq:AW3} are then implied by conjugations of the first one. This more direct demonstration of Theorem \ref{thm:RTAW} for the three first relations of $\saw(3)$, which is albeit technical presented in Appendix \ref{app:AWtraceR}, yields as a byproduct a different proof (based on the $R$-matrix formalism) that the intermediate Casimir elements of $\Ut$ satisfy the original Askey--Wilson algebra \cite{Zh}. 

We conclude this section with some remarks.  

\begin{rem}
	In addition of $\Q_{13}$, there is another element of the centralizer of $\Usl$ in $\Ut$ associated to the recoupling of the first and third factors (studied in \cite{CGVZ}):
	\begin{equation}
		\widetilde{\Q}_{13} := \check{\cR}_{12}^{-1} \Q_{23} \check{\cR}_{12} = \check{\cR}_{23} \Q_{12} \check{\cR}_{23}^{-1}. \label{eq:C131}
	\end{equation}
	According to \eqref{eq:C131}, the element $\widetilde{\Q}_{13}$ can be associated to the following diagrams
	\begin{align}
		&\widetilde{\A}_{13} = \sigma_2 \A_{12} \sigma_2^{-1} =
		\begin{tikzpicture}[scale=1,line width=\lw,baseline={([yshift=-\eseq]current bounding box.center)}]
			\draw[->] \strand{0}{0} -- ++\cstrand{0}{0} -- ++\strand{0}{0};
			\draw[rounded corners=5pt,->] \ucrossl{\xd}{0} -- ++\strand{0}{0} ++\ocrossr{0}{0};
			\draw[rounded corners=5pt,->] \ucrossr{2*\xd}{0} -- ++\cstrand{0}{0} ++\ocrossl{0}{0};
			\circss{0}{\ly+0.05cm+\yd}{decorate}
		\end{tikzpicture} =
		\begin{tikzpicture}[scale=1,line width=\lw,baseline={([yshift=-\eseq]current bounding box.center)}]
			\draw[rounded corners=5pt,->] \ocrossl{0}{0} -- ++\cstrand{0}{0} ++\ucrossr{0}{0};
			\draw[rounded corners=5pt,->] \ocrossr{\xd}{0} -- ++\strand{0}{0} ++\ucrossl{0}{0};
			\circss{\xd}{\ly+0.05cm+\yd}{decorate}
			\draw[->] \strand{2*\xd}{0} -- ++\cstrand{0}{0} -- ++\strand{0}{0};
		\end{tikzpicture} = 
		\sigma_1^{-1} \A_{23} \sigma_1. \label{eq:C131sig}
	\end{align}
	By regular isotopy invariance, we see that
	\begin{equation}
		\widetilde{\A}_{13} =
		\begin{tikzpicture}[scale=1,line width=\lw,baseline={([yshift=-\eseq]current bounding box.center)}]
			\draw[->] \cstrand{0}{0};
			\draw[->] \cstrandb{\xd}{0};
			\draw[->] \cstrand{2*\xd}{0}; 
			\circssd{0}{\ly+0.05cm}
		\end{tikzpicture} \ .
	\end{equation}
	Note that $\Q_{13}$ is denoted by $C_{13}^{(0)}$ in \cite{CGVZ}, and $\widetilde{\Q}_{13}$ is denoted by $C_{13}^{(1)}$. Let us mention here that the equalities in \eqref{eq:C130sig} and \eqref{eq:C131sig} provide a diagrammatic interpretation of the intermediate Casimir elements $\Q_{13}$ and $\widetilde{\Q}_{13}$ of $\Ut$ as conjugations by braided universal $R$-matrices of $\Q_{12}$ or $\Q_{23}$, as studied in \cite{CGVZ}.
\end{rem}

\begin{rem}
	The results obtained in Proposition \ref{prop:TrC} and Corollary \ref{coro:TrDeltaC} are ``universal'', in the sense that there is no need to make use of a specific $\Usl$ representation (except for the blue loop since it has to be traced out). The assignment of a spin to a strand corresponds to representing the associated factor of $\Usl$. Recall that the Casimir element of $\Usl$ is represented by the identity operator times the constant     $\chi_s=q^{2s+1}+q^{-2s-1}$ in the spin-$s$ irreducible representation. Hence, we can write in the RT link invariant construction
	\begin{equation}
		\begin{tikzpicture}[scale=1,line width=\lw, baseline={([yshift=-\eseq]current bounding box.center)}]
			\draw[->] \cstrand{0}{0};
			\circs{0}{\ly+0.05cm}
			\node at (0,-0.2cm) {\footnotesize $s$};
		\end{tikzpicture} = \chi_s
		\begin{tikzpicture}[scale=1,line width=\lw, baseline={([yshift=-\eseq]current bounding box.center)}]
			\draw[->] \strand{0}{0};
			\node at (0,-0.2cm) {\footnotesize $s$};
		\end{tikzpicture} \ . \label{eq:diagCrep}
	\end{equation}   
	It can be verified that the closure of equation \eqref{eq:diagCrep} is consistent with the value of the unknot \eqref{eq:Eunknot} and the Hopf link \eqref{eq:EHopf} in the CS theory, i.e. 
	\begin{equation}
		\begin{tikzpicture}[scale=1,line width=\lw, baseline={([yshift=-\eseq]current bounding box.center)}]
			\draw[decoration={markings, mark=at position 0.3 with {\arrow{>}}}, postaction={decorate}] \cstrand{0}{0} arc(0:180:\ra) -- (-2*\ra,0) arc(180:360:\ra) -- (0,0);
			\circs{0}{\ly+0.05cm}
			\node at (0.15cm,-0.05cm) {\footnotesize $s$};
		\end{tikzpicture} 
		= [2(2s+1)]_q = \chi_s [2s+1]_q = \chi_s \ 
		\begin{tikzpicture}[scale=1,line width=\lw, baseline={([yshift=-\eseq]current bounding box.center)}]
			\draw[decoration={markings, mark=at position 0.32 with {\arrow{>}}}, postaction={decorate}] \strand{0}{0} arc(0:180:\ra) -- (-2*\ra,0) arc(180:360:\ra) -- (0,0);
			\node at (0.15cm,-0.05cm) {\footnotesize $s$};
		\end{tikzpicture}.
	\end{equation} 
	The same check can be done for several strands inside a blue loop, by using the fusion property of the Wilson lines and the knowledge of the eigenvalues of the intermediate Casimir elements of $\Ut$.
\end{rem}

\begin{rem}
	If the orientation of the blue loop in \eqref{eq:tangleC} is inverted, then one must consider the quantum partial trace of the element $\cR_{12}^{-1}\cR_{21}^{-1}$ instead in \eqref{eq:TrC}. Using the antipode $S$ defined by \eqref{eq:antipode}, the property \eqref{eq:uRi} for the inverse of the universal $R$-matrix and the defining relations \eqref{eq:Uqsl2} of $\Usl$, one can proceed as in Proposition \ref{prop:TrC} to show that the result of the quantum partial trace is again the Casimir element $\Q$ of $\Usl$. This is in agreement with the fact that the link invariant associated to the quantum algebra $\Usl$ does not depend on the orientation since the irreducible representations are isomorphic to their duals.
\end{rem}

\section{Conclusion} \label{sec:concl}

In summary, we considered the tangle diagrams where a subset of three vertical strands associated to any spin representations of $\su_2$, or $\Usl$, is encircled by a loop associated to the spin $1/2$ representation. We showed that in the Chern--Simons theory on $\bR^3$ with gauge group $SU(2)$, the Wilson loop vacuum expectation values of framed and colored links which differ in some region by a product of some of these tangle diagrams are related by the special Askey--Wilson algebra. Moreover, the same tangle diagrams were examined in the Reshetikhin--Turaev link invariant construction with quantum group $\Usl$. We have shown that the quantum partial traces of the braids representing these tangle diagrams, computed using the universal $R$-matrix of $\Usl$, are equal to the intermediate Casimir elements of $\Ut$. We found from this result that the RT link invariant also obeys the special Askey-Wilson algebra, with the generators of the algebra associated to the tangle diagrams. 

The present work opens the path for some further investigations. An obvious generalization would be to consider tangle diagrams with $n$ strands, for $n$ an integer larger than three. According to the results of this paper, such tangle diagrams are associated to intermediate Casimir elements of $\Usl^{\otimes n}$. Therefore, there should be connections between the link invariants of the CS theory and of the RT construction, and a generalized Askey--Wilson algebra $AW(n)$ \cite{DDV,DCler,PW}. Another idea would be to reproduce the same analysis for different gauge groups in the CS theory, and for their corresponding quantum groups in the RT construction. A specific example would be to consider $SU(3)$, to determine if similar tangle diagrams are also associated to natural elements of the centralizers of $U_q(\su_3)$ in its tensor products, and to look if the approach using tangles and partial traces allows to understand better the algebra formed by these elements. The study of the algebraic structure of the centralizer of $U(\su_3)$ has been initiated in \cite{CPV1,CPV2}. It could also be interesting to examine how the choice of a different manifold for the CS action affects the results of this paper. For instance, if $S^3$ is considered instead of $\bR^3$, then the coupling parameter $\kappa$ has to be an integer in order for the theory to remain invariant under gauge transformations, and hence the deformation parameter $q$ must be a root of unity. Finally, the algebra formed by the tangle diagrams considered in this paper together with the braid diagrams could be studied on its own. 
This algebra is certainly of interest and should be related for instance to the centralizers of $\Usl$ and to orthogonal polynomials \cite{CVZ1,CVZ2}.
We plan to investigate these aspects for future works.

\vspace{1em}
\noindent{\bf Acknowledgments:} The authors are grateful to Loïc Poulain d'Andecy for enlightening discussions, relevant observations and pointing out useful references. They also acknowledge helpful conversations with Julien Gaboriaud. NC thanks the CRM for its hospitality and is supported by the international research projects AAPT of the CNRS. The work of LV is supported by a Discovery Grant from the Natural Sciences and Engineering Research Council (NSERC) of Canada. MZ holds an Alexander--Graham--Bell graduate scholarship from NSERC.

\appendix

\section{Proof of Proposition \ref{prop:TrEnh}} \label{app:proofPropTrEnh}
This appendix presents a direct proof of Proposition \ref{prop:TrEnh}.

From the definition \eqref{eq:comult} of the coproduct of $\Usl$, one finds $\Delta(\mu)=\Delta^{\text{op}}(\mu)=\mu \otimes \mu$. One also finds from the property \eqref{eq:RD} that $\cR\Delta(\mu) = \Delta^{\text{op}}(\mu) \cR$. These two facts together with the action \eqref{eq:actperm} of the permutation operator and the definition \eqref{eq:cR} imply equation \eqref{eq:enhRcommu}.

For $m,n=-j,...,j$, one can write
\begin{equation}
	\bra{j,m} \Tr_1^{(j)}(\check \cR(\mu \otimes 1)) \ket{j,n} = \sum_{\ell=-j}^{j} \bra{j,m;j,\ell}\cR(\mu \otimes 1)\ket{j,\ell;j,n}.  \label{eq:Tr1Rjm}
\end{equation}
Using the explicit form \eqref{eq:uR} of the universal $R$-matrix, the orthogonality and the completeness relation \eqref{eq:orthcompl} of the basis vectors, and the fact that $\bra{j,m}H\ket{j,n} = n \delta_{mn}$ as seen from \eqref{eq:actH}, one can write the RHS of the previous equation as
\begin{align}
	\sum_{\ell=-j}^{j}\sum_{k=0}^{\infty}\frac{(q-q^{-1})^k}{[k]_q!} q^{-k(k+1)/2} &\bra{j,m}F^k\ket{j,\ell} \bra{j,\ell}E^k\ket{j,n} q^{2\ell n+(k+2)\ell - kn}. \label{eq:Rmu1}
\end{align}
For $k=0,1,...$, one can show by induction from the actions \eqref{eq:actE} and \eqref{eq:actF} the following
\begin{align}
	&\bra{j,m}E^k\ket{j,n} =
	\begin{cases}
		\frac{[j-n]_q!}{[j-n-k]_q!} & \text{if } m-n=k, \\
		0 & \text{otherwise},	
	\end{cases} \\
	&\bra{j,m}F^k\ket{j,n} =
	\begin{cases}
		\frac{[j+n]_q!}{[j+n-k]_q!} & \text{if } n-m=k, \\
		0 & \text{otherwise}.	
	\end{cases}
\end{align}
Hence the only non-zero terms in the sum \eqref{eq:Rmu1} are such that $m=n$ and $\ell=m+k$, and this sum becomes
\begin{equation}
	\delta_{mn}\sum_{k=0}^{j-m}\frac{(q-q^{-1})^k}{[k]_q!} q^{-k(k+1)/2} \frac{[j+m+k]_q! \ [j-m]_q!}{[j+m]_q! \ [j-m-k]_q!}q^{2(m+k)m+(k+2)(m+k) - km}.
\end{equation}
In terms of $q$-Pochhammer symbols (see \cite{GR} for instance), this can be rewritten as
\begin{equation}
	\delta_{mn}q^{2 m (m+1)}\sum_{k=0}^{j-m}(-1)^kq^{k(k+1+2m-2j)}\frac{(q^2;q^2)_{j-m} (q^2;q^2)_{j+m+k}}{(q^2;q^2)_k(q^2;q^2)_{j-m-k}(q^2;q^2)_{j+m}}. \label{eq:Rmu2}
\end{equation}
Using the following two identities \cite{GR}
\begin{equation}
	\frac{(q;q)_n}{(q;q)_{n-k}} = (q^{-n};q)_k(-1)^kq^{nk-\binom{k}{2}}, \qquad \frac{(a;q)_{n+k}}{(a;q)_n}=(aq^n;q)_k,
\end{equation}
one can write \eqref{eq:Rmu2} as
\begin{equation}
	\delta_{mn}q^{2 m (m+1)}\sum_{k=0}^{j-m}\frac{(q^{-2(j-m)};q^2)_k (q^{2(j+m+1)};q^2)_{k}}{(q^2;q^2)_k}q^{2k}. \label{eq:Rmu3}
\end{equation}
Finally, by applying to \eqref{eq:Rmu3} the $q$-hypergeometric function formula \cite{GR}
\begin{equation}
	{}_{2}\phi_{1}\biggl(\genfrac..{0pt}{}{q^{-n},b}{c};q,q\biggr) = \frac{(c/b;q)_n}{(c;q)_n}b^n
\end{equation}
with the substitutions $q\to q^2$, $n=j-m$, $b=q^{2(j+m+1)}$ and $c=0$, one gets 
\begin{equation}
	\bra{j,m} \Tr_1^{(j)}(\check \cR(\mu \otimes 1)) \ket{j,n} = q^{2j(j+1)}\delta_{mn}, \label{eq:Rmu4}
\end{equation}
which proves the case with positive exponents of \eqref{eq:Trleft}. A similar procedure proves the same case for \eqref{eq:Trright}. Note that if one already knows (from previous results, see \cite{Etin,GZB,ZGB} for instance) that the quantum partial trace in \eqref{eq:Tr1Rjm} is central, then the proof presented above simplifies by choosing to evaluate expression \eqref{eq:Tr1Rjm} for $m=n=j$ only. 

To obtain the case with negative exponents in \eqref{eq:Trleft}, one can first write
\begin{equation}
	\check \cR^{-1}(\mu \otimes 1) =  \cR^{-1}(1 \otimes \mu)\Pi_{12}. \label{eq:cRimu1}
\end{equation}
The inverse of the universal $R$-matrix is given by \eqref{eq:uRi}. Using the definition \eqref{eq:antipode} of the antipode $S$ and the explicit expression $\eqref{eq:uR}$ of the universal $R$-matrix, one finds that $\cR^{-1}$ is the same as $\cR$ with the replacement $q \to \qi$. It is also obviously true that the inverse of $\mu=q^{2H}$ is the same as $\mu$ with the replacement $q \to \qi$. Therefore, one finds from \eqref{eq:cRimu1}
\begin{equation}
	\left(\check \cR^{-1}(\mu \otimes 1) \right) (q)=  \left( \Pi_{12}\check \cR(1 \otimes \mu^{-1})\Pi_{12} \right)(\qi),
\end{equation}
where the dependence in $q$ has been explicitly written. As a consequence, one gets
\begin{equation}
	\Tr_1^{(j)}(\check \cR^{-1}(\mu \otimes 1)) = \Tr_2^{(j)}( \check \cR(1 \otimes \mu^{-1}))(\qi) = q^{-2j(j+1)} \id.
\end{equation}
The case with negative exponents in \eqref{eq:Trright} is obtained similarly.

\section{Askey--Wilson relations from partial traces of $R$-matrices} \label{app:AWtraceR}
In this appendix, we present an alternative proof of the fact that the intermediate Casimir elements of $\Usl$ satisfy the three Askey--Wilson relations \eqref{eq:AW1}--\eqref{eq:AW3} using quantum partial traces of $R$-matrices. Note that throughout this appendix, we will often make use of the Yang--Baxter relation in the form of equations \eqref{eq:FRT1}--\eqref{eq:FRT3} and \eqref{eq:RLL4}--\eqref{eq:RLL6} without stating it explicitly. 

\subsection{Intermediate Casimir elements in terms of partial traces}

The goal of this subsection is to write each of the intermediate Casimir elements of $\Ut$ as the quantum trace on the first space of a product of universal $R$-matrices. The first space, which is traced out in the spin-$1/2$ representation, will be labeled by $a$, and the others will be labeled by numbers $1,2,3$.  

For the elements $\Q_1$ and $\Q_{12}$, this is already done in Proposition \ref{prop:TrC} and Corollary \ref{coro:TrDeltaC} (one must simply add some tensor factors of $1$ on the right if necessary). The element $\Q_{123}$ can be obtained in a similar manner, by applying $\id \otimes \Delta$ on equation \eqref{eq:TrDC}. For reference, we write here these elements in terms of the matrices $L^{\pm}$ (which we recall are universal $R$-matrices with the first factor represented in the spin-$1/2$) and the spin-$1/2$ matrix representation $M$ of the element $\mu$ which acts only on the traced space $a$ :
\begin{align}
	&\Q_1 = \Tr_{a}(L^+_{a1}L^-_{a1} M_a), \label{eq:C1Tra}\\
	&\Q_{12} = \Tr_a(L^+_{a1} L^+_{a2} L^-_{a2} L^-_{a1}   M_a), \label{eq:C12Tra} \\
	&\Q_{123} = \Tr_a(L^+_{a1} L^+_{a2} L^+_{a3} L^-_{a3} L^-_{a2} L^-_{a1} M_a). \label{eq:C123Tra}
\end{align}

The following proposition and its corollary allow to also write the elements $\Q_2$ and $\Q_{23}$ as partial traces with the traced space $a$ on the first position.   
\begin{prop} The following equation of operators acting on two $\Usl$-representations holds:
	\begin{equation}
		\Q_2 = \Tr_{a}(L^+_{a1}L^+_{a2}L_{a2}^-(L_{a1}^+)^{-1}M_a). \label{eq:C2Tr}
	\end{equation}
\end{prop}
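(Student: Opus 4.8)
The plan is to reduce the right-hand side of \eqref{eq:C2Tr} to the partial trace $\Tr_a(L^+_{a2}L^-_{a2}M_a)$, which was already evaluated in Proposition \ref{prop:TrC}. The mechanism is to recognise that the two flanking factors $L^+_{a1}$ and $(L^+_{a1})^{-1}$, which diagrammatically send the auxiliary loop over the straight strand $1$ and back, amount to conjugation of the inner $L$-matrices by the braiding of the two generic representation spaces. Precisely, I would first establish the operator identity
\[
L^+_{a1}L^+_{a2}L^-_{a2}(L^+_{a1})^{-1} = \cR_{12}^{-1}L^+_{a2}L^-_{a2}\cR_{12},
\]
where $\cR_{12}$ is the universal $R$-matrix acting on the two representation spaces $1$ and $2$.

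To prove this identity, I would clear the conjugation by multiplying on the left by $\cR_{12}$ and on the right by $L^+_{a1}$, reducing it to $\cR_{12}L^+_{a1}L^+_{a2}L^-_{a2} = L^+_{a2}L^-_{a2}\cR_{12}L^+_{a1}$. Relation \eqref{eq:RLL5}, relabelled so that its spin-$1/2$ tensor factor is the auxiliary space $a$ and its two generic factors are $1,2$, rewrites $\cR_{12}L^+_{a1}L^+_{a2}$ as $L^+_{a2}L^+_{a1}\cR_{12}$; relation \eqref{eq:RLL6}, relabelled in the same way, then rewrites $L^+_{a1}\cR_{12}L^-_{a2}$ as $L^-_{a2}\cR_{12}L^+_{a1}$. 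After these two substitutions both sides coincide. The only delicate point is this relabelling: each RLL relation carries a single spin-$1/2$ leg, and one must consistently identify that leg with $a$ so that $\cR_{12}$ remains the braiding of the two generic spaces.

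With the identity at hand, I would substitute it into \eqref{eq:C2Tr}. Since $\cR_{12}$ acts trivially on the auxiliary space $a$ and $M_a$ acts only on $a$ (so the two commute), the factors $\cR_{12}^{\pm1}$ can be carried outside the partial trace $\Tr_a$, yielding $\cR_{12}^{-1}\,\Tr_a(L^+_{a2}L^-_{a2}M_a)\,\cR_{12}$. By Proposition \ref{prop:TrC}, with the representation space $2$ playing the role of space $1$, the inner trace equals $1\otimes\Q$. Finally, because $\Q$ is central in $\Usl$, the element $1\otimes\Q$ commutes with $\cR=\sum\cR^\alpha\otimes\cR_\alpha$; hence the surrounding conjugation by $\cR_{12}$ is trivial and the expression collapses to $1\otimes\Q=\Q_2$, as required.

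I expect the main obstacle to be entirely bookkeeping rather than conceptual: aligning the fixed index patterns of \eqref{eq:RLL5} and \eqref{eq:RLL6} with the auxiliary-space convention used here. Once the indices are matched, the identity follows from a two-step application of the Yang--Baxter relations, and the remainder is the routine extraction of $\cR_{12}$ from the trace together with the centrality of the Casimir element $\Q$.
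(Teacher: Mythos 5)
Your proof is correct and follows essentially the same route as the paper's: both arguments rest on Proposition \ref{prop:TrC}, the relabelled relations \eqref{eq:RLL5} and \eqref{eq:RLL6}, the extraction of $\cR_{12}$ from the partial trace, and the centrality of $\Q$. The only difference is organizational — you work forward from the trace expression via an explicit conjugation identity, whereas the paper runs the same manipulations in reverse starting from $\Q_2$ — so the two proofs are the same argument.
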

\begin{proof}
	Using the fact that $\Q$ is central in $\Usl$ and the result of Proposition \ref{prop:TrC}, one finds
	\begin{align}
		\Q_2
		&= \cR_{12}^{-1}\Tr_{a}(L^+_{a2}L^-_{a2}M_a)\cR_{12} \\
		&= \cR_{12}^{-1}\Tr_{a}(L^+_{a2}L^-_{a2}\cR_{12}L_{a1}^+(L_{a1}^+)^{-1}M_a) \\
		&= \cR_{12}^{-1}\Tr_{a}(L^+_{a2}L^+_{a1}\cR_{12}L_{a2}^-(L_{a1}^+)^{-1}M_a) \\
		&= \Tr_{a}(L^+_{a1}L^+_{a2}L_{a2}^-(L_{a1}^+)^{-1}M_a).
	\end{align}
\end{proof}
\begin{coro} The following equation of operators acting on three $\Usl$-representations holds:
	\begin{equation}
		\Q_{23} = \Tr_a(L^+_{a1}L^+_{a2} L^+_{a3} L^-_{a3} L_{a2}^-(L_{a1}^+)^{-1}   M_a). \label{eq:C23Tra}
	\end{equation}
\end{coro}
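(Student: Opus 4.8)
The plan is to mimic the derivation of Corollary~\ref{coro:TrDeltaC}, which produced $\Q_{12}$ from $\Q_1$ by applying $\id \otimes \Delta$. Here I would start from the partial-trace expression \eqref{eq:C2Tr} for $\Q_2 = 1 \otimes \Q \in \Usl^{\otimes 2}$ and apply the map $\id \otimes \Delta$ to both sides. Since $\Q_{23} = 1 \otimes \Delta(\Q)$ by \eqref{eq:C23}, and $(\id \otimes \Delta)(1 \otimes \Q) = 1 \otimes \Delta(\Q)$, the left-hand side becomes $\Q_{23}$ immediately, so the entire content of the proof reduces to tracking the effect of $\id \otimes \Delta$ on the right-hand side of \eqref{eq:C2Tr}.

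First I would observe that the coproduct being applied acts on the second $\Usl$-tensorand (space $2$), while the trace is taken over the auxiliary spin-$1/2$ space $a$; these two operations act on disjoint tensor slots, so $\id \otimes \Delta$ commutes with $\Tr_a$ and may be pushed inside the trace. Concretely, $(\id \otimes \Delta)\bigl[\Tr_a(\cdots)\bigr] = \Tr_a\bigl[(\id_a \otimes \id_1 \otimes \Delta_2)(\cdots)\bigr]$, where $\Delta_2$ denotes the coproduct applied to the entries lying in space $2$.

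Next I would evaluate $(\id_a \otimes \id_1 \otimes \Delta_2)$ factor by factor on the product $L^+_{a1}L^+_{a2}L^-_{a2}(L^+_{a1})^{-1}M_a$, using that $\Delta$ is an algebra homomorphism. The factors $L^+_{a1}$ and $(L^+_{a1})^{-1}$ have their $\Usl$-entries in space $1$, which is untouched, so they are unchanged; because $\Delta$ is a homomorphism, the image of the inverse is the inverse of the image, hence still $(L^+_{a1})^{-1}$. Likewise $M_a$ acts only on $a$ and is inert. The only factors that change are $L^+_{a2}$ and $L^-_{a2}$, for which the coproduct formulas \eqref{eq:idDL}, read with $a$ as the represented space and $2$ as the space split into $2,3$, give $\Delta(L^+_{a2}) = L^+_{a2}L^+_{a3}$ and $\Delta(L^-_{a2}) = L^-_{a3}L^-_{a2}$. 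Substituting these and reassembling the product yields exactly $L^+_{a1}L^+_{a2}L^+_{a3}L^-_{a3}L^-_{a2}(L^+_{a1})^{-1}M_a$, so \eqref{eq:C23Tra} follows.

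The computation is routine; the only point requiring care—which I would flag as the main potential pitfall rather than a genuine obstacle—is the bookkeeping of which tensor slot the coproduct acts on, together with checking that $\Delta$ indeed commutes with $\Tr_a$ and respects the inverse $(L^+_{a1})^{-1}$. Once the indices are correctly tracked, no Yang--Baxter manipulation is needed, in contrast to the proof of the preceding proposition.
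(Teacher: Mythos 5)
Your proposal is correct and is essentially the paper's own proof: the paper likewise applies $(\id \otimes \Delta)$ to \eqref{eq:C2Tr} and then invokes the homomorphism property of $\Delta$ together with \eqref{eq:idDL}. Your write-up merely fills in the details the paper leaves implicit (commuting the coproduct past $\Tr_a$, the inertness of the space-$1$ and space-$a$ factors, and the treatment of the inverse), all of which are handled correctly.
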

\begin{proof}
	Apply $(\id \otimes \Delta)$ to equation \eqref{eq:C2Tr}. Then, on the RHS of the resulting equation, use the fact that $\Delta$ is a homomorphism and the properties \eqref{eq:idDL}.
\end{proof}
The partially closed braids associated to the algebraic expressions \eqref{eq:C2Tr} and \eqref{eq:C23Tra} are illustrated in Figure \ref{fig:Q2Q23} at the end of this appendix. Note that one could proceed similarly for the element $\Q_3$, but this will actually not be useful for deriving the Askey--Wilson relations later.

Finally, we will need the following proposition to express the intermediate Casimir elements associated to the recoupling of the factors $1$ and $3$ of $\Ut$. 
\begin{prop} The following equations of operators acting on three $\Usl$-representations hold:
	\begin{align}
		\Q_{13}=&\check{\cR}_{23}^{-1}\Q_{12}\check{\cR}_{23} = \Tr_{a} (L^+_{a1}L^+_{a2}L^+_{a3}L^-_{a3}(L_{a2}^+)^{-1}L_{a1}^-M_a), \label{eq:C13Tra} \\
		\widetilde{\Q}_{13}=&\check{\cR}_{23}\Q_{12}\check{\cR}_{23}^{-1} = \Tr_{a} (L^+_{a1}(L^-_{a2})^{-1}L^+_{a3}L^-_{a3}L^-_{a2}L^-_{a1}M_a). \label{eq:C13tTra}
	\end{align}
\end{prop}
\begin{proof}
	One may use the expression \eqref{eq:C12Tra} and then proceed as indicated below: 
	\begin{align}
		\check{\cR}_{23}^{-1}\Q_{12}\check{\cR}_{23} 
		&= \check{\cR}_{23}^{-1}\Tr_{a} (L^+_{a1}L^+_{a2}L^-_{a2}L_{a1}^-M_a)\check{\cR}_{23} \\
		&= \cR_{23}^{-1}\Tr_{a} (L^+_{a1}L^+_{a3}L^-_{a3}L_{a1}^-M_a)\cR_{23} \\
		&= \cR_{23}^{-1}\Tr_{a} (L^+_{a1}L^+_{a3}L^-_{a3}\cR_{23}L^+_{a2}(L_{a2}^+)^{-1}L_{a1}^-M_a) \\
		&= \cR_{23}^{-1}\Tr_{a} (L^+_{a1}L^+_{a3}L^+_{a2}\cR_{23}L^-_{a3}(L_{a2}^+)^{-1}L_{a1}^-M_a) \\
		&= \cR_{23}^{-1}\Tr_{a} (L^+_{a1}\cR_{23}L^+_{a2}L^+_{a3}L^-_{a3}(L_{a2}^+)^{-1}L_{a1}^-M_a) \\
		&= \Tr_{a} (L^+_{a1}L^+_{a2}L^+_{a3}L^-_{a3}(L_{a2}^+)^{-1}L_{a1}^-M_a),
	\end{align}
	\begin{align}
		\check{\cR}_{23}\Q_{12}\check{\cR}_{23}^{-1}
		&= \check{\cR}_{23}\Tr_{a} (L^+_{a1}L^+_{a2}L_{a2}^-L_{a1}^-M_a)\check{\cR}_{23}^{-1} \\
		&= \cR_{32}\Tr_{a} (L^+_{a1}L^+_{a3}L_{a3}^-L_{a1}^-M_a)\cR_{32}^{-1} \\
		&= \Tr_{a} (L^+_{a1}(L^-_{a2})^{-1}L^-_{a2}\cR_{32}L^+_{a3}L_{a3}^-L_{a1}^-M_a)\cR_{32}^{-1} \\
		&= \Tr_{a} (L^+_{a1}(L^-_{a2})^{-1}L^+_{a3}\cR_{32}L^-_{a2}L_{a3}^-L_{a1}^-M_a)\cR_{32}^{-1} \\
		&= \Tr_{a} (L^+_{a1}(L^-_{a2})^{-1}L^+_{a3}L^-_{a3}L_{a2}^-\cR_{32}L_{a1}^-M_a)\cR_{32}^{-1} \\
		&=\Tr_{a} (L^+_{a1}(L^-_{a2})^{-1}L^+_{a3}L^-_{a3}L_{a2}^-L_{a1}^-M_a).
	\end{align}
\end{proof}
The partially closed braids associated to the algebraic expressions \eqref{eq:C13Tra} and \eqref{eq:C13tTra} are illustrated in Figure \ref{fig:Q13}.

\subsection{Recovering the Askey--Wilson relations} \label{app:AWRmatrix}

In this subsection, we derive the Askey--Wilson relations \eqref{eq:AW1}--\eqref{eq:AW3} with the method of partial traces of (universal) $R$-matrices.
 
Using the expressions \eqref{eq:C12Tra} and \eqref{eq:C23Tra} of the previous subsection and the fact that $\Q_{12}$ commutes with the diagonal action of $\Usl$ in $\Usl \otimes \Usl$, one can write the product $\Q_{12}\Q_{23}$ as follows:
\begin{align}
	\Q_{12}\Q_{23}
	&=\Q_{12}\Tr_{a} (L^+_{a1}L^+_{a2} L^+_{a3} L^-_{a3} L_{a2}^-(L_{a1}^+)^{-1}   M_a) \label{eq:ht1} \\
	&=\Tr_{a} (L^+_{a1}L^+_{a2} L^+_{a3}\Q_{12}  L^-_{a3} L_{a2}^-(L_{a1}^+)^{-1}   M_a) \label{eq:ht2}  \\
	&=\Tr_{a} (L^+_{a1}L^+_{a2} L^+_{a3}\Tr_{b} ( L^+_{b1}L^+_{b2}L_{b2}^-L_{b1}^-M_b) L^-_{a3} L_{a2}^-(L_{a1}^+)^{-1} M_a) \label{eq:ht3}  \\ 
	&=\Tr_{ab} (R_{ba} L^+_{a1}L^+_{a2}L^+_{b1}L^+_{b2} L^+_{a3}L^-_{a3} L_{b2}^-L_{b1}^-L_{a2}^-(L_{a1}^+)^{-1}  R_{ba}^{-1} M_aM_b) \label{eq:ht4} \\
	&=\Tr_{ab}(L^+_{b1}L^+_{b2}L^+_{a1}L^+_{a2}R_{ba}L^+_{a3}L^-_{a3}R_{ba}^{-1}L_{a2}^-(L_{a1}^+)^{-1}L_{b2}^-L_{b1}^- M_aM_b). \label{eq:ht5}
\end{align}  
The partially closed braid corresponding to \eqref{eq:ht5} is illustrated in Figure \ref{fig:Q12Q23}.

In order to recover the AW relations, the idea is to simplify the two blue crossings which appear in the braid of Figure \ref{fig:Q12Q23}. Algebraically, these two crossings correspond to the $R$-matrices acting on the spaces $a$ and $b$ in \eqref{eq:ht5}. In two spins-$1/2$ representations, one can verify that  the braided $R$-matrix $\check{R}=\Pi_{12}R$ and its inverse can be written as 
\begin{equation}
	\check{R} = q^{\frac{1}{2}} - q^{-\frac{1}{2}}P, \qquad \check{R}^{-1} = q^{-\frac{1}{2}} - q^{\frac{1}{2}}P, \label{eq:cRP}
\end{equation}
where
\begin{equation}
	P =
	\begin{pmatrix}
		0 & 0   & 0  & 0 \\
		0 & \qi & -1 & 0 \\ 
		0 & -1  & q  & 0 \\
		0 & 0   & 0  & 0
	\end{pmatrix}. \label{eq:P}
\end{equation}
It is easy to verify that the matrix $P$ satisfies
\begin{align}
	&P^2=(q+\qi)P, \\
	&\Tr_{1} (P(M\otimes \bI_2)) = \bI_2. \label{eq:trE}
\end{align}
The following propositions are also useful.
\begin{prop}\label{prop:trE}
	Denote an element of $\Usl \otimes \Usl$ with the first factor represented in the spin $1/2$ representation by $F=F^\alpha \otimes f_\alpha$ (a sum with respect to $\alpha$ is understood).
	Then the following equation holds
	\begin{equation}
		P_{12}F_{23}P_{12}=P_{12} \Tr_a(F_{a3}M_{a}). \label{eq:trEfE} 
	\end{equation} 
\end{prop}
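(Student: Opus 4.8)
The plan is to reduce the statement to a short, purely two-dimensional computation and then exploit the fact that $P$ has rank one. First I would observe that in $F_{23}=(F^\alpha)_2\,(f_\alpha)_3$ the operators $(f_\alpha)_3$ act on the third factor, which is disjoint from the two spin-$1/2$ spaces on which $P_{12}$ acts; hence each $(f_\alpha)_3$ commutes with $P_{12}$ and can be pulled out of the sandwich. Likewise the right-hand side is $\Tr_a(F_{a3}M_a)=\sum_\alpha \Tr(F^\alpha M)\,(f_\alpha)_3$, a scalar on spaces $1,2$ times an operator on space $3$ that again commutes with $P_{12}$. After factoring out $(f_\alpha)_3$ and summing over $\alpha$, the claim \eqref{eq:trEfE} reduces to the finite-dimensional identity
\begin{equation}
	P\,(\bI_2\otimes G)\,P = \Tr(GM)\,P , \nonumber
\end{equation}
to be proved for an arbitrary endomorphism $G$ of the spin-$1/2$ space (here $G$ plays the role of $F^\alpha$ on the middle space, and $\bI_2$ sits on space $1$).

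The key observation is that the matrix $P$ of \eqref{eq:P} has rank one. Explicitly, I would write $P=\ket{v}\bra{v}$ with
\begin{equation}
	\ket{v}=q^{-1/2}\,\ket{1}\otimes\ket{2}-q^{1/2}\,\ket{2}\otimes\ket{1}, \nonumber
\end{equation}
where $\ket{1},\ket{2}$ is the ordered basis in which $M=\mathrm{diag}(q,\qi)$ as in \eqref{eq:repmu}. A direct check reproduces the four nonzero entries of \eqref{eq:P}, and $\braket{v|v}=q+\qi$ recovers the relation $P^2=(q+\qi)P$ already recorded just after \eqref{eq:P}.

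With this decomposition the rest is immediate. Since $P=\ket{v}\bra{v}$,
\begin{equation}
	P\,(\bI_2\otimes G)\,P=\ket{v}\,\big(\bra{v}(\bI_2\otimes G)\ket{v}\big)\,\bra{v}=\big(\bra{v}(\bI_2\otimes G)\ket{v}\big)\,P , \nonumber
\end{equation}
so only the scalar $\bra{v}(\bI_2\otimes G)\ket{v}$ remains to be evaluated. Expanding $\ket{v}$ and using $G\ket{b}=\sum_a G_{ab}\ket{a}$ gives $\bra{v}(\bI_2\otimes G)\ket{v}=q\,G_{11}+\qi\,G_{22}$, which is exactly $\Tr(GM)$ because $M=\mathrm{diag}(q,\qi)$. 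This establishes the reduced identity and hence \eqref{eq:trEfE}.

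I do not expect a genuine obstacle: the statement is a rank-one ``sandwiching'' identity and the computation is routine linear algebra. The only point requiring care is fixing the normalization and signs in $\ket{v}$ so that $\ket{v}\bra{v}$ equals \eqref{eq:P} on the nose (rather than up to a scalar); once that is pinned down, the trace against $M$ falls out. As an alternative to the rank-one argument, one could verify the reduced identity on the four matrix units $G=E_{ab}$, using $\Tr(E_{ab}M)=\delta_{ab}M_{bb}$ together with the explicit form of $P$; this is a finite check but is less transparent than the $\ket{v}\bra{v}$ computation, and it does not reuse the already-established property \eqref{eq:trE} as cleanly.
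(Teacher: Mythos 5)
Your proof is correct, and it reaches the paper's conclusion by a structurally different verification of the key step. Both you and the paper begin with the same reduction: since $(f_\alpha)_3$ acts on a factor disjoint from the two spin-$1/2$ spaces, $P_{12}F_{23}P_{12}=\bigl(P(\bI_2\otimes F^\alpha)P\bigr)\otimes f_\alpha$, so everything hinges on the two-space identity $P(\bI_2\otimes G)P=\Tr(GM)\,P$. The paper establishes this identity by brute force: it writes $G$ as a general $2\times 2$ matrix and multiplies out the $4\times 4$ matrices explicitly, reading off $P(\bI_2\otimes G)P=(qG_{11}+\qi G_{22})P$. You instead observe that $P$ has rank one, $P=\ket{v}\bra{v}$ with $\ket{v}=q^{-1/2}\ket{1}\otimes\ket{2}-q^{1/2}\ket{2}\otimes\ket{1}$, so that \emph{any} sandwich $P X P$ is automatically proportional to $P$, and the proportionality constant $\bra{v}(\bI_2\otimes G)\ket{v}=qG_{11}+\qi G_{22}=\Tr(GM)$ falls out of a two-line expansion. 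Your route is more illuminating: it explains \emph{why} the identity must hold (rank-one projection onto the $q$-deformed singlet), it makes the appearance of the quantum trace $\Tr(\,\cdot\,M)$ look inevitable rather than accidental, and it cleanly recovers $P^2=(q+\qi)P$ from $\braket{v|v}=q+\qi$; the paper's computation, by contrast, requires no structural insight and is a self-contained finite check. One small point of care in your argument, given that $q$ is a generic complex number: $\bra{v}$ must be read as the formal transpose (dual vector in the sense of the pairing \eqref{eq:orthcompl}), not the Hermitian conjugate, otherwise $\ket{v}\bra{v}$ would involve conjugates of $q^{\pm 1/2}$ and would not reproduce \eqref{eq:P} exactly; your entrywise computation implicitly uses the correct convention, but it is worth stating.
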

\begin{proof}	
	Write
	\begin{equation}
		P_{12}F_{23}P_{12}=(P\otimes 1)( \bI_2 \otimes F^\alpha \otimes f_\alpha)(P\otimes 1) =P(\bI_2 \otimes F^\alpha)P \otimes f_\alpha.
	\end{equation} 
	Suppose that the matrix $F^\alpha$ is given by
	\begin{equation}
		F^\alpha = 
		\begin{pmatrix}
			a & b \\
			c & d 
		\end{pmatrix}. \label{eq:F}
	\end{equation}
	Using the explicit matrix representations \eqref{eq:P} for $P$, \eqref{eq:F} for $F^\alpha$ and \eqref{eq:repmu} for $M$, it is easy to verify that
	\begin{equation}
		P(\bI_2 \otimes F^\alpha)P = (qa + \qi d) P = \Tr(F^\alpha M)P.
	\end{equation}
	Hence
	\begin{equation}
		P_{12}F_{23}P_{12}=\Tr(F^\alpha M)P \otimes f_\alpha=P \otimes \Tr(F^\alpha M) f_\alpha=P_{12} \Tr_a(F_{a3}M_a).
	\end{equation} 
\end{proof}
\begin{prop}\label{prop:ER}
	The following equations hold
	\begin{align}
		P_{12}&=L^+_{23}L^+_{13}P_{12}(L^+_{13})^{-1}(L^+_{23})^{-1}, \label{eq:RE1} \\
		&=(L^-_{23})^{-1}(L^-_{13})^{-1}P_{12}L^-_{13}L^-_{23}, \label{eq:RE2} \\
		&=L^+_{23}L^+_{13}P_{12}L^-_{13}L^-_{23}. \label{eq:RE3}
	\end{align}
\end{prop}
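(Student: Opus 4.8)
The plan is to read each of the three identities as a statement about how the spin-$1/2$ object $P_{12}$ interacts with the ``doubled'' operators $L^+_{23}L^+_{13}$ and $L^-_{13}L^-_{23}$. Multiplying \eqref{eq:RE1} on the right by $L^+_{23}L^+_{13}$ shows it is equivalent to the commutation relation $[P_{12},L^+_{23}L^+_{13}]=0$, and likewise \eqref{eq:RE2} is equivalent to $[P_{12},L^-_{13}L^-_{23}]=0$; only \eqref{eq:RE3} is a genuinely two-sided ``sandwich''. Two facts would be used throughout. First, by \eqref{eq:cRP} one has $\check{R}_{12}=q^{1/2}-q^{-1/2}P_{12}=\Pi_{12}R_{12}$, so any operator commuting with $\check{R}_{12}$ commutes with $P_{12}$. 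Second, since $\check{R}_{12}$ commutes with the diagonal (coproduct) action of $\Usl$, the operator $P_{12}/(q+q^{-1})$ is the projector onto the trivial (spin-$0$) subrepresentation $V_0\subset V_{1/2}\otimes V_{1/2}$ for the representation $\rho:=(\pi\otimes\pi)\circ\Delta$, where $\pi$ is the spin-$1/2$ representation; on $V_0$ each $\rho(x)$ acts as the counit $\epsilon(x)$.

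For \eqref{eq:RE1} and \eqref{eq:RE2} I would argue purely from the FRT relations, using $\Pi_{12}L^{\pm}_{13}\Pi_{12}=L^{\pm}_{23}$. From \eqref{eq:FRT2} one computes
\[ \check{R}_{12}L^+_{23}L^+_{13}=\Pi_{12}R_{12}L^+_{23}L^+_{13}=\Pi_{12}L^+_{13}L^+_{23}R_{12}=L^+_{23}L^+_{13}\Pi_{12}R_{12}=L^+_{23}L^+_{13}\check{R}_{12}, \]
so $[\check{R}_{12},L^+_{23}L^+_{13}]=0$ and hence $[P_{12},L^+_{23}L^+_{13}]=0$, which is \eqref{eq:RE1}. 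The identical manipulation with \eqref{eq:FRT1} in place of \eqref{eq:FRT2} gives $[\check{R}_{12},L^-_{13}L^-_{23}]=0$ and therefore \eqref{eq:RE2}.

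For \eqref{eq:RE3} mere commutation is insufficient, and here I would exploit the representation-theoretic meaning of $P_{12}$. The coproduct axioms \eqref{eq:idDR} identify the doubled $L$-operators with the universal $R$-matrix evaluated in the tensor-square representation on the spaces $1,2$: one finds $L^-_{13}L^-_{23}=(\rho\otimes\id)(\cR)$ and, applying the corresponding axiom to $\cR_{21}$ (the $R$-matrix for $\Delta^{\mathrm{op}}$), $L^+_{23}L^+_{13}=(\rho\otimes\id)(\cR_{21})$. Since $P_{12}$ projects onto $V_0$, on which $\rho$ factors through $\epsilon$, and since the counit axiom gives $(\epsilon\otimes\id)(\cR)=(\epsilon\otimes\id)(\cR_{21})=1$, I obtain the stronger one-sided ``eating'' relations
\[ L^+_{23}L^+_{13}\,P_{12}=P_{12}=P_{12}\,L^-_{13}L^-_{23}. \]
Combining them gives $L^+_{23}L^+_{13}P_{12}L^-_{13}L^-_{23}=P_{12}L^-_{13}L^-_{23}=P_{12}$, which is exactly \eqref{eq:RE3}. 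These eating relations also re-prove \eqref{eq:RE1} and \eqref{eq:RE2} simultaneously, so one could in principle bypass the FRT computation above.

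The main obstacle is the step feeding \eqref{eq:RE3}: one must recognize that $P_{12}$ is not merely \emph{some} operator commuting with the $L$-products, but the projector onto the trivial summand, and then convert this into the counit identity. Concretely, the delicate points are getting the coproduct bookkeeping right --- in particular that $L^+_{23}L^+_{13}$ corresponds to $(\Delta\otimes\id)(\cR_{21})=\cR_{32}\cR_{31}$ rather than to $(\Delta\otimes\id)(\cR)$ --- and checking that the eigenspace of $\check{R}_{12}$ carved out by $P$ (the eigenvalue $-q^{-3/2}$) is indeed the trivial subrepresentation $V_0$ for $\rho$ on which $\epsilon$ acts. Everything else reduces to the counit property of $\cR$ and the FRT identities already recorded in the excerpt.
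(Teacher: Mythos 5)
Your proof is correct, but it is not (entirely) the paper's proof. For \eqref{eq:RE1} and \eqref{eq:RE2}, your FRT manipulation (conjugating \eqref{eq:FRT2}, resp.\ \eqref{eq:FRT1}, by $\Pi_{12}$ and using \eqref{eq:cRP} to trade $\check{R}_{12}$ for $P_{12}$) is exactly the alternative algebraic argument the paper itself records; the paper's actual proof of the full proposition, and in particular of \eqref{eq:RE3}, is instead a direct computation with the explicit matrices \eqref{eq:Lm}, \eqref{eq:Lp} and \eqref{eq:P}. Your treatment of \eqref{eq:RE3} is genuinely different, and necessarily so, since commutation alone cannot reach it: writing $\rho=(\pi\otimes\pi)\circ\Delta$ with $\pi$ the spin-$1/2$ representation, relation \eqref{eq:RE1} reduces \eqref{eq:RE3} to $P_{12}\,(\rho\otimes\id)(\cR_{21}\cR_{12})=P_{12}$, and the monodromy $\cR_{21}\cR_{12}$ is nontrivial. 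Your route --- $P_{12}$ equals $(q+\qi)$ times the $\Usl$-equivariant projector onto the trivial summand $V_0\subset V_{1/2}\otimes V_{1/2}$, the identifications $L^-_{13}L^-_{23}=(\rho\otimes\id)(\cR)$ and $L^+_{23}L^+_{13}=(\rho\otimes\id)(\cR_{21})$ follow from \eqref{eq:idDR}, and the counit identities $(\epsilon\otimes\id)(\cR)=(\id\otimes\epsilon)(\cR)=1$ then yield the absorption relations $L^+_{23}L^+_{13}P_{12}=P_{12}=P_{12}L^-_{13}L^-_{23}$ --- is sound; these absorption relations can be confirmed against the explicit matrices and do hold. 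What your argument buys: stronger one-sided identities delivering all three equations at once, a conceptual explanation of the proposition (the universal $R$-matrix is invisible on the trivial representation), and a proof that generalizes verbatim to any quasitriangular Hopf algebra possessing an invariant rank-one idempotent. What the paper's computation buys: self-containedness, since it uses only objects displayed in the paper, whereas you must import the counit $\epsilon$ (never introduced there) together with its standard compatibility with $\cR$, and you must actually carry out the verification you only flag, namely that the image of $P$ is the trivial subrepresentation; this is routine --- the image is spanned by $q^{-1}\ket{\tfrac12,\tfrac12;\tfrac12,-\tfrac12}-\ket{\tfrac12,-\tfrac12;\tfrac12,\tfrac12}$, which is annihilated by $\Delta(E)$ and $\Delta(F)$ and fixed by $\Delta(q^H)$ --- but it belongs in a complete write-up.
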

\begin{proof} It is straightforward to verify these relations using the explicit matrix representations \eqref{eq:Lm}, \eqref{eq:Lp} and \eqref{eq:P}. Alternatively, \eqref{eq:RE1} and \eqref{eq:RE2} can be proven algebraically by using \eqref{eq:cRP} and the Yang--Baxter equations for $L^{\pm}$.
\end{proof}
One can use \eqref{eq:cRP} to write the partial trace \eqref{eq:ht5} as follows
\begin{align}
	&\Tr_{ab}(L^+_{b1}L^+_{b2}L^+_{a1}L^+_{a2}R_{ba}L^+_{a3}L^-_{a3}R_{ba}^{-1}L_{a2}^-(L_{a1}^+)^{-1}L_{b2}^-L_{b1}^- M_aM_b) \nonumber \\
	=&\Tr_{ab} (L^+_{b1}L^+_{b2}L^+_{a1}L^+_{a2}\check{R}_{ab}L^+_{b3}L^-_{b3}\check{R}_{ab}^{-1}L_{a2}^-(L_{a1}^+)^{-1}L_{b2}^-L_{b1}^-M_aM_b) \\
	=&\Tr_{ab} (L^+_{b1}L^+_{b2}L^+_{a1}L^+_{a2}L^+_{b3}L^-_{b3}L_{a2}^-(L_{a1}^+)^{-1}L_{b2}^-L_{b1}^-M_aM_b) \label{eq:4terms} \\
	-q &\Tr_{ab} (L^+_{b1}L^+_{b2}L^+_{a1}L^+_{a2}L^+_{b3}L^-_{b3}P_{ab}L_{a2}^-(L_{a1}^+)^{-1}L_{b2}^-L_{b1}^-M_aM_b) \nonumber \\
	-q^{-1} &\Tr_{ab} (L^+_{b1}L^+_{b2}L^+_{a1}L^+_{a2}P_{ab}L^+_{b3}L^-_{b3}L_{a2}^-(L_{a1}^+)^{-1}L_{b2}^-L_{b1}^-M_aM_b) \nonumber \\
	+ &\Tr_{ab} (L^+_{b1}L^+_{b2}L^+_{a1}L^+_{a2}P_{ab}L^+_{b3}L^-_{b3}P_{ab}L_{a2}^-(L_{a1}^+)^{-1}L_{b2}^-L_{b1}^-M_aM_b). \nonumber
\end{align}
We will now show that each of the terms in equation \eqref{eq:4terms} can be written using the intermediate Casimir elements of $\Usl$.

The trace in the first term of \eqref{eq:4terms} is simplified as follows:
\begin{align}
	&\Tr_{ab} (L^+_{b1}L^+_{b2}L^+_{a1}L^+_{a2}L^+_{b3}L^-_{b3}L_{a2}^-(L_{a1}^+)^{-1}L_{b2}^-L_{b1}^-M_aM_b) \nonumber \\
	=&\Tr_{ab} (L^+_{b1}L^+_{b2}L^+_{a1}L^+_{a2}L_{a2}^-(L_{a1}^+)^{-1}L^+_{b3}L^-_{b3}L_{b2}^-L_{b1}^-M_aM_b)\\
	=&\Tr_{b} (L^+_{b1}L^+_{b2}\Tr_{a}(L^+_{a1}L^+_{a2}L_{a2}^-(L_{a1}^+)^{-1}M_a)L^+_{b3}L^-_{b3}L_{b2}^-L_{b1}^-M_b) \\
	=&\Tr_{b} (L^+_{b1}L^+_{b2}\Q_2L^+_{b3}L^-_{b3}L_{b2}^-L_{b1}^-M_b)  \\
	=&\Q_2\Tr_{b} (L^+_{b1}L^+_{b2}L^+_{b3}L^-_{b3}L_{b2}^-L_{b1}^-M_b) \\
	=&\Q_2\Q_{123}.
\end{align}
In the previous equations, we used the expressions \eqref{eq:C123Tra} and \eqref{eq:C2Tr} and the the fact that $\Q_2$ is central. The partially closed braid corresponding to the algebraic partial trace expression for $\Q_2\Q_{123}$ is illustrated in Figure \ref{fig:Q2Q123}. 

The trace in the second term of \eqref{eq:4terms} is simplified by using \eqref{eq:RE3}, \eqref{eq:RE1} and \eqref{eq:trE} as indicated:
\begin{align}
	&\Tr_{ab} (L^+_{b1}L^+_{b2}L^+_{a1}L^+_{a2}L^+_{b3}L^-_{b3}P_{ab}L_{a2}^-L_{b2}^-(L_{a1}^+)^{-1}L_{b1}^-M_aM_b) \nonumber \\
	=&\Tr_{ab} (L^+_{b1}L^+_{b2}L^+_{a1}L^+_{b3}L^-_{b3}(L_{b2}^+)^{-1}P_{ab}(L_{a1}^+)^{-1}L_{b1}^-M_aM_b) \\
	=&\Tr_{ab} (L^+_{b1}L^+_{b2}L^+_{b3}L^-_{b3}(L_{b2}^+)^{-1}(L_{b1}^+)^{-1}P_{ab}L_{b1}^+L_{b1}^-M_aM_b) \\
	=&\Tr_{b} (L^+_{b1}L^+_{b2}L^+_{b3}L^-_{b3}(L_{b2}^+)^{-1}(L_{b1}^+)^{-1}\Tr_{a}(P_{ab}M_a)L_{b1}^+L_{b1}^-M_b) \\
	=&\Tr_{b} (L^+_{b1}L^+_{b2}L^+_{b3}L^-_{b3}(L_{b2}^+)^{-1}L_{b1}^-M_b) \\
	=&\Q_{13}.
\end{align}

The trace in the third term of \eqref{eq:4terms} can be simplified in a similar manner by using the results \eqref{eq:RE3}, \eqref{eq:RE1} and \eqref{eq:trE}: 
\begin{align}
	&\Tr_{ab} (L^+_{b1}L^+_{b2}L^+_{a1}L^+_{a2}P_{ab}L^+_{b3}L^-_{b3}L_{a2}^-(L_{a1}^+)^{-1}L_{b2}^-L_{b1}^-M_aM_b) \nonumber \\
	=&\Tr_{ab} (L^+_{b1}L^+_{a1}P_{ab}(L^-_{b2})^{-1}(L^-_{a2})^{-1}L^+_{b3}L^-_{b3}L_{a2}^-(L_{a1}^+)^{-1}L_{b2}^-L_{b1}^-M_aM_b) \\
	=&\Tr_{ab} (P_{ab}L^+_{b1}(L^-_{b2})^{-1}L^+_{b3}L^-_{b3}L_{b2}^-L_{b1}^-M_aM_b) \\
	=&\Tr_{b} (\Tr_{a}(P_{ab}M_a)L^+_{b1}(L^-_{b2})^{-1}L^+_{b3}L^-_{b3}L_{b2}^-L_{b1}^-M_b) \\
	=&\Tr_{b} (L^+_{b1}(L^-_{b2})^{-1}L^+_{b3}L^-_{b3}L_{b2}^-L_{b1}^-M_b) \\
	=&\widetilde{\Q}_{13}.
\end{align}

Finally, the trace in the fourth term of \eqref{eq:4terms} is simplified with the help of equations \eqref{eq:trEfE}, \eqref{eq:RE3}, \eqref{eq:RE1} and \eqref{eq:trE}:
\begin{align}
	&\Tr_{ab} (L^+_{b1}L^+_{b2}L^+_{a1}L^+_{a2}P_{ab}L^+_{b3}L^-_{b3}P_{ab}L_{a2}^-(L_{a1}^+)^{-1}L_{b2}^-L_{b1}^-M_aM_b) \nonumber \\
	=& \Tr_{ab} (L^+_{b1}L^+_{b2}L^+_{a1}L^+_{a2}P_{ab}\Tr_c(L^+_{c3}L^-_{c3}M_c)L_{a2}^-(L_{a1}^+)^{-1}L_{b2}^-L_{b1}^-M_aM_b) \\
	=& \Tr_{ab} (L^+_{b1}L^+_{b2}L^+_{a1}L^+_{a2}P_{ab}\Q_3L_{a2}^-(L_{a1}^+)^{-1}L_{b2}^-L_{b1}^-M_aM_b) \\
	=& \Tr_{ab} (L^+_{b1}L^+_{a1}L^+_{b2}L^+_{a2}P_{ab}L_{a2}^-L_{b2}^-(L_{a1}^+)^{-1}L_{b1}^-M_aM_b)\Q_3 \\
	=& \Tr_{ab} (L^+_{b1}L^+_{a1}P_{ab}(L_{a1}^+)^{-1}L_{b1}^-M_aM_b)\Q_3 \\
	=& \Tr_{ab} (P_{ab}L^+_{b1}L_{b1}^-M_aM_b)\Q_3 \\
	=& \Tr_{b} (\Tr_{a}(P_{ab}M_a)L^+_{b1}L_{b1}^-M_b)\Q_3 \\
	=& \Tr_{b} (L^+_{b1}L_{b1}^-M_b)\Q_3 \\
	=& \Q_1\Q_3.
\end{align}
Note that we also used expression \eqref{eq:C1Tra} and the fact that $Q_3$ is central.

Therefore, by combining the previous results, we have found
\begin{equation}
	\Q_{12}\Q_{23} = \Q_2\Q_{123} -q \Q_{13} - \qi \widetilde{\Q}_{13} + \Q_1\Q_3. \label{eq:prodQ12Q23Tr}
\end{equation}
One can proceed similarly for the product $\Q_{23}\Q_{12}$, the difference being that the blue crossings in Figure \ref{fig:Q12Q23} have to be inverted. From \eqref{eq:cRP}, we see that this corresponds to taking $q \to \qi$ in \eqref{eq:prodQ12Q23Tr}. From this results, the Askey--Wilson relation \eqref{eq:AW1} is recovered. The relations \eqref{eq:AW2} and \eqref{eq:AW3} can be obtained by conjugations of \eqref{eq:AW1} with universal $R$-matrices, as shown in \cite{CGVZ}.

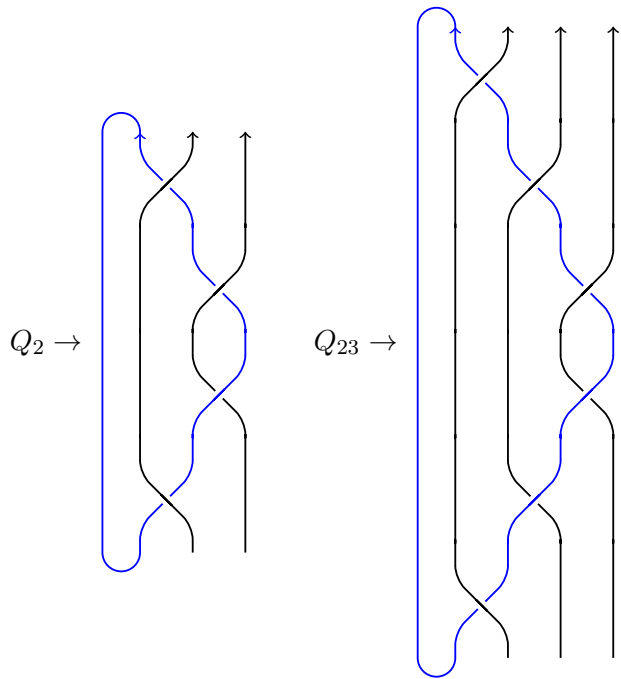
\begin{figure}[H]
	$\Q_{2} \rightarrow \ $  
	\begin{tikzpicture}[line width=\lw,baseline={([yshift=-\eseq]current bounding box.center)}]
		\draw[\colfund,decoration={markings, mark=at position 0.495 with {\arrow{>}}},postaction={decorate}]  { [rounded corners=5pt] \ucrossl{0}{0} -- ++\ocrossl{0}{0} -- ++\ocrossr{0}{0} -- ++\ocrossr{0}{0}} (0,\yd+\yd+\yd+\yd) arc(0:180:\ra) -- (-2*\ra,0) arc(180:360:\ra) -- (0,0);
		\draw[->,rounded corners=5pt]  \ucrossr{\xd}{0} -- ++\strand{0}{0} -- ++\strand{0}{0} -- ++\ocrossl{0}{0};
		\draw[->,rounded corners=5pt]  \strand{2*\xd}{0} -- ++\ocrossr{0}{0} -- ++\ocrossl{0}{0} -- ++\strand{0}{0};
	\end{tikzpicture}  $ \qquad \Q_{23} \rightarrow \ $ 
	\begin{tikzpicture}[scale=1,line width=\lw,baseline={([yshift=-\eseq]current bounding box.center)}]
		\draw[\colfund,decoration={markings, mark=at position 0.523 with {\arrow{>}}},postaction={decorate}]  { [rounded corners=5pt] \ucrossl{0}{0} -- ++\ocrossl{0}{0} -- ++\ocrossl{0}{0} -- ++\ocrossr{0}{0} -- ++\ocrossr{0}{0} -- ++\ocrossr{0}{0}} (0,\yd+\yd+\yd+\yd+\yd+\yd) arc(0:180:\ra) -- (-2*\ra,0) arc(180:360:\ra) -- (0,0);
		\draw[->,rounded corners=5pt]  \ucrossr{\xd}{0} -- ++\strand{0}{0} -- ++\strand{0}{0} -- ++\strand{0}{0} -- ++\strand{0}{0} -- ++\ocrossl{0}{0};
		\draw[->,rounded corners=5pt]  \strand{2*\xd}{0} -- ++\ocrossr{0}{0} -- ++\strand{0}{0} -- ++\strand{0}{0} -- ++\ocrossl{0}{0} -- ++\strand{0}{0};
		\draw[->,rounded corners=5pt]  \strand{3*\xd}{0} -- ++\strand{0}{0} -- ++\ocrossr{0}{0} -- ++\ocrossl{0}{0} -- ++\strand{0}{0} -- ++\strand{0}{0};
	\end{tikzpicture}
	\centering
	\caption{Partially closed braids associated to the intermediate Casimir elements $\Q_{2}$ and $\Q_{23}$.}
	\label{fig:Q2Q23}
\end{figure}

\begin{figure}[H]
	$\Q_{13} \rightarrow \ $  
	\begin{tikzpicture}[line width=\lw,baseline={([yshift=-\eseq]current bounding box.center)}]
		\draw[\colfund,decoration={markings, mark=at position 0.523 with {\arrow{>}}},postaction={decorate}]  { [rounded corners=5pt] \ocrossl{0}{0} -- ++\ucrossl{0}{0} -- ++\ocrossl{0}{0} -- ++\ocrossr{0}{0} -- ++\ocrossr{0}{0} -- ++\ocrossr{0}{0}} (0,\yd+\yd+\yd+\yd+\yd+\yd) arc(0:180:\ra) -- (-2*\ra,0) arc(180:360:\ra) -- (0,0);
		\draw[->,rounded corners=5pt]  \ocrossr{\xd}{0} -- ++\strand{0}{0} -- ++\strand{0}{0} -- ++\strand{0}{0} -- ++\strand{0}{0} -- ++\ocrossl{0}{0};
		\draw[->,rounded corners=5pt]  \strand{2*\xd}{0} -- ++\ucrossr{0}{0} -- ++\strand{0}{0} -- ++\strand{0}{0} -- ++\ocrossl{0}{0} -- ++\strand{0}{0};
		\draw[->,rounded corners=5pt]  \strand{3*\xd}{0} -- ++\strand{0}{0} -- ++\ocrossr{0}{0} -- ++\ocrossl{0}{0} -- ++\strand{0}{0} -- ++\strand{0}{0};
	\end{tikzpicture}  $ \qquad \widetilde{\Q}_{13} \rightarrow \ $ 
	\begin{tikzpicture}[scale=1,line width=\lw,baseline={([yshift=-\eseq]current bounding box.center)}]
		\draw[\colfund,decoration={markings, mark=at position 0.547 with {\arrow{>}}},postaction={decorate}]  { [rounded corners=5pt] \ocrossl{0}{0} -- ++\ocrossl{0}{0} -- ++\ocrossl{0}{0} -- ++\ocrossr{0}{0} -- ++\ucrossr{0}{0} -- ++\ocrossr{0}{0}} (0,\yd+\yd+\yd+\yd+\yd+\yd) arc(0:180:\ra) -- (-2*\ra,0) arc(180:360:\ra) -- (0,0);
		\draw[->,rounded corners=5pt]  \ocrossr{\xd}{0} -- ++\strand{0}{0} -- ++\strand{0}{0} -- ++\strand{0}{0} -- ++\strand{0}{0} -- ++\ocrossl{0}{0};
		\draw[->,rounded corners=5pt]  \strand{2*\xd}{0} -- ++\ocrossr{0}{0} -- ++\strand{0}{0} -- ++\strand{0}{0} -- ++\ucrossl{0}{0} -- ++\strand{0}{0};
		\draw[->,rounded corners=5pt]  \strand{3*\xd}{0} -- ++\strand{0}{0} -- ++\ocrossr{0}{0} -- ++\ocrossl{0}{0} -- ++\strand{0}{0} -- ++\strand{0}{0};
	\end{tikzpicture}
	\centering
	\caption{Partially closed braids associated to the intermediate Casimir elements $\Q_{13}$ and $\widetilde{\Q}_{13}$.}
	\label{fig:Q13}
\end{figure}

\begin{figure}[H]
	$\Q_{12}\Q_{23} \rightarrow \ $  
	\begin{tikzpicture}[scale=1,line width=\lw,baseline={([yshift=-\eseq]current bounding box.center)}]
		\draw[\colfund,decoration={markings, mark=at position 0.53 with {\arrow{>}}},postaction={decorate}]  { [rounded corners=5pt] \strand{0}{0} -- ++\ucrossl{0}{0} -- ++\ocrossl{0}{0} -- ++\ucrossl{0}{0} -- ++\ocrossl{0}{0} -- ++\ocrossr{0}{0} -- ++\ocrossr{0}{0} -- ++\ocrossr{0}{0} -- ++\ocrossr{0}{0}} -- ++\strand{0}{0} arc(0:180:\ra) -- (-2*\ra,0) arc(180:360:\ra) -- (0,0);
		\draw[\colfund,decoration={markings, mark=at position 0.4775 with {\arrow{>}}},postaction={decorate}] { [rounded corners=5pt] \ocrossl{\xd}{0} -- ++\ocrossl{0}{0} -- ++\strand{0}{0} -- ++\ucrossr{0}{0} -- ++\strand{0}{0} -- ++\strand{0}{0} -- ++\ocrossl{0}{0} -- ++\strand{0}{0} -- ++\ocrossr{0}{0} -- ++\ocrossr{0}{0}} arc(0:180:\ra+\xd) -- (-2*\ra-\xd,0) arc(180:360:\ra+\xd) -- (\xd,0) ; 
		\draw[rounded corners=5pt,->] \ocrossr{2*\xd}{0} -- ++\ucrossr{0}{0} -- ++\strand{0}{0} -- ++\strand{0}{0} -- ++\strand{0}{0} -- ++\strand{0}{0} -- ++\strand{0}{0} -- ++\strand{0}{0} -- ++\ocrossl{0}{0} -- ++\ocrossl{0}{0};
		\draw[rounded corners=5pt,->] \strand{3*\xd}{0} -- ++\ocrossr{0}{0} -- ++\ocrossr{0}{0} -- ++\strand{0}{0} -- ++\strand{0}{0} -- ++\strand{0}{0} -- ++\strand{0}{0} -- ++\ocrossl{0}{0} -- ++\ocrossl{0}{0} -- ++\strand{0}{0};
		\draw[rounded corners=5pt,->] \strand{4*\xd}{0} -- ++\strand{0}{0} -- ++\strand{0}{0} -- ++\strand{0}{0} -- ++\ocrossr{0}{0} -- ++\ocrossl{0}{0} -- ++\strand{0}{0} -- ++\strand{0}{0} -- ++\strand{0}{0} -- ++\strand{0}{0};
	\end{tikzpicture}
	\centering
	\caption{Partially closed braid associated to the product $\Q_{12}\Q_{23}$.}
	\label{fig:Q12Q23}
\end{figure}
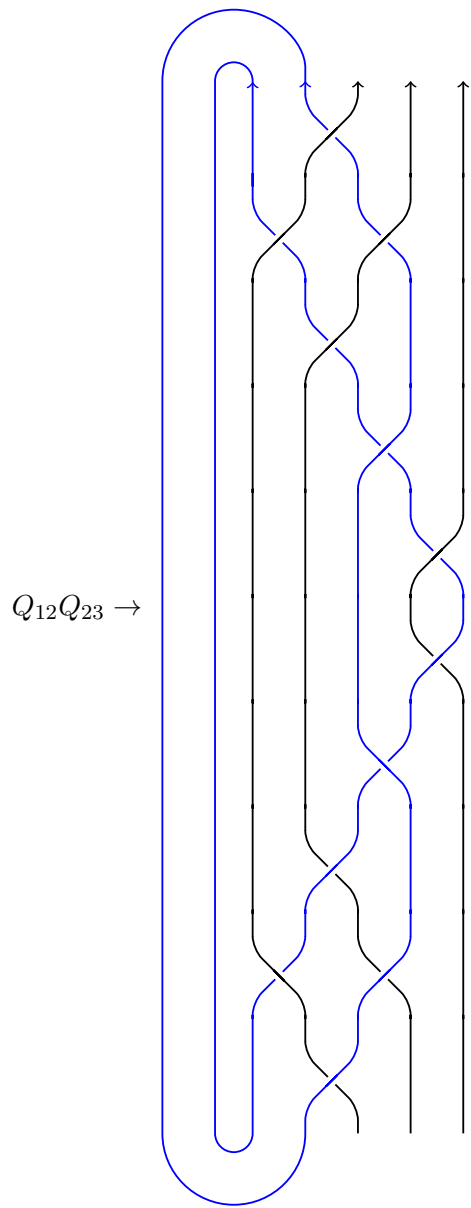  

\begin{figure}[H]
	$\Q_{2}\Q_{123} \rightarrow \ $  
	\begin{tikzpicture}[scale=1,line width=\lw,baseline={([yshift=-\eseq]current bounding box.center)}]
		\draw[\colfund,decoration={markings, mark=at position 0.51 with {\arrow{>}}},postaction={decorate}]  { [rounded corners=5pt] \strand{0}{0} -- ++\ucrossl{0}{0} -- ++\ocrossl{0}{0} -- ++\ocrossr{0}{0} -- ++\ocrossr{0}{0}} -- ++\strand{0}{0} arc(0:180:\ra) -- (-2*\ra,0) arc(180:360:\ra) -- (0,0);
		\draw[\colfund,decoration={markings, mark=at position 0.445 with {\arrow{>}}},postaction={decorate}] { [rounded corners=5pt] \ocrossl{\xd}{0} -- ++\ocrossl{0}{0}  -- ++\ocrossl{0}{0} -- ++\ocrossr{0}{0} -- ++\ocrossr{0}{0} -- ++\ocrossr{0}{0}} arc(0:180:\ra+\xd) -- (-2*\ra-\xd,0) arc(180:360:\ra+\xd) -- (\xd,0) ; 
		\draw[rounded corners=5pt,->] \ocrossr{2*\xd}{0} -- ++\ucrossr{0}{0} -- ++\strand{0}{0} -- ++\strand{0}{0} -- ++\ocrossl{0}{0} -- ++\ocrossl{0}{0};
		\draw[rounded corners=5pt,->] \strand{3*\xd}{0} -- ++\ocrossr{0}{0} -- ++\ocrossr{0}{0} -- ++\ocrossl{0}{0} -- ++\ocrossl{0}{0} -- ++\strand{0}{0};
		\draw[rounded corners=5pt,->] \strand{4*\xd}{0} -- ++\strand{0}{0} -- ++\ocrossr{0}{0} -- ++\ocrossl{0}{0} -- ++\strand{0}{0} -- ++\strand{0}{0};
	\end{tikzpicture}
	\centering
	\caption{Partially closed braid associated to the product $\Q_{2}\Q_{123}$.}
	\label{fig:Q2Q123}
\end{figure}
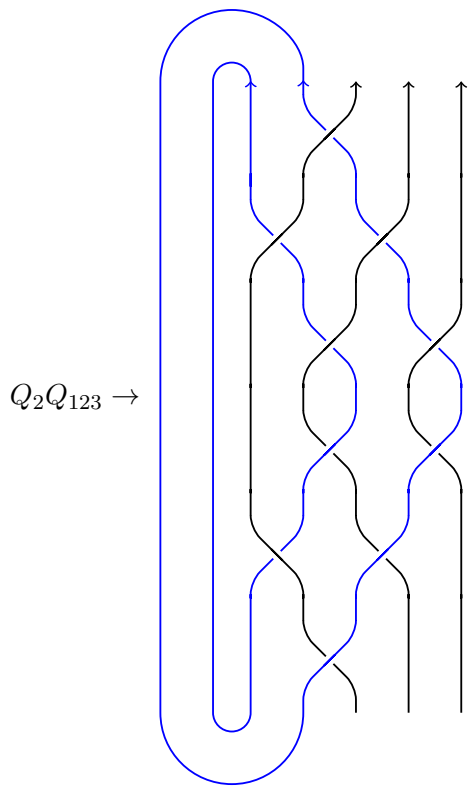

\end{document}